\newtheorem{proposition}{Proposition}
\newtheorem{theorem}{Theorem}
\theoremstyle{definition}
\newtheorem*{remark}{Remark}
\begin{document}

\begin{frontmatter}

\title{Sequential Quantiles via Hermite Series Density Estimation}
\runtitle{Sequential Quantiles via Hermite Series Density Estimation}

\begin{aug}
  \author{\fnms{Michael} \snm{Stephanou}\corref{}\ead[label=e1]{michael.stephanou@gmail.com}}
  \address{Department of Statistical Sciences, University of Cape Town, Cape Town, South Africa\\ \printead{e1}}
  \author{\fnms{Melvin} \snm{Varughese}\ead[label=e2]{melvin.varughese@gmail.com}}
  \address{Department of Statistical Sciences, University of Cape Town, Cape Town, South Africa,\\ IBM Research, Johanneburg, South Africa\\ \printead{e2}} 
  \and
  \author{\fnms{Iain} \snm{Macdonald}\ead[label=e3]{iain.macdonald@uct.ac.za}}
  \address{Department of Actuarial science, University of Cape Town, Cape Town, South Africa,\\ 
           \printead{e3}}

\runauthor{M. Stephanou et al.}

\end{aug}

\begin{abstract}
Sequential quantile estimation refers to incorporating observations into quantile estimates in an incremental fashion thus furnishing an online estimate of one or more quantiles at any given point in time. Sequential quantile estimation is also known as online quantile estimation. This area is relevant to the analysis of data streams and to the one-pass analysis of massive data sets. Applications include network traffic and latency analysis, real time fraud detection and high frequency trading. We introduce new techniques for online quantile estimation based on Hermite series estimators in the settings of static quantile estimation and dynamic quantile estimation. In the static quantile estimation setting we apply the existing Gauss-Hermite expansion in a novel manner. In particular, we exploit the fact that Gauss-Hermite coefficients can be updated in a sequential manner. To treat dynamic quantile estimation we introduce a novel expansion with an exponentially weighted estimator for the Gauss-Hermite coefficients which we term the Exponentially Weighted Gauss-Hermite (EWGH) expansion. These algorithms go beyond existing sequential quantile estimation algorithms in that they allow arbitrary quantiles (as opposed to pre-specified quantiles) to be estimated at any point in time. In doing so we provide a solution to online distribution function and online quantile function estimation on data streams. In particular we derive an analytical expression for the CDF and prove consistency results for the CDF under certain conditions. In addition we analyse the associated quantile estimator. Simulation studies and tests on real data reveal the Gauss-Hermite based algorithms to be competitive with a leading existing algorithm.
\end{abstract}


\begin{keyword}
\kwd{Sequential quantile estimation}
\kwd{online quantile estimation}
\kwd{sequential distribution function estimation}
\kwd{online distribution function estimation}
\end{keyword}



\end{frontmatter}

\section{Introduction}

Algorithms for elucidating the statistical properties of streams of data in real time and for the efficient one-pass analysis of massive data sets are becoming increasingly pertinent. These data are being generated by a number of sources including the global financial markets, internet applications, sensors embedded in various devices and data-intensive scientific research endeavours such as the Large Hadron Collider and the Square Kilometre Array (see \cite{bigdata} for a survey of the field of Big Data). Ideally such algorithms should be able to process observations sequentially, without requiring the storage of all observations. In addition, the time taken to process each observation should not grow with the number of previous observations. Certain statistical properties naturally lend themselves to efficient, sequential calculation such as the mean and standard deviation. Depending on the application, these moments may not be sufficient however. This may be true for skewed data for example. In addition, if the data stream being analysed is prone to outliers then more robust statistics may be required. Quantiles are a natural choice in these settings. Examples of areas in which sequential quantile estimation is relevant include network traffic and latency analysis \cite{network}, real time fraud detection \cite{fraud} and high frequency trading (see \cite{hft} for an introduction to sequential algorithms in high frequency trading). Other conceivable applications of sequential quantile estimation include real time detection of anomalies and flagging of noteworthy observations, real time outlier detection and removal, real time provisioning for future demand and load balancing as well as real time risk estimation.\\

In many applications of interest one needs to determine whether a particular value or observation is greater than or less than a certain quantile. In such cases it is more direct to use the cumulative distribution function. Thus, closely linked to quantile estimation is distribution function estimation. In this article we propose new distribution function and quantile estimators based on Hermite series expansions and study their properties. These results are novel and interesting in their own right. That said, the particular setting we consider is that of online estimation and thus existing non-parametric methods are weighed against our methods in this specific context. In the general context, there are of course a number of well established non-parametric distribution function and quantile estimators. The most obvious of these being the sample cumulative distribution function also known as the empirical distribution function. Let  $\mathbf{x}_{i} \sim f(x)$ be i.i.d. random variables drawn from $f(x)$ with cumulative distribution function $F(x)$.\\

The empirical distribution function (EDF) is defined as:

$$\hat{F}_{1}(x)= \frac{1}{n} \sum_{i=1}^{n}  \mathbf{1} \{\mathbf{x}_{i} \leq x  \}.$$

This estimator is consistent in the sense that it converges point-wise to the true CDF.  In fact, it converges uniformly over $x$ (Glivenko-Cantelli theorem). In addition $\hat{F}_{1}(x)$ has an asymptotically normal distribution with the standard $\sqrt{n}$ rate of convergence. The EDF estimator is unbiased and its mean squared error is $\mbox{MSE}(\hat{F}_{1}(x)) = \frac{F(x)(1-F(x))}{n}$. The EDF estimator has however been shown to be asymptotically inferior (asymptotically deficient) compared to the kernel distribution function estimator with an appropriately chosen kernel type (in the MSE sense, see \cite{reiss1981nonparametric} for a precise definition of asymptotic deficiency). In fact the relative deficiency quickly tends to infinity as the sample size increases \cite{reiss1981nonparametric}.  The kernel distribution function estimator is defined as:

$$\hat{F}_{2}(x)= \frac{1}{n} \sum_{i=1}^{n} \int_{-\infty}^{x} \frac{1}{h_{n}}K\left(\frac{\mathbf{x}_{i} -x'}{h_{n}} \right) dx'.$$

where the kernel function $K(x)$ is a non-negative function that integrates to one and has mean zero and the bandwidth, $h_{n}>0$, is a smoothing parameter. $\hat{F}_{2}(x)$ is  asymptotically normally distributed. The almost sure uniform convergence of $\hat{F}_{2}(x)$ has also been proved (see \cite{reiss1981nonparametric}). \\

Closely related to the EDF, the sample quantile is a popular nonparametric estimator of the corresponding population quantile. Define the values \\ $\mathbf{x}_{(1)}, \dots, \mathbf{x}_{(n)}$ to be a permutation of $\mathbf{x}_{1}, \dots, \mathbf{x}_{n}$ such that $\mathbf{x}_{(1)} \leq \mathbf{x}_{(2)} \leq \dots \leq \mathbf{x}_{(n)}$. Here, $\mathbf{x}_{(i)}$, is known as the $i$th order statistic. The EDF can be written in terms of the order statistics as:

$$\hat{F}_{3}(x)= \frac{1}{n} \sum_{i=1}^{n}  \mathbf{1} \{\mathbf{x}_{(i)} \leq x  \}.$$

The inverse cumulative distribution function or quantile function is defined as:
\begin{equation}
	q(p) = \text{inf} \{x : F(x) \geq p\}.
\end{equation}\label{quantileDef}

The p-th quantile can be estimated from the order statistics. In particular if $p \in \left( \frac{i-1}{n}, \frac{i}{n}\right]$ then $\hat{q}(p)=\mathbf{x}_{(i)}$, the $i$th order statistic. The sample quantile estimator is a function of at most two order statistics and thus may suffer a loss of efficiency for certain distributions. A natural way to improve efficiency is to form a weighted average of several order statistics under an appropriate weight function. Such estimators are called L-estimators. The most popular class of L-estimators uses a density function (kernel) as its weight function, these are known as kernel quantile estimators (see \cite{sheather1990kernel}). The kernel quantile estimator is defined as:

$$\hat{q}(p) = \sum_{i=1}^{n} \int_{\frac{i-1}{n}}^{\frac{i}{n}}  \frac{1}{h_{n}} K\left(\frac{s - p}{h_{n}} \right)  \mathbf{x}_{(i)} ds$$

It has been shown that under suitable conditions on $F(x)$ and $h_{n}$ the kernel quantile estimator is more efficient than the sample quantile estimator in the MSE sense \cite{falk1984relative}. Under comparable assumptions to those to prove joint asymptotic normality of a set of empirical quantiles, joint asymptotic normality of the kernel quantile estimator has also been proved \cite{falk1985asymptotic}. The rate is also provided. In  \cite{sheather1990kernel} the asymptotically optimal bandwidth for the kernel quantile estimator is derived and the corresponding MSE is provided. It is also shown that many different variants of the kernel quantile estimator are asymptotically equivalent in the MSE sense. In addition other L-estimators such as Harell-Davis, Kaigh-Lachenbruch and the Brewer estimators are shown to be asymptotically equivalent to the kernel quantile estimator above with a gaussian kernel and certain smoothing parameters. A more modern approach to estimating quantiles based on the Bernstein-Durrmeyer operator is provided in \cite{pepelyshev2014estimation}. Like other L-estimators, the BD estimator constitutes a weighted version of several order statistics. MSE and MISE consistency results are also provided.\\

In the context of sequential distribution function and quantile estimation, the aforementioned estimators have shortcomings. Both the EDF and kernel distribution function estimator only allow sequential estimation of the cumulative probability at a set of fixed $x$ values (see chapters 4 and 5 of \cite{greblicki2008nonparametric} and chapter 7 of \cite{devroye1985nonparametric} for a discussion of recursive kernel estimators). For quantile estimation, both the sample quantile estimator and L-estimators such as the kernel quantile estimator and the Bernstein-Durrmeyer estimator require the storage and updating of one or more order statistics (a sorted sequence of all observations seen so far). Updating the order statistics cannot in general be done in $O(1)$ time. Moreover, the addition of a new observation would in general change a number of order statistics. Finally, in the context of sequential quantile estimation on streaming data, non-stationarity cannot be naturally addressed since these estimators have no means of \textit{forgetting} previous observations (other than windowing and resetting). \\

Approaches specifically to sequential estimator of quantiles have been developed. Sequential quantile estimation algorithms can be differentiated on whether they seek to maintain an online estimate of a single quantile or multiple quantiles. They can be further differentiated on whether they are meant to estimate static quantiles of a stream of data or dynamic quantiles of a stream of data. In the case of static quantile estimation, online quantile estimates pertain to all the data observed so far and quantiles of the stream being analysed are assumed to be fixed. In the dynamic case, quantile estimates pertain to the current behaviour of the process and it is assumed that quantiles may vary over time. A number of algorithms have been proposed for sequential quantile estimation in these settings. In \cite{P2} the P$^{2}$ algorithm was proposed which utilizes parabolic interpolation in order to estimate a particular quantile. In \cite{extP2First} and \cite{extP2Second} this algorithm was extended to the simultaneous estimation of several quantiles. In \cite{P2ewma} the P$^{2}$ algorithm was further extended to treat dynamic quantile estimation via exponentially weighted quantile estimators. Algorithms have also been proposed based on stochastic approximation \cite{sa1}, \cite{sa2}. This approach was extended to dynamic quantile estimation in \cite{ewsa} by the introduction of Exponentially Weighted Stochastic Approximation (EWSA). These existing methods have a major shortcoming however, online estimates can only be obtained for a particular, pre-selected set of quantiles (e.g. $p=0.5, 0.9, 0.99$ etc.).\\

In this article we propose new techniques based on Hermite series estimators to maintain an online estimate of the CDF and the full quantile function in both the static and dynamic settings and thus yield estimates of the cumulative probability at \textit{arbitrary} $x$ and estimates of \textit{arbitrary} quantiles that can be updated in constant time ($O(1)$ time). This is the primary advantage of our suggested approach. Even if our proposed techniques only have comparable accuracy with existing techniques, they would still be valuable. We demonstrate using simulated and real data, that our techniques may in fact be more accurate.\\

We begin by reviewing some background related to Hermite polynomials and the Gauss-Hermite expansion in section \ref{background}. In section \ref{GHQuantiles} we introduce a Gauss-Hermite based estimator for the cumulative distribution function and discuss a numerical means of obtaining arbitrary quantiles. We then set about applying this to sequential quantile estimation in the settings of static and dynamic quantile estimation. In this article, we make our treatment of static and dynamic quantile estimation concrete by considering the cases of independent identically distributed (i.i.d.) data streams and non-identically distributed independent data streams respectively. Observations are continuous random variables that are revealed sequentially (i.e. one at a time). The basic algorithm for the static case is presented in section \ref{GHOnlineStat}.  We then proceed to treat the dynamic case by introducing an exponentially weighted moving average estimator for the Gauss-Hermite coefficients in section \ref{EWGH}. In section \ref{theoreticalquality} we investigate the quality of the Gauss-Hermite CDF and quantile estimators theoretically. We compare the proposed techniques to a leading existing algorithm for both simulated data (in section \ref{simresults}) and real data (in section \ref{realresults}). Finally, we conclude in section \ref{conclusion}. The practical consideration of standardising the observations from the data stream being analysed is treated in appendix \ref{standardization}. Useful MISE results for the exponentially weighted Gauss-Hermite expansion are derived in appendix \ref{theoryEWGH}.

\section{Background} \label{background}

\subsection{Hermite Polynomials}

In this section we introduce the Hermite polynomials which will play a central role in the construction of orthogonal series estimators for probability density functions. 
The Hermite polynomials are a classical orthogonal polynomial sequence. Following standard notation \cite{szeg1939orthogonal} we define the Hermite polynomials:

$$H_k(x)=(-1)^k e^{x^2}\frac{d^k}{dx^k}e^{-x^2}$$

which are orthogonal under the weight function $e^{-x^{2}}$ i.e. 

\begin{equation}
	\int_{-\infty}^{\infty} e^{-x^{2}} H_k(x) H_l(x) dx= \sqrt{\pi} 2^{k} k! \delta_{kl}. \label{orthogHermite}
\end{equation}

The following explicit expression may also be utilised:

\begin{equation}
	H_k(x) = k! \sum_{m=0}^{\lfloor k/2 \rfloor} \frac{(-1)^m}{k!(k - 2m)!} (2x)^{k - 2m}. \label{hermitedef2}
\end{equation}

Finally, some useful inequalities for $H_k(x)$ are as follows \cite{szeg1939orthogonal} (used in \cite{convergence2} for example):

\begin{equation}\label{hermiteInequal1}
	(2^{k} k! \sqrt{\pi})^{-1/2} \left| H_k(x))\right| e^{\frac{-x^{2}}{2}} \leq c_{a} (k+1)^{-1/4}, \, |x| \leq a
\end{equation}

for some constant $c_{a}$ and non-negative $a$.

\begin{equation}\label{hermiteInequal2}
	(2^{k} k! \sqrt{\pi})^{-1/2} \left| x^{-1/3} H_k(x))\right| e^{\frac{-x^{2}}{2}} \leq d_{a} (k+1)^{-1/4}, \, |x| \geq a
\end{equation}

for some constant $d_{a}$ and positive $a$.\\

In the next section we establish the link between expansions in Hermite polynomials and nonparametric density estimation.

\subsection{Gauss-Hermite Expansion} \label{GHBackground}

A number of probability distribution expansions have been defined in terms of the Hermite polynomials. These include the Gram Charlier A series, the Edgeworth series and the Gauss-Hermite expansion. These expansions have been applied to successfully fit astrophysical data that are nearly Gaussian but with small, meaningful deviations for example \cite{astro}. In this research we focus on one expansion, termed the Gauss-Hermite expansion following the terminology of \cite{astro}. This expansion has good convergence properties in practice and is robust to outliers \cite{gaussHermiteRobust}. We define this expansion below:

If $f(x) \in L_{2}$ i.e. $\int_{-\infty}^{\infty} f^{2}(x) dx < \infty$ then $f(x)$ can be expressed as follows: 

\begin{equation}
	f(x) = \sum_{k=0}^{\infty} a_k H_{k} (x) Z(x), \label{GHDefinition}
\end{equation}

where $Z(x) = \frac{1}{\sqrt{2\pi}} e^{-x^{2}/2}$ is the standard normal probability density function and 

\begin{equation}
	a_k = \alpha_{k} \int_{-\infty}^{\infty} Z(x) f(x) H_{k} (x) dx, \label{GHExactCoeff}
\end{equation}

where $\alpha_{k} = \frac{\sqrt{\pi}}{2^{k-1}k!}$. In what follows, we refer to (\ref{GHDefinition}) along with the coefficients (\ref{GHExactCoeff}) as the Gauss-Hermite expansion. The fact that $f(x)$ can be expanded in this manner is a consequence of the fact that the normalised Hermite functions: 
$$h_{k}=\left(2^{k}k!\sqrt{\pi}\right)^{-\frac{1}{2}} e^{-\frac{x^2}{2}} H_{k}(x),$$ 

are an orthonormal basis for $L_{2}$ (the space of square integrable functions). The Gauss-Hermite expansion is in fact entirely equivalent to the expansion:

\begin{equation}
	f(x) = \sum_{k=0}^{\infty} \tilde{a}_k h_{k}(x),\label{HermiteSeriesEstimator}
\end{equation}

$$ \tilde{a}_k =\sqrt{\frac{1}{\alpha_{k}}}a_k .$$

which is the expansion used to define what is termed the Hermite series estimator in \cite{convergence1}, \cite{convergence2}. We therefore use the descriptions interchangeably. The usefulness of the Gauss-Hermite form of the expansion (\ref{HermiteSeriesEstimator}) is that it makes explicit the role of the normal distribution. Indeed it is explicit that one would expect near Gaussian distributions to be well represented with just a few coefficients.\\

The $N+1$ term truncated Gauss-Hermite expansion is defined as:

\begin{equation}
	f_{N}(x) = \sum_{k=0}^{N} a_k H_{k} (x) Z(x). \label{GHTruncDefinition}
\end{equation}

It is noteworthy that the coefficients  (\ref{GHExactCoeff}) are such that the $L_{2}$ distance between $f(x)$ and $f_{N}(x)$ is minimised i.e. no other choice of $a_k$ would lead to a better approximation of $f(x)$ in the $L_{2}$ distance sense. This follows from the fact that $f(x) \in L_{2}$ and the fact that the first $N+1$ Hermite functions constitute an orthonormal basis for a $N+1$ dimensional subspace of $L_{2}$. See \cite{orthogbook} for a succinct proof using these facts. At this point, $f(x)$ is an arbitrary function in $L_{2}$, it need not be a probability density function. For the purposes of density estimation however we will regard $f(x)$ as a probability density function (a non-negative function that integrates to one) that is also in $L_{2}$.\\

\subsubsection{Truncated Gauss-Hermite Expansions and Nonparametric density estimation} \label{miseSection}

In the context of nonparametric density estimation, a natural estimator for the $a_k$ coefficients is (following from equation (\ref{GHExactCoeff}) and the law of large numbers):

\begin{equation}
	\hat{a}_k =  \frac{\alpha_{k}}{n} \sum_{i=1}^{n} Z(\mathbf{x}_{i}) H_{k} (\mathbf{x}_{i}), \quad   \mathbf{x}_{i} \sim f(x), \label{coeffEst}
\end{equation}

i.e. the $\mathbf{x}_{i}$'s are observations from the probability distribution that we wish to estimate non-parametrically. \\

The $N+1$ term truncated Gauss-Hermite expansion with estimated coefficients is thus:

\begin{equation}
	\hat{f}_{N}(x) = \sum_{k=0}^{N} \hat{a}_k H_{k} (x) Z(x). \label{estTruncGH}
\end{equation}

Note that this is a biased estimator of $f(x)$. One common measure of the quality of the estimate of an unknown probability distribution is the mean integrated squared error (MISE). Let the true probability density function $f(x) \in L_2$. The mean integrated squared error associated with the truncated Gauss-Hermite expansion of $f(x)$ is:

\begin{equation} \label{MISE}
	E \int_{-\infty}^{\infty} (\hat{f}_{N}(x) - f(x) )^{2} dx = E\left[\sum_{k=0}^{N} \frac{2^{k-1}k!}{\sqrt{\pi}} (\hat{a}_k - a_k)^{2}\right] + \sum_{k=N+1}^{\infty} \frac{2^{k-1}k!}{\sqrt{\pi}} a_{k}^{2} \nonumber\\
\end{equation}

where we have made use of Parseval's identity, $\sum_{k=0}^{\infty} \frac{1}{\alpha_k^{2}} a_{k}^{2} = \int_{-\infty}^{\infty} f^{2}(x)dx $ and the definition of $a_{k}$ (\ref{GHExactCoeff}).   

The first term is associated with the error due to using estimates of the coefficients, $\hat{a}_k$ instead of the true, unknown, coefficients $a_k$. This is the integrated variance term. The second term of the MISE is associated with the error due to truncation. This is the integrated squared bias term. In \cite{convergence2} the MISE consistency was proved under certain conditions. \\

A few further comments are in order.  In practice, the Gauss-Hermite expansion provides a good fit to a wide variety of probability density functions \cite{gaussHermiteRobust}. For completeness however, the following shortcomings should be noted as they may be important depending on the application of the Gauss-Hermite estimate of the density. In principle, for the truncated series, the probability density function that results may be negative at certain values of $x$. Also, truncated Gauss-Hermite expansions should capture nearly Gaussian distributions well in a relatively small number of coefficients. However, for distributions that differ greatly from the Gaussian distribution, a large number of coefficients may be required for a satisfactory fit, even if convergence is guaranteed in principle.\\

In the next section we define an estimator for the cumulative distribution function associated with $f(x)$ using the Gauss-Hermite expansion and discuss numerical means of obtaining quantiles.

\section{Estimating Quantiles using the Gauss-Hermite Expansion} \label{GHQuantiles}

\subsection{Cumulative Distribution Function}

In this section we derive an analytical expression for a Gauss-Hermite based cumulative distribution function estimator. To the best of our knowledge, this is the first such analytical derivation of a cumulative distribution function estimator based on the Gauss-Hermite expansion (i.e. based on Hermite series estimators). We utilise this expression to numerically obtain quantiles. We have discussed a number of well-established results on smooth CDF estimators based on other nonparametric techniques in the introduction.  \\

\noindent
Before we begin, we recall the definitions of the Gamma functions:\\

$\Gamma(a,x)$ is the upper incomplete Gamma function defined as:

$$\Gamma(a,x) = \int_x^{\infty} t^{a-1}\,e^{-t}\,{\rm d}t,$$

$\gamma(a,x)$ is the lower incomplete Gamma function defined as:

$$\gamma(a,x) = \int_0^{x} t^{a-1}\,e^{-t}\,{\rm d}t,$$ 

and $\Gamma(a)$ is the usual Gamma function defined as:

$$\Gamma(a) = \int_0^{\infty} t^{a-1}\,e^{-t}\,{\rm d}t.$$

Now, utilising (\ref{hermitedef2}) and (\ref{estTruncGH}) a natural estimator for the cumulative distribution function associated with $f(x)$ can be obtained as follows:\\

For $x < 0$:
\begin{eqnarray}
\hat{F}_{N}(x) &=& \int_{-\infty}^{x} \hat{f}_{N}(x') dx' \nonumber \\
&=& \sum_{k=0}^{N} \hat{a}_k \int_{-\infty}^{x} H_{k} (x') Z(x') dx' \nonumber\\
&=& \sum_{k=0}^{N} \hat{a}_k k! \sum_{l=0}^{\lfloor k/2 \rfloor} \frac{(-1)^l 2^{k - 2l}}{l!(k - 2l)! \sqrt{2\pi}} \int_{-\infty}^{x} (x')^{k - 2l} e^{-x'^{2}/2} dx'\nonumber\\
&=& \sum_{k=0}^{N} \hat{a}_k k! \sum_{l=0}^{\lfloor k/2 \rfloor} \frac{(-1)^l 2^{k - 2l}}{l!(k - 2l)! \sqrt{2\pi}} (-1)^{k-2l} 2^{\frac{k}{2}-l-\frac{1}{2}} \Gamma(-l+\frac{k}{2}+\frac{1}{2},\frac{x^2}{2}) \nonumber
\end{eqnarray}

For $x \geq 0$:
\begin{eqnarray}
\hat{F}_{N}(x) &=& \int_{-\infty}^{x} \hat{f}_{N}(x') dx' \nonumber \\
&=&  \int_{-\infty}^{\infty} \hat{f}_{N}(x') dx' -\int_{x}^{\infty} \hat{f}_{N}(x') dx' \nonumber \\
&=& \sum_{k=0}^{N} \hat{a}_k \left[ \int_{-\infty}^{\infty} H_{k} (x') Z(x') dx'  - \int_{x}^{\infty} H_{k} (x') Z(x') dx' \right]\nonumber\\
&=& \sum_{k=0}^{N} \hat{a}_k k! \sum_{l=0}^{\lfloor k/2 \rfloor} \frac{(-1)^l 2^{k - 2l}}{l!(k - 2l)! \sqrt{2\pi}} \left[\int_{-\infty}^{\infty} (x')^{k - 2l} e^{-x'^{2}/2} dx' \right. \nonumber\\
&-&\left. \int_{x}^{\infty} (x')^{k - 2l} e^{-x'^{2}/2} dx' \right]\nonumber\\
&=& \sum_{k=0}^{N} \hat{a}_k k! \sum_{l=0}^{\lfloor k/2 \rfloor} \frac{(-1)^l 2^{k - 2l}}{l!(k - 2l)! \sqrt{2\pi}}  2^{\frac{k}{2}-l-\frac{1}{2}} \times \nonumber\\
&\times&\left[ \left[(-1)^{k-2l}+1\right]\Gamma(-l+\frac{k}{2}+\frac{1}{2}) - \Gamma(-l+\frac{k}{2}+\frac{1}{2},\frac{x^2}{2}) \right]\nonumber\\
&=&\sum_{k=0}^{N} \hat{a}_k k! \sum_{l=0}^{\lfloor k/2 \rfloor} \frac{(-1)^l 2^{k - 2l}}{l!(k - 2l)! \sqrt{2\pi}}  2^{\frac{k}{2}-l-\frac{1}{2}} \times \nonumber\\
&\times&\left[ \left[(-1)^{k-2l}\right]\Gamma(-l+\frac{k}{2}+\frac{1}{2}) + \gamma(-l+\frac{k}{2}+\frac{1}{2},\frac{x^2}{2}) \right]\nonumber
\end{eqnarray}

Thus,

\begin{equation}\label{cumulativedist}
\hat{F}_{N}(x)\!=\!\begin{cases}
	 \sum_{k=0}^{N} \hat{a}_k k! \sum_{l=0}^{\lfloor k/2 \rfloor} \frac{(-1)^{l} 2^{\frac{3k}{2} - 3l -1} \left[ \left[(-1)^{k-2l}\right]\Gamma(-l+\frac{k}{2}+\frac{1}{2}) + \gamma(-l+\frac{k}{2}+\frac{1}{2},\frac{x^2}{2}) \right]}{l!(k - 2l)!\sqrt{\pi}}\! & \\ \quad \mbox{if } x \geq 0, \\
	\sum_{k=0}^{N} \hat{a}_k k! \sum_{l=0}^{\lfloor k/2 \rfloor} \frac{(-1)^{-l+k} 2^{\frac{3k}{2} - 3l -1} \Gamma(-l+\frac{k}{2}+\frac{1}{2},\frac{x^{2}}{2})}{l!(k - 2l)!\sqrt{\pi}}& \\ \quad \mbox{if } x < 0.
	\end{cases}
\end{equation}

The expression (\ref{cumulativedist})  allows us to directly estimate the cumulative distribution function without the need to numerically integrate the estimated probability density function (\ref{estTruncGH}). \\

An alternative estimator for the cumulative distribution function is as follows:

\begin{equation}\label{cumulativedistalt}
\hat{F}_{N}(x) =\begin{cases}
		1-\sum_{k=0}^{N} \hat{a}_k k! \sum_{l=0}^{\lfloor k/2 \rfloor} \frac{(-1)^{l} 2^{\frac{3k}{2} - 3l -1} \Gamma(-l+\frac{k}{2}+\frac{1}{2},\frac{x^{2}}{2})}{l!(k - 2l)!\sqrt{\pi}} \mbox{ if } x \geq 0 \\
		\sum_{k=0}^{N} \hat{a}_k k! \sum_{l=0}^{\lfloor k/2 \rfloor} \frac{(-1)^{-l+k} 2^{\frac{3k}{2} - 3l -1} \Gamma(-l+\frac{k}{2}+\frac{1}{2},\frac{x^{2}}{2})}{l!(k - 2l)!\sqrt{\pi}} \,\, \mbox{ if } x < 0
	\end{cases}
\end{equation}

This expression is derived in the same manner as (\ref{cumulativedist}) except that the integral $\int_{-\infty}^{\infty} \hat{f}_{N}(x') dx'$ is replaced with unity. We have found that empirically this estimator yields more accurate results. This may depend on the situation however.\\

\subsection{Inverse Cumulative Distribution Function} \label{quantilefnc}

The inverse cumulative distribution function or quantile function is defined in equation \eqref{quantileDef}. We can utilise (\ref{cumulativedist}) (or (\ref{cumulativedistalt})) along with a numerical root-finding algorithm to determine the value of the p-th quantile, $x_{p} = q(p)$,  $0<p<1$. Newton's method can be applied for example. In this case, the following equation is iteratively evaluated until sufficient accuracy is achieved:

\begin{equation}
	\hat{x}_{p}^{(i+1)} = \hat{x}_{p}^{(i)} - \frac{\hat{F}_{N}(\hat{x}_{p}^{(i)})-p}{\hat{f}_{N}(\hat{x}_{p}^{(i)})}, \label{Newton}
\end{equation}

where $\hat{f}_{N}(x)$ is given in (\ref{estTruncGH}) and $\hat{F}_{N}(x)$ is given in (\ref{cumulativedist}) (or (\ref{cumulativedistalt})). Naturally, convergence behaviour will depend on the choice of initial value, $\hat{x}_{p}^{(0)}$, and the properties of $\hat{f}_{N}(x)$. In principle convergence may be slow, or the method may not converge at all. If techniques such as Newton's method and related methods prove unstable in a given setting, more robust numerical root-finding algorithms can be applied. The best choice of root-finding algorithm and optimal initial value selection are areas for future research. \\

It is important to note that in many cases of interest, the quantile itself is not required but rather it is necessary to determine whether an observation is above or below a particular quantile (consider outlier detection for example). In this case, no root finding is required. One simply plugs the observation into the cumulative distribution function and determines whether the cumulative probability is less than or greater than $p$. 

\section{Online Quantile Estimation: Static Quantiles} \label{GHOnlineStat}

The problem we treat in this section involves estimating an unknown cumulative distribution function (and associated inverse cumulative distribution function) from a stream of independent and identically-distributed (i.i.d.) continuous random variable data. The Gauss-Hermite expansion furnishes an efficient means to achieve this. The primary reason for this is that the coefficients in the expansion can be updated with each new observation without recalculating the entire sum in (\ref{coeffEst}). We can just incorporate a new term corresponding to the new observation and maintain a running average for each coefficient. Moreover, we can simply plug in these updated coefficients into the analytical expression for the cumulative distribution function we have derived (\ref{cumulativedist}) (or (\ref{cumulativedistalt})). Any quantile can then be obtained by a simple numerical root finding procedure (section \ref{quantilefnc}). \\

The basic algorithm can be summarised as follows:

\begin{enumerate}
	\item Initialize $N+1$ coefficients as follows $\hat{a}_k^{(0)} = \alpha_{k} Z(\mathbf{x}_{0}) H_{k} (\mathbf{x}_{0})$, where $\mathbf{x}_{0}$ is the first observation from the data stream and $k=0 \dots N$.
	\item For each new observation, $\mathbf{x}_{i}$, update $\hat{a}_0, \dots \hat{a}_N$ as follows:
	\begin{equation}
		\hat{a}_k^{(i)} = \frac{1}{i} \left[(i-1)\hat{a}_k^{(i-1)} + \alpha_{k} Z(\mathbf{x}_{i}) H_{k} (\mathbf{x}_{i})\right], \,  k = 0, \dots, N. \label{movingAvg}
	\end{equation}
	\item Plug the updated coefficients $\hat{a}_0, \dots \hat{a}_N$ into the expressions (\ref{estTruncGH}) and (\ref{cumulativedist}) (or (\ref{cumulativedistalt})) to obtain updated estimates of the probability density function and cumulative distribution function respectively.
	\item Utilise a numerical root finding algorithm, along with the updated cumulative distribution function (and potentially the updated probability density function) to determine any arbitrary quantile.
\end{enumerate}

The above algorithm can also be applied to summarise the distribution of massive datasets in an efficient, one-pass manner which should be particularly useful when the size of the dataset is larger than the available memory. \\

The computational cost of \textit{updating} each of the coefficients (\ref{coeffEst}) is manifestly constant ($O(1)$) and does not depend on the number of previous observations. Also, since the cumulative distribution function only depends on the coefficients and has no explicit dependence on the observations, the time complexity of \textit{updating} the cumulative distribution function following the arrival of a new observation is also $O(1)$. Similarly, the computational cost of the numerical root finding algorithm yielding any quantile of interest from the updated cumulative distribution does not depend on the number of previous observations. This is to be contrasted with deterministic  approaches to obtaining quantiles such as efficient heap based median maintenance which has a time complexity $O(\mbox{log i})$ at the i-th observation and has growing space requirements.  \\

While the updating procedure for the coefficients is fully sequential, it is clear that since we use a fixed and constant $N$ the resultant Gauss-Hermite estimate of the probability density function is biased and thus, the resultant CDF and quantile estimates will in general be biased too. Thus our quantile estimator is sequential but biased. This bias does not prevent the estimator from being useful however.\\

Data streams that have static quantiles are likely to be less prevalent than those that have dynamic quantiles (quantiles that vary over time). Indeed, many real-world data streams of interest exhibit non-stationarity. In section \ref{EWGH} we consider how the proposed algorithm can be modified to treat quantile estimation in the more realistic, dynamic setting.

\subsection{Selection of $N$} \label{paramselect1}

It is natural to assume that the quality of the CDF and quantile estimates is related to the MISE of $\hat{f}_{N}(x)$. Indeed, we demonstrate in section \ref{theoreticalquality} that the MISE of $\hat{f}_{N}(x)$ directly determines bounds on the MSE of the CDF and the MAE of the resultant quantile estimates for distributions with non-negative support under certain conditions.  When viewed in the context of the MISE of $\hat{f}_{N}(x)$, the choice of $N$ controls the trade-off between the integrated variance and integrated squared bias of the estimate of $f(x)$. For the Hermite series estimators, the integrated variance term vanishes as $n \to \infty$ for fixed $N$ (under certain conditions on $f(x)$, see \cite{convergence2}). This leaves the contribution from the integrated bias term. The higher the value of $N$, the smaller the integrated bias. Thus in the setting of streaming data or one pass analysis of a massive data set, where we regard $n \to \infty$, $N$ would naively be made as large as possible to minimise the bias and hence the MISE (recall that the MISE controls the quality of the CDF and quantile estimates, see section \ref{qualityQuantile}). It is worth noting however that memory requirements and processing time increase with N since more coefficients have to be stored and updated. In addition, early quantile estimates could be poor for large N. If the intended application is sensitive to poor early quantile estimates, a small sample from the data stream or massive data set can be analysed in order to select $N$. While it is clear that in general the optimal N is different for the PDF, CDF and quantile estimates, our results suggest that it is a reasonable starting point to attempt to minimise the MISE of $\hat{f}_{N}(x)$. Principled techniques exist to select the (MISE) optimal $N$ such as the Kronmal-Tarter optimal stopping rule algorithm \cite{kronmal1968estimation}, \cite{ott1976some}. This algorithm must be applied with care though as it may perform poorly if $f(x)$ is multimodal or peaked as pointed out in \cite{diggle1986selection}. The reader is referred to \cite{diggle1986selection} and \cite{hart1985data} for improvements to the algorithm. Data driven selection of $N$ specific to the CDF and quantile estimates is an area of future research. In our simulation studies we demonstrate that above a minimum size of N, the effectiveness of the algorithm is in fact not critically dependent on the choice of N. Good results can be obtained for a range of values of N. Through extensive empirical studies we have determined that a value of N = 6 yields good results for data with a unimodal distribution for example.
 
\section{Online Quantile Estimation: Dynamic Quantiles} \label{EWGH}

The problem we treat in this section involves obtaining a local estimate of an unknown cumulative distribution function (and associated inverse cumulative distribution function) from a stream of continuous random variable data with dynamic quantiles. In order to do this we replace the Gauss-Hermite coefficient estimator defined in (\ref{coeffEst}) with an exponentially weighted moving average estimator for the coefficients. We will term the resulting expansion an exponentially weighted Gauss-Hermite expansion (EWGH expansion). The new estimator for the coefficients is given by:

\begin{eqnarray}
	\hat{a}^{(i)}_k &=& \lambda \left[\alpha_{k} Z(\mathbf{x}_{i}) H_{k} (\mathbf{x}_{i})\right] + (1-\lambda)\hat{a}^{(i-1)}_k, \nonumber \\ 
\hat{a}^{(0)}_k &=& \left[\alpha_{k} Z(\mathbf{x}_{0}) H_{k} (\mathbf{x}_{0})\right], \,  k = 0, \dots, N, \label{GHCoeffEWMA}
\end{eqnarray}

where $0<\lambda \leq 1$ controls the weight of new observations (and controls how rapidly the weightings of older observations decrease). This weighting scheme allows the local behaviour of the data stream to be tracked. The algorithm for obtaining arbitrary quantiles presented in the previous section is essentially unchanged except that we replace (\ref{movingAvg}) with (\ref{GHCoeffEWMA}). To re-iterate, the updated coefficients can then be plugged into into the expressions (\ref{estTruncGH}) and (\ref{cumulativedist}) (or (\ref{cumulativedistalt})) to obtain updated estimates of the probability density function and cumulative distribution function respectively. A numerical root finding procedure can again be applied to obtain arbitrary quantiles.\\

\subsection{Selection of the Parameters $\lambda$ and $N$}

For the choice of $N$, the same broad considerations apply as in section \ref{paramselect1} i.e. the more complex the probability distribution of the data being analyzed, the higher the appropriate $N$ to ensure a sufficiently low bias. In our simulation studies we have observed that a value of $N=6$ gives competitive performance for all choices in the set $\lambda=0.01, 0.05, 0.1$, for unimodal distributions.

Given a choice of $N$, there are two factors to consider when selecting $\lambda$. The first is how quickly the quantiles of the non-stationary data stream are expected to vary. We expect that the optimal $\lambda$ will be smaller for slowly varying quantiles and larger for rapidly varying quantiles. By selecting $\lambda$, one is essentially selecting an effective window size of previously observed data to include in the quantile estimation. This follows from the fact that more recent data is weighted more heavily than older data. Consider the fraction of the weight included in the most recent $r$ terms.

$$\lambda \sum_{j=0}^{r-1} (1-\lambda)^{j} = 1-(1-\lambda)^{r}.$$

This is to be contrasted with the weight of the first (oldest) term:

$$(1-\lambda)^{r}$$.

If we define the effective window size as that number of observations for which $99.9\%$ of the weight is contained in the most recent $r$ terms or equivalently that the remaining $0.1 \%$ of the weight is associated with the first (oldest) term, then we see that the effective window size is:

$$r = \frac{\log{0.001} }{\log{(1-\lambda)}}$$

We include below a tabulation of some commonly used values of $\lambda$ in EWMA applications and their associated effective window size in number of observations.

\begin{center}
		\begin{tabular}{ l | c}
		    \hline
		    $\lambda$ & Effective Window Size  \\ \hline
			0.01 & 687   \\ 
			0.05 & 135 \\ 
			0.1 &  66  \\
			0.2 & 31\\
		    \hline
		\end{tabular}	
\end{center}

The ideal effective window size should be selected by judgement and domain specific knowledge of the data being analysed. For example, when analysing high frequency forex return data, one may expect that the most recent observations are the most pertinent and only the previous few minutes of observations would be relevant to estimating the current, local behaviour of the process. The second factor to consider is that $\lambda$ cannot be too large. As we demonstrate in section \ref{theoreticalquality}, the MISE of $\hat{f}_N(x)$ determines the quality of both CDF and quantile estimates under certain conditions. Using the bound on the MISE that we derive in theorem \ref{theoremEWGHMISE1} we see that we can achieve a small integrated variance term by ensuring: 

$$\lambda N^{1/2}$$ 

is sufficiently small. As a starting point, we have found through extensive empirical studies that the common choices of $\lambda=0.01, 0.05, 0.1$ yield good performance of the algorithm in a number of scenarios. Indeed, in our simulation studies we demonstrate that the EWGH algorithm provides good results over this range of values for $\lambda$. \\

\section{Quality of CDF and Quantile estimates}\label{theoreticalquality}

In this section we derive error bounds on the mean squared error of the Gauss-Hermite CDF estimator and the mean absolute error (MAE) of the quantile estimator for distributions with support on the positive half-real line, subject to some additional conditions. In particular, we demonstrate that these error bounds directly depend on the MISE of the associated probability density function estimates. This greatly simplifies obtaining the aforementioned error bounds and allows us to examine the asymptotic behaviour of the Gauss-Hermite based CDF estimator and the associated quantile estimator. While these results \textit{do not} directly apply to the sequential quantile estimation setting (since $N$ and $\lambda$ are fixed in that case), they are novel and interesting in their own right. These asymptotic results provide general context and provide comfort that the behaviour of estimators constructed from the Hermite series probability density estimators have sensible asymptotic properties. To obtain asymptotic results we utilise existing MISE consistency results along with the associated rates for the standard Gauss-Hermite expansion \cite{convergence2} as well as novel MISE results for the exponentially weighted Gauss-Hermite expansion which we derive in appendix \ref{theoryEWGH}. All the results derived below are easily extended to $f(x)$ with support on a bounded interval, $[a,b]$ with $\hat{F}_{N}(x) =\int_{a}^{x} \hat{f}_{N}(x') dx'$(omitted for brevity). Our approach below is not directly generalisable to $(-\infty,\infty)$ however and deriving error bounds in that case is an interesting problem for further study.

\subsection{Quality of the Cumulative Distribution Function Estimate} \label{cumqual}

In what follows we assume that $f(x)$ is supported on $[0,\infty)$. \\

For non-negative random variables we obtain the simpler estimator:

\begin{eqnarray}\label{cumulativedistpos}
\hat{F}_{N}(x) &=&\int_{0}^{x} \hat{f}_{N}(x') dx' \nonumber \\
	&=& \sum_{k=0}^{N} \hat{a}_k k! \sum_{l=0}^{\lfloor k/2 \rfloor} \frac{(-1)^{l} 2^{\frac{3k}{2} - 3l -1} \left[\gamma(-l+\frac{k}{2}+\frac{1}{2},\frac{x^2}{2}) \right]}{l!(k - 2l)!\sqrt{\pi}},  \, x \geq 0.
\end{eqnarray}

The expression (\ref{cumulativedistpos}) differs from (\ref{cumulativedist}) since the domain of integration is different. We consider the Mean Squared Error (MSE) criterion and an integrated weighted MSE criterion (inspired by the Cramer-von Mises criterion) as measures of the quality of the estimated cumulative distribution function. The integrated weighted MSE criterion is defined as follows:

\begin{eqnarray}
	\overline{\omega^{2}} &=& E\int_{0}^{\infty} \left[\hat{F}_{N}(x) - F(x) \right]^{2} f(x) dx \nonumber \\
	&=&\int_{0}^{\infty} E\left[\hat{F}_{N}(x) - F(x) \right]^{2} f(x) dx, \label{CramerVonCrit}
\end{eqnarray}

where we have made use of Fubini's theorem which allows us to interchange the ordering of the integrals. Note that the PDF $f(x)$ is the weighting factor in (\ref{CramerVonCrit}).

\begin{proposition} \label{propositionMSE}

Suppose $f(x)$ is supported on $[0,\infty)$ and $f(x) \in L_{2}$ then we have:

$$E \left| \hat{F}_{N}(x) - F(x) \right|^2 \leq  x \, \mbox{MISE} (\hat{f}_{N}),$$

for fixed $x$. \\

\end{proposition}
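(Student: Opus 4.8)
The plan is to express the pointwise error $\hat{F}_{N}(x) - F(x)$ as an integral of the density error $\hat{f}_{N}(x') - f(x')$ over $[0,x]$, and then control that integral by Cauchy--Schwarz. Concretely, since $f$ is supported on $[0,\infty)$ we have $F(x) = \int_{0}^{x} f(x')\,dx'$ and, by construction, $\hat{F}_{N}(x) = \int_{0}^{x}\hat{f}_{N}(x')\,dx'$ (this is exactly the estimator in \eqref{cumulativedistpos}), so
\[
\hat{F}_{N}(x) - F(x) = \int_{0}^{x}\bigl(\hat{f}_{N}(x') - f(x')\bigr)\,dx'.
\]
Applying the Cauchy--Schwarz inequality on the interval $[0,x]$ with the constant function $1$ gives
\[
\bigl|\hat{F}_{N}(x) - F(x)\bigr|^{2} \le \Bigl(\int_{0}^{x} 1\,dx'\Bigr)\Bigl(\int_{0}^{x}\bigl(\hat{f}_{N}(x') - f(x')\bigr)^{2}\,dx'\Bigr) = x\int_{0}^{x}\bigl(\hat{f}_{N}(x') - f(x')\bigr)^{2}\,dx'.
\]

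Next I would take expectations of both sides. The left side becomes $E|\hat{F}_{N}(x) - F(x)|^{2}$. For the right side, $x$ is a fixed deterministic constant so it pulls out, and then I would enlarge the domain of integration from $[0,x]$ to the whole support $[0,\infty)$ — since the integrand $(\hat{f}_{N}(x') - f(x'))^{2}$ is nonnegative, this only increases the quantity. Finally, interchanging expectation and the (now fixed) integral via Fubini/Tonelli yields
\[
E\bigl|\hat{F}_{N}(x) - F(x)\bigr|^{2} \le x\int_{0}^{\infty} E\bigl(\hat{f}_{N}(x') - f(x')\bigr)^{2}\,dx' = x\,\mathrm{MISE}(\hat{f}_{N}),
\]
which is the claimed bound. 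The hypothesis $f \in L_{2}$ is what guarantees the MISE is finite (via the Parseval-type decomposition \eqref{MISE}), so the bound is non-vacuous.

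I do not expect any serious obstacle here; the argument is essentially one application of Cauchy--Schwarz plus Tonelli. The only points requiring a word of care are: (i) justifying that $\hat{F}_{N}$ really is the antiderivative of $\hat{f}_{N}$ starting at $0$ — this is immediate from the definition in \eqref{cumulativedistpos}; (ii) the interchange of expectation and integration, which is legitimate by Tonelli's theorem since the integrand is nonnegative; and (iii) noting that extending the integration domain to $[0,\infty)$ is valid precisely because the integrand is a square and hence nonnegative. No measurability or integrability subtleties arise beyond these.
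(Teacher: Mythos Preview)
Your proposal is correct and follows essentially the same route as the paper: write the CDF error as $\int_0^x(\hat f_N - f)$, apply Cauchy--Schwarz against the constant $1$, enlarge the integration domain using nonnegativity of the squared integrand, and take expectations. The only cosmetic difference is that the paper extends the domain all the way to $(-\infty,\infty)$ (matching the definition of MISE in \eqref{MISE}) rather than just to $[0,\infty)$, so your final ``$=$'' should strictly be ``$\le$'' unless you extend one step further; this does not affect the validity of the bound.
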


\begin{proof}

For a non-negative random variable:

$$\left| \hat{F}_{N}(x) - F(x) \right| = \left| \int_{0}^{x} \left[\hat{f}_{N}(x') - f(x') \right]dx' \right|.$$

By the Cauchy-Schwarz inequality:

$$\left| \hat{F}_{N}(x) - F(x) \right|^2 \leq \left[\int_{0}^{x} \left| \hat{f}_{N}(x') - f(x') \right|^{2} dx' \right] \left[ \int_{0}^{x} dx' \right].$$

Thus:

$$\left| \hat{F}_{N}(x) - F(x) \right|^2 \leq x \left[\int_{0}^{x} \left| \hat{f}_{N}(x') - f(x') \right|^{2} dx' \right].$$ 

Now, since $(\hat{f}_{N}(x') - f(x') )^{2}$ is non-negative we have, 

$$ \int_{0}^{x} (\hat{f}_{N}(x') - f(x') )^{2} dx' \leq \int_{-\infty}^{\infty} (\hat{f}_{N}(x') - f(x') )^{2} dx'.$$

Thus:

\begin{eqnarray}
	E\left| \hat{F}_{N}(x) - F(x) \right|^2 &\leq & x E\left[\int_{-\infty}^{\infty} \left| \hat{f}_{N}(x') - f(x') \right|^{2} dx' \right] \nonumber\\
	&=& x \, \mbox{MISE} (\hat{f}_{N}).\label{misetomse}
\end{eqnarray}

This implies that if we have an upper bound for the MISE of $\hat{f}_{N}(x)$ we can bound the MSE of $\hat{F}_{N}(x)$. 

\end{proof}

\begin{proposition} \label{propositionOmega}

Suppose $f(x)$ is supported on $[0,\infty)$ and $f(x)$ has a finite mean, $\mu<\infty$ then we have:

$$\overline{\omega^{2}} \leq \mbox{MISE} (\hat{f}_{N}) \mu.$$

\end{proposition}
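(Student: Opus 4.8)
The plan is to combine the pointwise bound of Proposition \ref{propositionMSE} with the integral representation of $\overline{\omega^{2}}$ given in (\ref{CramerVonCrit}). First I would start from the second line of (\ref{CramerVonCrit}), namely
$$\overline{\omega^{2}} = \int_{0}^{\infty} E\left[\hat{F}_{N}(x) - F(x) \right]^{2} f(x)\, dx,$$
where the interchange of expectation and integration has already been justified via Fubini's theorem in the discussion preceding the statement.

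Next I would apply Proposition \ref{propositionMSE}, which gives $E\left|\hat{F}_{N}(x) - F(x)\right|^{2} \leq x\, \mbox{MISE}(\hat{f}_{N})$ for each fixed $x \geq 0$. Since $f(x) \geq 0$ on $[0,\infty)$, multiplying this inequality by $f(x)$ and integrating over $[0,\infty)$ preserves the inequality, so that
$$\overline{\omega^{2}} \leq \int_{0}^{\infty} x\, \mbox{MISE}(\hat{f}_{N}) f(x)\, dx.$$
Because $\mbox{MISE}(\hat{f}_{N})$ does not depend on $x$, it factors out of the integral, leaving $\overline{\omega^{2}} \leq \mbox{MISE}(\hat{f}_{N}) \int_{0}^{\infty} x f(x)\, dx = \mbox{MISE}(\hat{f}_{N})\, \mu$, where the last equality uses that $f$ is supported on $[0,\infty)$, so $\int_{0}^{\infty} x f(x)\, dx$ is exactly the mean $\mu$, which is finite by hypothesis.

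There is essentially no serious obstacle here; the argument is a one-line consequence of Proposition \ref{propositionMSE} once the order of integration is arranged. The only points deserving a word of care are the legitimacy of the Fubini interchange (already invoked when (\ref{CramerVonCrit}) was written) and the trivial observation that if $\mbox{MISE}(\hat{f}_{N}) = \infty$ the bound holds vacuously, so it is the finiteness of $\mu$ (and, if one also wants $\mbox{MISE}(\hat{f}_{N}) < \infty$, the condition $f \in L_{2}$ as in Proposition \ref{propositionMSE}) that makes the statement meaningful.
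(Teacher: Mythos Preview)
Your proposal is correct and follows essentially the same approach as the paper: both apply the pointwise MSE bound of Proposition~\ref{propositionMSE} inside the integral defining $\overline{\omega^{2}}$, factor out $\mbox{MISE}(\hat{f}_{N})$, and recognise the remaining integral $\int_{0}^{\infty} x f(x)\,dx$ as the mean $\mu$. Your version is slightly more explicit about the justification (nonnegativity of $f$, the Fubini step, and the vacuous case $\mbox{MISE}=\infty$), but the argument is the same.
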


\begin{proof}

Utilizing proposition \ref{propositionMSE} we have, 

\begin{eqnarray}
	\overline{\omega^{2}} &\leq& \mbox{MISE} (\hat{f}_{N})  \int_{0}^{\infty} x f(x) dx \nonumber \\
	&=& \mbox{MISE} (\hat{f}_{N}) \mu, \label{cramer}
\end{eqnarray}

where $\mu$ is the mean of $f(x)$. 

\end{proof}

We now consider the consistency and associated rate of convergence for the CDF estimator (\ref{cumulativedistpos}) defined from the standard Gauss-Hermite coefficients (\ref{coeffEst}).

\begin{theorem}\label{theoremPointConsis}
Suppose $f(x)$ is supported on $[0,\infty)$ and $f(x) \in L_{2}$. In addition suppose that $\frac{N^{\frac{1}{2}}(n)}{n} \to 0$ as $N(n), n \to \infty$ and $E|X|^{\frac{2}{3}}<\infty$ then we have:

$$E \left| \hat{F}_{N}(x) - F(x) \right|^2 \to 0. $$
\end{theorem}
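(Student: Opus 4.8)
The plan is to combine Proposition \ref{propositionMSE} with the known MISE consistency result for the standard Gauss-Hermite expansion from \cite{convergence2}. Since $x$ is fixed, Proposition \ref{propositionMSE} immediately gives $E|\hat{F}_{N}(x) - F(x)|^{2} \leq x\,\mathrm{MISE}(\hat{f}_{N})$, so it suffices to show that $\mathrm{MISE}(\hat{f}_{N}) \to 0$ as $N(n), n \to \infty$ under the stated hypotheses. Thus the whole theorem reduces to checking that the hypotheses $f \in L_{2}$, $N^{1/2}(n)/n \to 0$, and $E|X|^{2/3} < \infty$ are exactly (or imply) the conditions under which the cited MISE consistency theorem holds.

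First I would recall the decomposition of the MISE from \eqref{MISE}: it splits into an integrated variance term $E[\sum_{k=0}^{N} \alpha_{k}^{-1}(\hat{a}_k - a_k)^{2}]$ and an integrated squared bias (truncation) term $\sum_{k=N+1}^{\infty} \alpha_{k}^{-1} a_{k}^{2}$. The bias term tends to $0$ as $N \to \infty$ purely because $f \in L_{2}$ (it is the tail of a convergent series, by Parseval). For the variance term, the standard estimate (as in \cite{convergence2}) bounds $E(\hat{a}_k - a_k)^{2}$ using the Hermite function inequalities \eqref{hermiteInequal1}--\eqref{hermiteInequal2}: one splits $\mathbf{x}_i$ according to $|\mathbf{x}_i| \leq a$ versus $|\mathbf{x}_i| \geq a$, uses \eqref{hermiteInequal1} on the bounded part and \eqref{hermiteInequal2} together with $E|X|^{2/3} < \infty$ on the tail, obtaining $\mathrm{Var}(\hat{a}_k) = O(\alpha_{k} k^{-1/2}/n)$ uniformly in $k$. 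Summing over $k = 0, \dots, N$ then yields an integrated variance of order $N^{1/2}/n$, which vanishes by the hypothesis $N^{1/2}(n)/n \to 0$. Combining the two pieces gives $\mathrm{MISE}(\hat{f}_{N}) \to 0$, and multiplying by the fixed constant $x$ finishes the proof.

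The main obstacle is bookkeeping rather than conceptual: one must verify that the moment condition $E|X|^{2/3} < \infty$ is precisely what is needed to control the contribution of the large-$|x|$ region through inequality \eqref{hermiteInequal2} (note the $x^{-1/3}$ factor there, which is what pairs with a $2/3$-moment), and that summing the per-coefficient variance bound over $k \leq N$ produces the $N^{1/2}$ growth matching the hypothesis $N^{1/2}(n)/n \to 0$. If the cited result \cite{convergence2} is invoked verbatim, essentially nothing remains to prove beyond citing it and applying Proposition \ref{propositionMSE}; if instead a self-contained argument is wanted, the care lies in making the variance bound uniform in $k$ with the stated $k^{-1/2}$ decay, so that the sum over $N+1$ coefficients is $O(N^{1/2})$ rather than $O(N)$.
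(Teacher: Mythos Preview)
Your proposal is correct and matches the paper's approach exactly: the paper's proof simply invokes Proposition~\ref{propositionMSE} together with the MISE consistency result from \cite{convergence2} under the stated conditions. Your additional sketch of why the MISE vanishes (bias term as tail of a Parseval series, variance term of order $N^{1/2}/n$ via the Hermite inequalities and the $2/3$-moment) is accurate but goes beyond what the paper itself provides.
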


\begin{proof}
The result follows from proposition \ref{propositionMSE} and the fact that  $ \mbox{MISE} (\hat{f}_{N}) \to 0$ under the conditions $\frac{N^{\frac{1}{2}}(n)}{n} \to 0$ as $N(n), n \to \infty$ and $E|X|^{\frac{2}{3}}<\infty$ \cite{convergence2}.
\end{proof}

\begin{theorem} \label{propositionMSERate}

Suppose $f(x)$ is supported on $[0,\infty)$, $f(x) \in L_{2}$, $r \geq1$ derivatives of $f(x)$ exist and $(x-\frac{d}{dx})^r f(x) \in L_{2}$. Suppose in addition that $\frac{N^{\frac{1}{2}}(n)}{n} \to 0$ as $N(n), n \to \infty$ and $E|X|^{\frac{2}{3}}<\infty$ then if:

$$N(n) \sim n^{2/(2r+1)}, \mbox{ we have }$$ 

$$E \left| \hat{F}_{N}(x) - F(x) \right|^2 = x \, O\left(n^{-2r/(2r+1)}\right).$$ \\

\end{theorem}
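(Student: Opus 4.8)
The plan is to chain together Proposition~\ref{propositionMSE} with the known MISE rate for the standard Gauss-Hermite estimator under smoothness assumptions. By Proposition~\ref{propositionMSE}, for fixed $x$ we already have the bound
$$E \left| \hat{F}_{N}(x) - F(x) \right|^2 \leq x \, \mbox{MISE}(\hat{f}_{N}),$$
so the entire content of the theorem reduces to establishing that $\mbox{MISE}(\hat{f}_{N}) = O\left(n^{-2r/(2r+1)}\right)$ under the stated hypotheses and the choice $N(n) \sim n^{2/(2r+1)}$.

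First I would recall the MISE decomposition from equation~(\ref{MISE}): the integrated variance term plus the integrated squared bias (truncation) term. For the integrated variance, the hypothesis $\frac{N^{1/2}(n)}{n}\to 0$ together with $E|X|^{2/3}<\infty$ gives (via the results of \cite{convergence2}) a variance contribution of order $N^{1/2}/n$. With $N \sim n^{2/(2r+1)}$ this is of order $n^{1/(2r+1)}/n = n^{-2r/(2r+1)}$, which is exactly the target rate. For the integrated squared bias, the smoothness assumption that $r\geq 1$ derivatives of $f$ exist and $(x-\frac{d}{dx})^r f(x)\in L_2$ is precisely the condition under which \cite{convergence2} bounds the tail $\sum_{k=N+1}^{\infty}\frac{1}{\alpha_k^2}a_k^2$ by $O(N^{-r})$; with $N\sim n^{2/(2r+1)}$ this is $O\left(n^{-2r/(2r+1)}\right)$ as well. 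Balancing the two terms is what motivates the choice $N(n)\sim n^{2/(2r+1)}$ in the first place. Adding the two contributions gives $\mbox{MISE}(\hat{f}_{N}) = O\left(n^{-2r/(2r+1)}\right)$, and multiplying by $x$ yields the claim.

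The main obstacle is not any new computation but rather making sure the invoked results from \cite{convergence2} are stated in a form that simultaneously delivers the $N^{1/2}/n$ variance bound and the $N^{-r}$ bias bound under exactly the hypotheses listed — in particular that the operator condition $(x-\frac{d}{dx})^r f \in L_2$ is the right smoothness class matching the Hermite-function regularity scale, and that the side condition $\frac{N^{1/2}}{n}\to 0$ is compatible with (indeed implied by, for $r\geq 1$) the prescribed growth $N\sim n^{2/(2r+1)}$. Once those citations are lined up correctly, the proof is a one-line consequence of Proposition~\ref{propositionMSE}.
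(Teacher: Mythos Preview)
Your proposal is correct and follows essentially the same approach as the paper: the paper's proof simply cites the result from \cite{convergence2} that under the stated hypotheses and the choice $N(n)\sim n^{2/(2r+1)}$ one has $\mbox{MISE}(\hat{f}_{N}) = O\left(n^{-2r/(2r+1)}\right)$, and then invokes Proposition~\ref{propositionMSE}. You have merely spelled out the bias--variance balancing behind that cited rate in slightly more detail than the paper does.
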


\begin{proof}

In \cite{convergence2} it is established that provided $f(x) \in L_{2}$, $r \geq 1$ derivatives of $f(x)$ exist, $(x-\frac{d}{dx})^r f(x) \in L_{2}$, $\frac{N^{\frac{1}{2}}(n)}{n} \to 0$ as $N(n), n \to \infty$ and $E|X|^{\frac{2}{3}}<\infty$ then if:

$$N(n) \sim n^{2/(2r+1)}, \mbox{ we have }$$ 

$$\mbox{MISE} (\hat{f}_{N}) = O\left(n^{-2r/(2r+1)}\right).$$ 

Combining this with proposition \ref{propositionMSE} completes the proof.
 
\end{proof}

Note that for $r=1$ the rate is $O(n^{-2/3})$. It is important to note that this rate is suboptimal compared to the smooth kernel CDF estimate rate which is $O(n^{-1})$.  For smooth probability density functions ($r\to \infty$) satisfying the appropriate conditions, the rate for the Gauss-Hermite CDF estimator approaches $O(n^{-1})$.

\begin{theorem}\label{cramerconsis}

Suppose $f(x)$ is supported on $[0,\infty)$ and $f(x)$ has a finite mean, $\mu<\infty$. In addition suppose that $\frac{N^{\frac{1}{2}}(n)}{n} \to 0$ as $N(n), n \to \infty$ then we have:

$$\overline{\omega^{2}} \to 0.$$

\end{theorem}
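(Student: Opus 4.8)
The plan is to combine Proposition~\ref{propositionOmega} with the known MISE consistency result for the standard Gauss--Hermite expansion, in exact parallel with the proof of Theorem~\ref{theoremPointConsis}. Proposition~\ref{propositionOmega} gives the clean bound $\overline{\omega^{2}} \leq \mbox{MISE}(\hat{f}_{N})\,\mu$, so once $\mu < \infty$ is assumed (which is part of the hypothesis), it suffices to show that $\mbox{MISE}(\hat{f}_{N}) \to 0$ under the stated growth condition $\frac{N^{1/2}(n)}{n} \to 0$ as $N(n), n \to \infty$.

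First I would invoke the result of \cite{convergence2}, already used to prove Theorem~\ref{theoremPointConsis}, which states that $\mbox{MISE}(\hat{f}_{N}) \to 0$ provided $f(x) \in L_{2}$, $E|X|^{2/3} < \infty$, and $\frac{N^{1/2}(n)}{n} \to 0$. The one subtlety is that the hypotheses of Theorem~\ref{cramerconsis} as stated are weaker than those: they only assume $f$ has a finite mean and are silent about $f \in L_{2}$ and $E|X|^{2/3} < \infty$. So the key step is to check that the stated hypotheses actually suffice --- i.e. either (a) a finite mean $\mu<\infty$ on $[0,\infty)$ already implies $E|X|^{2/3} < \infty$ by Jensen/monotonicity of $L_p$ norms (since $2/3 < 1$, indeed $E|X|^{2/3} \leq (E|X|)^{2/3} = \mu^{2/3} < \infty$ for a non-negative random variable), and (b) the $L_2$ condition on $f$ should be added as an implicit or explicit hypothesis, matching Proposition~\ref{propositionMSE}. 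I would therefore note explicitly that $\mu<\infty$ yields $E|X|^{2/3}<\infty$ via $E|X|^{2/3} \leq \mu^{2/3}$, and otherwise carry $f \in L_2$ along (as in the other propositions).

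Putting the pieces together: under $f \in L_2$, $\mu < \infty$ (hence $E|X|^{2/3}<\infty$), and $\frac{N^{1/2}(n)}{n}\to 0$, the cited result gives $\mbox{MISE}(\hat{f}_{N})\to 0$; then Proposition~\ref{propositionOmega} forces $\overline{\omega^{2}} \leq \mu\,\mbox{MISE}(\hat{f}_{N}) \to 0$, which is the claim.

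The main obstacle is not really a mathematical one but a bookkeeping one: reconciling the hypothesis list of Theorem~\ref{cramerconsis} with the hypothesis list needed to quote \cite{convergence2}'s MISE consistency theorem. The only genuine content is the elementary observation that on $[0,\infty)$ a finite first moment controls the fractional moment $E|X|^{2/3}$, so that the moment condition in \cite{convergence2} comes for free; the $L_2$ assumption is the same one already present in Propositions~\ref{propositionMSE} and~\ref{propositionOmega} and should be understood to hold here as well. Once that is observed, the proof is two lines.
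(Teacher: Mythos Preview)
Your proposal is correct and matches the paper's own proof essentially line for line: the paper also invokes Proposition~\ref{propositionOmega} and the MISE consistency result of \cite{convergence2}, and it makes the same observation that for a non-negative random variable $E(X)<\infty$ implies $E|X|^{2/3}<\infty$ (the paper phrases this via the Lyapunov inequality rather than Jensen, but the content is identical). Your remark that the $L_2$ hypothesis on $f$ is being carried along implicitly is a fair bookkeeping point that the paper does not spell out.
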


\begin{proof}

The result follows directly from proposition \ref{propositionOmega} and the fact that\\ $\mbox{MISE} (\hat{f}_{N}) \to 0$ under the conditions $\frac{N^{\frac{1}{2}} (n)}{n} \to 0$ as $N(n), n \to \infty$ and $E|X|^{\frac{2}{3}}<\infty$  \cite{convergence2} (for positive random variables, we have $E(X)<\infty$ implying $E|X|^{\frac{2}{3}}<\infty$ by the Lyapunov inequality).

\end{proof}

\begin{theorem}\label{cramerrate}

Suppose $f(x)$ is supported on $[0,\infty)$, $f(x)$ has a finite mean, $\mu<\infty$, $f(x) \in L_{2}$, $r \geq1$ derivatives of $f(x)$ exist and $(x-\frac{d}{dx})^r f(x) \in L_{2}$. Suppose in addition that $\frac{N^{\frac{1}{2}}(n)}{n} \to 0$ as $N(n), n \to \infty$ then if:

$$N(n) \sim n^{2/(2r+1)}, \mbox{ we have }$$ 

$$\overline{\omega^{2}} =O\left(n^{-2r/(2r+1)}\right).$$

\end{theorem}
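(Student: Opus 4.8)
The plan is to derive this as an immediate consequence of Proposition \ref{propositionOmega} together with the known rate of convergence for the MISE of the Hermite series density estimator established in \cite{convergence2}. The argument runs entirely parallel to the proof of Theorem \ref{propositionMSERate}, with the pointwise bound of Proposition \ref{propositionMSE} replaced by the integrated bound of Proposition \ref{propositionOmega}.

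First I would check that all hypotheses needed to invoke the MISE rate of \cite{convergence2} are in force: $f(x)\in L_{2}$, $f$ possesses $r\geq 1$ derivatives, $(x-\frac{d}{dx})^r f(x)\in L_{2}$, and $\frac{N^{1/2}(n)}{n}\to 0$ as $N(n),n\to\infty$. The one remaining requirement of \cite{convergence2}, namely $E|X|^{2/3}<\infty$, follows from the assumption $\mu<\infty$: for a non-negative random variable, $E|X|^{2/3}\leq (E|X|)^{2/3}=\mu^{2/3}<\infty$ by the Lyapunov (Jensen) inequality, exactly as in the proof of Theorem \ref{cramerconsis}. I would also note in passing that the prescribed choice $N(n)\sim n^{2/(2r+1)}$ is compatible with the side condition $\frac{N^{1/2}(n)}{n}\to 0$, since $\frac{N^{1/2}(n)}{n}\sim n^{1/(2r+1)-1}\to 0$ for $r\geq 1$.

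With these verifications in hand, \cite{convergence2} gives $\mbox{MISE}(\hat{f}_{N})=O\!\left(n^{-2r/(2r+1)}\right)$ under the choice $N(n)\sim n^{2/(2r+1)}$. Proposition \ref{propositionOmega} then yields
$$\overline{\omega^{2}} \leq \mbox{MISE}(\hat{f}_{N})\,\mu = \mu\cdot O\!\left(n^{-2r/(2r+1)}\right) = O\!\left(n^{-2r/(2r+1)}\right),$$
because $\mu$ is a fixed finite constant independent of $n$, and so it is absorbed into the order constant without affecting the rate. This completes the argument.

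There is no substantive obstacle here: the theorem is a direct corollary of results already assembled in the paper. The only points that require any care are the bookkeeping checks mentioned above — deducing the moment condition $E|X|^{2/3}<\infty$ from finiteness of the mean, and confirming that the MISE-optimal bandwidth rate respects the consistency side condition. If anything, the \emph{hard part} is purely expository: making clear that multiplying the MISE bound by the finite constant $\mu$ leaves the asserted rate unchanged.
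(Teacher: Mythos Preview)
Your proposal is correct and follows exactly the same route as the paper's own proof, which simply states that the result follows directly from Proposition~\ref{propositionOmega} combined with the MISE rate referenced in the proof of Theorem~\ref{propositionMSERate}. Your additional bookkeeping checks (deducing $E|X|^{2/3}<\infty$ from $\mu<\infty$ via Lyapunov, and verifying compatibility of $N(n)\sim n^{2/(2r+1)}$ with the side condition) are more explicit than the paper but add no new idea.
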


\begin{proof}

This follows directly from proposition \ref{propositionOmega} and the MISE result referenced in the proof of theorem  \ref{propositionMSERate}.

\end{proof}

\begin{remark}
An important class of distributions for which we can gain further insight into the theoretical performance of the Gauss-Hermite CDF estimator (and quantile estimator in principle) is power-law distributions. These are heavy-tailed distributions and are highly relevant in a number of fields including finance \cite{clauset2009power}. We utilise the following definition of the power law distribution:

$$f(x) = \frac{\alpha -1}{x_{\mbox{min}}} \left( \frac{x}{x_{\mbox{min}}}\right)^{-\alpha},$$

where $\alpha>1$ is a requirement for normalisability. In addition, we assume $x_{\mbox{min}}>0$. Thus $f(x)$ is supported on $[0,\infty)$, $f(x) \in L_{2}$ and all derivatives of $f(x)$ exist for $x \geq x_{\mbox{min}}$. If we require in addition that $\alpha>2$, then we have a finite mean $E(X)<\infty$. The condition in the theorems proven above that $(x-\frac{d}{dx})^r f(x) \in L_{2}$ can be related to $\alpha$ as follows:\\

Denote the operator $\frac{d}{dx}$ as $D$. Now:

\begin{align*}
	(x-D)^r f(x)&=\frac{\alpha -1}{\left(x_{\mbox{min}}\right)^{1-\alpha}} \left(x^r + x^{r-1}D +x^{r-2}Dx + x^{r-2}D^{2} + \dots+D^r \right)x^{-\alpha}\nonumber\\
	&= O(x^{-(\alpha - r)}),\nonumber
\end{align*}

since $x$ and $D$ are non-commutative operators. Thus\\ $\left[(x-D)^r f(x)\right]^{2} = O(x^{-2(\alpha - r)})$. This implies that for $\left[(x-D)^r f(x)\right]^{2}$ to be integrable, we must have $-2(\alpha - r) <-1$. Which implies $r< \alpha- \frac{1}{2}$ and thus $r=\lceil \alpha- \frac{1}{2} \rceil -1$. Thus if we also have $\frac{N^{\frac{1}{2}}(n)}{n} \to 0$ as $N(n), n \to \infty$, theorem \ref{theoremPointConsis} and theorem \ref{propositionMSERate} imply that the Gauss-Hermite CDF estimator is consistent for power-law distributions and the rate is $E \left| \hat{F}_{N}(x) - F(x) \right|^2 = x \, O\left(n^{-(2\lceil \alpha- \frac{1}{2} \rceil -2)/(2\lceil \alpha- \frac{1}{2} \rceil -1)}\right)$. Similarly theorem \ref{cramerconsis} and theorem \ref{cramerrate} imply that $\overline{\omega^{2}} \to 0$ and the rate is $\overline{\omega^{2}} =O\left(n^{-(2\lceil \alpha- \frac{1}{2} \rceil -2)/(2\lceil \alpha- \frac{1}{2} \rceil -1)}\right)$. Thus for power-law distributions that have a finite mean, the rates are $O\left(n^{-2/3}\right)$ or better. For power-law distributions that have a finite variance, $\alpha > 3$ and thus the rates are $O\left(n^{-4/5}\right)$ or better. As $\alpha \to \infty$ (along with the number of finite moments of the power-law distribution), the Gauss-Hermite rates approach $O\left(n^{-1}\right)$.\\
\end{remark}

To summarise, we have derived the asymptotic results above in the setting where $N$ depends on $n$, i.e. $N=N(n)$. These results give comfort that the Gauss-Hermite based CDF estimator has sensible asymptotic behaviour. Our proposed online estimators have $N$ fixed however. Thus results such as theorem \ref{theoremPointConsis} do not apply directly. Instead, as $n \to \infty$, these online estimators have MSE bounds determined by the integrated squared bias of the truncated Gauss-Hermite PDF estimators on which the plug-in CDF estimators are based. While this bias persists for fixed $N$ even as $n \to \infty$, these estimators are nonetheless very useful in practice as we have demonstrated in our simulation and real data results. In addition, we can still gain insight into the behaviour of these estimators at fixed $N$. We now consider the behaviour of the EWGH CDF estimator defined from utilising the exponentially weighted Gauss-Hermite coefficients (\ref{GHCoeffEWMA}) in the expression (\ref{cumulativedistpos}) for the CDF where $\lambda$ is fixed, $N$ is fixed and sufficiently large and $n \to \infty$. We begin with the case of i.i.d. data drawn from $f(x)$.

\begin{theorem} \label{EWGHCDFIID}
Suppose $f(x)$ is supported on $[0,\infty)$, $f(x)$ has a finite mean, $\mu<\infty$, $f(x) \in L_{2}$, $r \geq1$ derivatives of $f(x)$ exist and $(x-\frac{d}{dx})^r f(x) \in L_{2}$. Then for $N$ fixed and sufficiently large and $n \to \infty$.

$$E \left| \hat{F}_{N}(x) - F(x) \right|^2 = x \, \left[ O(N^{1/2}) \left[ \frac{\lambda}{2-\lambda} \right] + O(N^{-r}) \right],$$ 

and

$$\overline{\omega^{2}} = O(N^{1/2}) \left[ \frac{\lambda}{2-\lambda} \right] + O(N^{-r}).$$

\end{theorem}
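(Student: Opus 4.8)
The plan is to reduce both statements to a single mean integrated squared error bound for the underlying exponentially weighted Gauss--Hermite density estimator $\hat f_N$, and then to invoke propositions \ref{propositionMSE} and \ref{propositionOmega}. Under the stated hypotheses ($f$ supported on $[0,\infty)$, $f\in L_2$, $\mu<\infty$), proposition \ref{propositionMSE} gives $E|\hat F_N(x)-F(x)|^2 \le x\,\mbox{MISE}(\hat f_N)$ for fixed $x$, and proposition \ref{propositionOmega} gives $\overline{\omega^2}\le \mu\,\mbox{MISE}(\hat f_N)$. Since $\mu$ is a finite constant, both claimed bounds follow immediately once one shows, in the regime $N$ fixed and sufficiently large with $n\to\infty$, that $\mbox{MISE}(\hat f_N) = O(N^{1/2})\left[\frac{\lambda}{2-\lambda}\right] + O(N^{-r})$; this is precisely the content of theorem \ref{theoremEWGHMISE1} in appendix \ref{theoryEWGH}. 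So the substantive work is that MISE estimate.

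For the MISE estimate I would use the Parseval-type decomposition (\ref{MISE}), which splits $\mbox{MISE}(\hat f_N)$ into the integrated variance term $\sum_{k=0}^N \frac{2^{k-1}k!}{\sqrt\pi}\,E[(\hat a_k-a_k)^2]$ and the integrated squared bias term $\sum_{k=N+1}^\infty \frac{2^{k-1}k!}{\sqrt\pi}\,a_k^2$. The bias term does not involve $\lambda$: it is the pure truncation error of the Gauss--Hermite series, and under the hypothesis that $r\ge1$ derivatives of $f$ exist with $(x-\frac{d}{dx})^r f\in L_2$ it is $O(N^{-r})$ by the argument of \cite{convergence2} already cited in the proof of theorem \ref{propositionMSERate}. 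This furnishes the $O(N^{-r})$ summand.

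For the integrated variance term I would exploit the i.i.d.\ structure by unrolling the recursion (\ref{GHCoeffEWMA}): $\hat a_k^{(n)} = \lambda\sum_{j=0}^{n-1}(1-\lambda)^j \alpha_k Z(\mathbf{x}_{n-j})H_k(\mathbf{x}_{n-j}) + (1-\lambda)^n \hat a_k^{(0)}$. Taking expectations gives $E[\hat a_k^{(n)}]-a_k = (1-\lambda)^n(E[\hat a_k^{(0)}]-a_k)\to 0$, so the estimator is asymptotically unbiased for $a_k$ and $E[(\hat a_k^{(n)}-a_k)^2]\to \mathrm{Var}(\hat a_k^{(\infty)})$; by independence of the $\mathbf{x}_i$ and the geometric sum $\lambda^2\sum_{j\ge0}(1-\lambda)^{2j}=\frac{\lambda}{2-\lambda}$ one obtains $\mathrm{Var}(\hat a_k^{(\infty)}) = \frac{\lambda}{2-\lambda}\,\mathrm{Var}\!\left(\alpha_k Z(X)H_k(X)\right)\le \frac{\lambda}{2-\lambda}\,E\!\left[\alpha_k^2 Z(X)^2 H_k(X)^2\right]$. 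Using $\alpha_k=\frac{\sqrt\pi}{2^{k-1}k!}$ and the definition of the normalised Hermite functions $h_k$, the weighted $k$th term collapses: $\frac{2^{k-1}k!}{\sqrt\pi}\alpha_k^2\,E[Z(X)^2 H_k(X)^2] = E[h_k(X)^2] = \int_0^\infty h_k(x)^2 f(x)\,dx$. Splitting this integral at some $a>0$ and applying (\ref{hermiteInequal1}) on $|x|\le a$ and (\ref{hermiteInequal2}) on $x\ge a$ gives $h_k(x)^2 \le c\,(k+1)^{-1/2}$ and $h_k(x)^2 \le d\,x^{2/3}(k+1)^{-1/2}$ respectively, so $\int_0^\infty h_k(x)^2 f(x)\,dx = O((k+1)^{-1/2})$ since $\mu<\infty$ forces $E|X|^{2/3}<\infty$ by Lyapunov's inequality. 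Summing, $\sum_{k=0}^N O((k+1)^{-1/2}) = O(N^{1/2})$, so the integrated variance term converges to $O(N^{1/2})\left[\frac{\lambda}{2-\lambda}\right]$ as $n\to\infty$. Adding the two summands gives the MISE bound, and the two propositions then finish the proof.

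The main obstacle is the integrated variance estimate. Unlike the static-quantile case of Section \ref{GHOnlineStat}, the exponentially weighted coefficients are not simple sample averages, so one must pass carefully to the $n\to\infty$ limit of the recursion — in particular verifying that the contribution of the initialisation term $\hat a_k^{(0)}$ is killed by the factor $(1-\lambda)^n$ — and then extract the $N^{1/2}$ growth of $\sum_{k\le N}\int h_k^2 f$, which is exactly where the Hermite function bounds (\ref{hermiteInequal1})--(\ref{hermiteInequal2}) and the finite-mean hypothesis are genuinely needed. This is also why the theorem requires $N$ ``sufficiently large'': the per-term bound $O((k+1)^{-1/2})$ and the truncation bound $O(N^{-r})$ are asymptotic-in-$N$ statements, so the constants hidden in the two $O(\cdot)$ terms are only controlled for $N$ past a threshold determined by $f$ and $r$.
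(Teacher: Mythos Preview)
Your proposal is correct and follows essentially the same route as the paper: reduce to the MISE bound via propositions \ref{propositionMSE} and \ref{propositionOmega}, then invoke theorem \ref{theoremEWGHMISE1}. You go further by sketching the proof of that MISE bound itself (the Parseval split, unrolling the recursion, the geometric-sum variance, and the Hermite bounds yielding $O((k+1)^{-1/2})$), and this sketch matches the paper's appendix argument (proposition \ref{MSECoeffBound} plus the $O(N^{-r})$ truncation estimate) essentially line for line; the only minor difference is that in the paper's setup $\hat a_k^{(0)}=\alpha_k Z(\mathbf{x}_0)H_k(\mathbf{x}_0)$ already has expectation $a_k$, so the EWGH coefficient estimator is exactly unbiased for every $n$, not merely asymptotically.
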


\begin{proof}
The result follows from proposition \ref{propositionMSE} and \ref{propositionOmega} respectively along with theorem \ref{theoremEWGHMISE1}.
\end{proof}

We now treat the case of independent, non-identically distributed data. In particular, we consider the case of a change point where the distribution changes from $f_{1}(x)$ to $f_{2}(x)$.  We consider this a fundamental example of non-identically distributed data. 

\begin{theorem}
Suppose $s+1$ observations are drawn from a probability distribution $f_{1} \in L_{2}$ followed by a further $t$ observations from a second distribution $f_{2} \in L_{2}$ i.e. we assume an independent sequence of $s+1$ observations $\mathbf{x}_{0}, \dots, \mathbf{x}_{s} \sim f_{1}(x)$ followed by an independent sequence of $t$ observations, $\mathbf{x}_{s+1}, \dots, \mathbf{x}_{t+s} \sim f_{2}(x)$. If $r \geq 1$ derivatives of $f_{2}(x)$ exist and $(x-\frac{d}{dx})^r f_{2}(x) \in L_{2}$ and both distributions have a finite mean, then:

\begin{align*}
&E \left| \hat{F}_{N}(x) - F_{2}(x) \right|^2 \nonumber\\
&=x \left[ O(N^{1/2}) \left[4(1-\lambda)^{2t}\!+\!\frac{\lambda}{2-\lambda}\left[1\!-\!(1\!-\!\lambda)^{2(s+t)} \right]\!+\!(1-\lambda)^{2(s+t)} \right]\!+\!O(N^{-r})\right]\nonumber
\end{align*}

and

$$\overline{\omega^{2}}\!=\!O(N^{1/2})\left[4(1\!-\!\lambda)^{2t}\!+\!\frac{\lambda}{2\!-\!\lambda}\left[1\!-\!(1\!-\!\lambda)^{2(s+t)}\right]\!+\!(1\!-\!\lambda)^{2(s+t)} \right]\!+\!O(N^{-r}),$$

where $F_{2}(x)$ is the CDF of $f_{2}(x)$ and $\overline{\omega^{2}} = \int_{0}^{\infty} E\left[\hat{F}_{N}(x) - F_{2}(x) \right]^{2} f_{2}(x) dx$.

\end{theorem}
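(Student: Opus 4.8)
The plan is to reduce the change-point statement to Proposition~\ref{propositionMSE} and Proposition~\ref{propositionOmega} exactly as in Theorem~\ref{EWGHCDFIID}, so that everything boils down to bounding $\mbox{MISE}(\hat f_N)$ for the EWGH density estimator after the change point. Concretely, the CDF plug-in estimator $\hat F_N(x)$ is built from the exponentially weighted coefficients $\hat a_k^{(t+s)}$ defined in (\ref{GHCoeffEWMA}), which (unrolling the recursion) is a weighted average of the terms $\alpha_k Z(\mathbf{x}_i)H_k(\mathbf{x}_i)$, with the first $s+1$ of these having mean $a_k^{(1)} := \alpha_k\int Z f_1 H_k$ and the last $t$ having mean $a_k^{(2)} := \alpha_k\int Z f_2 H_k$. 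So the first step is to prove the analogue of Theorem~\ref{theoremEWGHMISE1} (presumably an extension of the appendix result) giving
$$\mbox{MISE}(\hat f_N) = O(N^{1/2})\left[4(1-\lambda)^{2t}+\tfrac{\lambda}{2-\lambda}\bigl(1-(1-\lambda)^{2(s+t)}\bigr)+(1-\lambda)^{2(s+t)}\right]+O(N^{-r}),$$
and then invoke Proposition~\ref{propositionMSE} to get the $x\,\mbox{MISE}$ bound on $E|\hat F_N(x)-F_2(x)|^2$ and Proposition~\ref{propositionOmega} (with $\mu$ the mean of $f_2$) to get the $\overline{\omega^2}$ bound. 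The $O(N^{-r})$ term is the usual integrated squared bias of the truncated Gauss–Hermite expansion of $f_2$, controlled by the hypothesis $(x-\frac{d}{dx})^r f_2 \in L_2$ via the standard tail estimate on $\sum_{k>N}\alpha_k^{-1}(a_k^{(2)})^2$; this is identical to the argument underlying Theorem~\ref{propositionMSERate} and needs no new idea.

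The genuine work is the variance-type term, i.e. bounding $E\sum_{k=0}^N \alpha_k^{-1}(\hat a_k^{(t+s)} - a_k^{(2)})^2$. Writing $\hat a_k^{(t+s)} = \sum_i w_i\, Y_{k,i}$ with $Y_{k,i} = \alpha_k Z(\mathbf{x}_i)H_k(\mathbf{x}_i)$ and exponential weights $w_i$ (so $w_{t+s}=\lambda$, decaying by $(1-\lambda)$ going backwards, with the oldest weight $(1-\lambda)^{t+s}$), I would split the deviation into a bias piece and a fluctuation piece: $\hat a_k^{(t+s)} - a_k^{(2)} = \bigl(E\hat a_k^{(t+s)} - a_k^{(2)}\bigr) + \bigl(\hat a_k^{(t+s)} - E\hat a_k^{(t+s)}\bigr)$. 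The first piece is deterministic and equals $(\sum_{i\le s} w_i)\,(a_k^{(1)}-a_k^{(2)})$, and since $\sum_{i\le s} w_i = (1-\lambda)^t$, squaring and using $(a_k^{(1)}-a_k^{(2)})^2 \le 2(a_k^{(1)})^2 + 2(a_k^{(2)})^2$ summed against $\alpha_k^{-1}$ (both bounded, since $f_1,f_2\in L_2$, by Parseval) produces the $O(N^{1/2})\cdot 4(1-\lambda)^{2t}$ contribution — here the $N^{1/2}$ comes exactly as in the appendix from the crude bound $\sum_{k=0}^N 1 \lesssim N$ combined with the $O(N^{-1/2})$-type decay built into the Hermite inequalities (\ref{hermiteInequal1})–(\ref{hermiteInequal2}), or rather is the worst-case order when one does not assume extra smoothness. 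The second piece is a sum of independent (mean-zero) terms, so $\mbox{Var}(\hat a_k^{(t+s)}) = \sum_i w_i^2\,\mbox{Var}(Y_{k,i}) \le (\sup_i \mbox{Var}(Y_{k,i}))\sum_i w_i^2$; one computes $\sum_i w_i^2 = \lambda^2\sum_{j=0}^{t+s-1}(1-\lambda)^{2j} + (1-\lambda)^{2(t+s)} = \frac{\lambda}{2-\lambda}\bigl(1-(1-\lambda)^{2(s+t)}\bigr) + (1-\lambda)^{2(s+t)}$, which is precisely the bracketed combination appearing in the statement, and $\sup_k\sup_i \alpha_k^{-1}\mbox{Var}(Y_{k,i})$ summed over $k\le N$ again gives the $O(N^{1/2})$ factor. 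Adding the bias and variance contributions to the $O(N^{-r})$ truncation term yields the claimed MISE bound, and Propositions~\ref{propositionMSE} and~\ref{propositionOmega} finish it.

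The main obstacle I anticipate is twofold. First, making the $O(N^{1/2})$ bookkeeping rigorous: the constants hidden in $O(N^{1/2})$ must be shown uniform over the weights $w_i$ and over which of the two regimes the index $i$ falls in, and one must be careful that $\sup_i \mbox{Var}(Y_{k,i})$ is genuinely bounded uniformly in $i$ — this requires a bound like $E[Z(\mathbf{x}_i)^2 H_k(\mathbf{x}_i)^2] \le C$ that holds for $\mathbf{x}_i\sim f_1$ and for $\mathbf{x}_i\sim f_2$ simultaneously, which in turn is where $f_1,f_2\in L_2$ enters (the integrand is $h_k(x)^2 f_j(x)$ up to constants, and $\int h_k^2 f_j \le \|h_k\|_\infty^2$ is too crude; better to use $\int h_k(x)^2 f_j(x)\,dx$ controlled via Cauchy–Schwarz by $\|h_k^2\|_2\|f_j\|_2$ and the Hermite inequalities, as in \cite{convergence2}). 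Second, the cross term $2E\bigl[(E\hat a_k^{(t+s)}-a_k^{(2)})(\hat a_k^{(t+s)}-E\hat a_k^{(t+s)})\bigr]$ in the decomposition of $(\hat a_k^{(t+s)}-a_k^{(2)})^2$ vanishes in expectation, so it is harmless, but one should note this explicitly to avoid an illegitimate-looking step. Everything else — Parseval, the geometric-sum identity for $\sum w_i^2$, the reduction via the two Propositions — is routine and mirrors the i.i.d.\ case in Theorem~\ref{EWGHCDFIID}.
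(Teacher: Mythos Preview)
Your proposal is correct and takes essentially the same approach as the paper: the paper's proof is the one-liner ``follows from Proposition~\ref{propositionMSE} and Proposition~\ref{propositionOmega} together with Theorem~\ref{theoremEWGHMISE2}'', where Theorem~\ref{theoremEWGHMISE2} (in the appendix) is exactly the change-point MISE bound you outline, proved via the same bias--variance split of $\hat a_k^{(t+s)}-a_k^{(2)}$, the identity $E\hat a_k^{(t+s)}-a_k^{(2)}=(1-\lambda)^t(a_k^{(1)}-a_k^{(2)})$, the geometric sum for $\sum_i w_i^2$, and the Hermite-function bound $\alpha_k^{-1}\sigma_X^2(k)\le c(k+1)^{-1/2}$ summed to $O(N^{1/2})$. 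Your anticipated obstacles are handled in the paper just as you suggest: the $(k+1)^{-1/2}$ bound on $E[\alpha_k^2 Z^2 H_k^2]$ is applied uniformly to both $f_1$ and $f_2$ (requiring $E|X|^{2/3}<\infty$, which follows from the finite-mean assumption on $[0,\infty)$), and the factor $4$ in front of $(1-\lambda)^{2t}$ arises exactly from $(a_k^{(1)}-a_k^{(2)})^2\le 2(a_k^{(1)})^2+2(a_k^{(2)})^2$ with each $(a_k^{(j)})^2$ bounded by the same Hermite estimate.
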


\begin{proof}
The result follows from proposition \ref{propositionMSE} and \ref{propositionOmega} respectively along with theorem \ref{theoremEWGHMISE2}.
\end{proof}

Note that asymptotically we can choose $\lambda(n) \to 0, \, n \to \infty$ such that\\ $E \left| \hat{F}_{N}(x) - F(x) \right|^2 \to 0$ and $\overline{\omega^{2}} \to 0$ for both the i.i.d. case and the non-identically distributed, independent (change point) case. We do not present the proof here for the sake of brevity.  

\subsection{Quality of Quantile Estimate}\label{qualityQuantile}

\begin{theorem}
Suppose $f(x)$ is supported on $[0,\infty)$. In addition we suppose that the true quantile $x_{p}$ lies between $[x_{p}^{min}, x_{p}^{max}]$ and that $f(x) \geq d, \, d>0$ for $x \in [x_{p}^{min}, x_{p}^{max}]$. Finally we assume that we know $x_{p}^{min}, x_{p}^{max}, d$ allowing us to refine the Gauss-Hermite quantile estimate as follows:

\begin{equation}\label{refinedEst}
\hat{x}_{p} \!=\!\begin{cases}
		\hat{F}_{N}^{-1}(p) &\mbox{if }\hat{F}_{N}^{-1}(p)\mbox{ exists in }[x_{p}^{min},x_{p}^{max}]\\
		& \mbox{ and }\hat{f}_{N}(x)\!\geq\!d,\, x\!\in\![x_{p}^{min}, x_{p}^{max}]\\
		\mbox{undefined} &\mbox{otherwise}.
	\end{cases}
\end{equation}

For well defined $\hat{x}_{p}$:

\begin{equation}
E\left|  x_{p} -\hat{x}_{p}  \right| \leq  \frac{\sqrt{x_{p}}}{d}  \sqrt{\mbox{MISE}(\hat{f}_N(x))}.\nonumber
\end{equation}
 
\end{theorem}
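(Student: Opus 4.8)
The plan is to relate the absolute error in the quantile estimate to the sup-norm (or pointwise) error in the CDF estimate, and then invoke Proposition \ref{propositionMSE} to pass to the MISE. First I would observe that under the hypotheses defining a well-defined $\hat{x}_p$ via \eqref{refinedEst}, we have $\hat{F}_N(\hat{x}_p) = p = F(x_p)$, both $x_p$ and $\hat{x}_p$ lie in $[x_p^{min}, x_p^{max}]$, and $\hat{f}_N \geq d$ on that interval. By the mean value theorem applied to $\hat{F}_N$ between $x_p$ and $\hat{x}_p$, there is a point $\xi$ in the interval with $\hat{F}_N(x_p) - \hat{F}_N(\hat{x}_p) = \hat{f}_N(\xi)(x_p - \hat{x}_p)$, so that
$$
\left| x_p - \hat{x}_p \right| = \frac{\left| \hat{F}_N(x_p) - \hat{F}_N(\hat{x}_p) \right|}{\hat{f}_N(\xi)} \leq \frac{1}{d}\left| \hat{F}_N(x_p) - p \right| = \frac{1}{d}\left| \hat{F}_N(x_p) - F(x_p) \right|,
$$
using $\hat{f}_N(\xi) \geq d$ and $\hat{F}_N(\hat{x}_p) = p = F(x_p)$.

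Next I would take expectations of both sides and apply the Cauchy–Schwarz (or Jensen) inequality in the form $E|Y| \leq \sqrt{E|Y|^2}$ to the right-hand side:
$$
E\left| x_p - \hat{x}_p \right| \leq \frac{1}{d}\, E\left| \hat{F}_N(x_p) - F(x_p) \right| \leq \frac{1}{d}\sqrt{E\left| \hat{F}_N(x_p) - F(x_p) \right|^2}.
$$
Then Proposition \ref{propositionMSE}, which gives $E|\hat{F}_N(x) - F(x)|^2 \leq x\,\mbox{MISE}(\hat{f}_N)$ for fixed $x$, applied at $x = x_p$, yields
$$
E\left| x_p - \hat{x}_p \right| \leq \frac{1}{d}\sqrt{x_p\,\mbox{MISE}(\hat{f}_N(x))} = \frac{\sqrt{x_p}}{d}\sqrt{\mbox{MISE}(\hat{f}_N(x))},
$$
which is the claimed bound.

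The main subtlety I anticipate is the conditioning issue hidden in "for well-defined $\hat{x}_p$": the event that $\hat{x}_p$ is defined is random, so strictly the expectations above should be interpreted conditionally on that event, and one should check that the mean value theorem step is valid pathwise on that event (it is, since $\hat{F}_N$ is continuously differentiable with derivative $\hat{f}_N$, and on the event in question $\hat{f}_N \geq d > 0$ throughout $[x_p^{min}, x_p^{max}]$, forcing $\hat{F}_N$ to be strictly increasing there so that the root $\hat{F}_N^{-1}(p)$ is the unique crossing). A second minor point is that Proposition \ref{propositionMSE} is stated as an unconditional bound on $E|\hat{F}_N(x) - F(x)|^2$; since $|\hat{F}_N(x_p) - F(x_p)|^2 \mathbf{1}\{\hat{x}_p \text{ defined}\} \leq |\hat{F}_N(x_p) - F(x_p)|^2$ pointwise, the bound survives the restriction to the event, so no extra hypotheses are needed. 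Everything else is a routine application of Cauchy–Schwarz and the mean value theorem; the content of the result is really just the clean packaging of the quantile error in terms of the MISE of the underlying density estimator.
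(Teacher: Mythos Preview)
Your proposal is correct and follows essentially the same approach as the paper: a pathwise mean value theorem step on $\hat{F}_N$ to convert the quantile error into a CDF error at $x_p$, the lower bound $\hat{f}_N \geq d$ on the refined event, and then Cauchy--Schwarz/Jensen together with the Proposition~\ref{propositionMSE} bound to pass to $\sqrt{x_p\,\mathrm{MISE}(\hat f_N)}$. The only cosmetic difference is ordering: the paper first bounds $|\hat{F}_N(x_p)-F(x_p)|$ pathwise by $\sqrt{x_p\,\mathrm{ISE}(\hat f_N)}$ and takes expectations last, whereas you take expectations earlier and invoke Proposition~\ref{propositionMSE} directly; both routes use the same ingredients, and your remarks on the conditioning subtlety are apt.
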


\begin{proof}

In the proof of proposition \ref{propositionMSE} we established:

$$\left| \hat{F}_{N}(x) - F(x) \right|^2 \leq x \int_{-\infty}^{\infty} (\hat{f}_{N}(x') - f(x') )^{2} dx'.$$

Thus:

\begin{eqnarray}
\left| \hat{F}_{N}(x_{p}) - F(x_{p}) \right| &\leq& \sqrt{x_{p}} \sqrt{\int_{-\infty}^{\infty} (\hat{f}_{N}(x') - f(x') )^{2} dx'},\\ 
\left| \hat{F}_{N}(x_{p}) - p \right| &\leq& \sqrt{x_{p}} \sqrt{\int_{-\infty}^{\infty} (\hat{f}_{N}(x') - f(x') )^{2} dx'},
\end{eqnarray}

where $x_{p} = q(p)$. Provided $\hat{x}_{p} =\hat{F}_{N}^{-1}(p)$ exists, this implies:

\begin{eqnarray}
\left| \hat{F}_{N}(x_{p}) - \hat{F}_{N}(\hat{x}_{p} ) \right| &\leq& \sqrt{x_{p}} \sqrt{\int_{-\infty}^{\infty} (\hat{f}_{N}(x') - f(x') )^{2} dx'}\\
\left| \int_{\hat{x}_{p}}^{x_{p} }\hat{f}_{N}(x) dx \right| &\leq& \sqrt{x_{p}} \sqrt{\int_{-\infty}^{\infty} (\hat{f}_{N}(x') - f(x') )^{2} dx'}\\
\left|  x_{p} -\hat{x}_{p}  \right| \left|  \hat{f}_{N}(\hat{c})  \right| &\leq& \sqrt{x_{p}} \sqrt{\int_{-\infty}^{\infty} (\hat{f}_{N}(x') - f(x') )^{2} dx'},
\end{eqnarray}

where $\hat{c}$ lies in the interval $(x_{p}, \hat{x}_{p})$ if $\hat{x}_{p} > x_{p}$ or $(\hat{x}_{p},x_{p})$ if $\hat{x}_{p} < x_{p}$. We have applied the mean value theorem and thus $\hat{f}_{N}(\hat{c})$ is equal to the {\it mean} value of $\hat{f}_{N}$ in the interval i.e. $\hat{f}_{N}(\hat{c}) = \frac{1}{x_{p} - \hat{x}_{p} }\int_{x_{p}}^{\hat{x}_{p} }\hat{f}_{N}(x) dx$. \\

Thus, provided $\hat{f}_{N}(\hat{c}) \neq 0$, we have:

\begin{equation}
\left|  x_{p} -\hat{x}_{p}  \right| \leq \frac{\sqrt{x_{p}}}{\left|  \hat{f}_{N}(\hat{c})  \right| }  \sqrt{\mbox{ISE}(\hat{f}_{N})},
\end{equation}

To get a more concrete result we assume we utilize our refined quantile estimator. This estimator is quite natural as it restricts the quantile estimates to those formed from bona-fide probability densities i.e. those $\hat{f}_{N}(x) $ that are non-negative. Moreover, requiring $\hat{f}_{N}(x)>0$ ensures that there is a unique solution for $\hat{x}_{p} =\hat{q}_{N}(p)$. Finally, it allows us to reject quantile estimates that are out of bounds. For the estimates that are not undefined we have, via the Cauchy-Schwarz inequality:

\begin{equation}
E\left|  x_{p} -\hat{x}_{p}  \right| \leq  \frac{\sqrt{x_{p}}}{d}  \sqrt{\mbox{MISE}(\hat{f}_N)}.
\end{equation}

\end{proof}

This result again demonstrates the direct link to the MISE of the probability density function and allows one to, in principle, establish asymptotic properties under certain conditions (similar to above results for the CDF). As a note to the practitioner, this refined quantile estimator can be viewed as one where we reject estimates that are out of bounds and those that are not constructed from bona-fide probability density estimates. We expect this to be an infrequent scenario unless $N$ is too small and the probability density estimator is heavily biased or too few observations have been incorporated into the estimate. We base this expectation on extensive empirical analysis.

\section{Simulation Results} \label{simresults}

In this section we evaluate the behaviour of the Gauss-Hermite (GH) online quantile estimation algorithm presented in section \ref{GHOnlineStat} and the Exponentially Weighted Gauss Hermite (EWGH) algorithm presented in section \ref{EWGH} on simulated data. We also compare the performance of these algorithms to a leading existing algorithm for online quantile estimation, namely Exponentially Weighted Stochastic Approximation (EWSA), which has been shown to be competitive with a number of other algorithms for online quantile estimation \cite{ewsa}. The EWSA algorithm is an exponentially weighted version of the stochastic approximation algorithm of \cite{sa2}. EWSA has two parameters, one that controls the size of the batches of data used to update the quantile estimates (denoted $M$) and a weighting factor $w$ that controls the weighting of updates to the density and quantile estimates in the stochastic approximation scheme (see \cite{ewsa} for a detailed description of the algorithm). \\

In our investigations both i.i.d and non-identically distributed simulated data are considered. In particular, i.i.d data from a chi-squared distribution with five degrees of freedom, $\chi_{5}$, and an exponential distribution with mean and variance equal to one are considered. In the i.i.d. case, the data have static quantiles. \\

In the non-identically distributed setting we consider simulated data drawn from distributions with non-stationary parameters. In particular we consider simulated data drawn from a normal distribution with variance one and a mean of $.006j$ at update $j$ to simulate data with a linear trend in the mean. We then consider an exponential distribution with mean and variance equal to $1+0.006j$ at update $j$ to simulate data with a non-stationary mean and standard deviation. This is an interesting test in that the dispersion of the data increases as the number of observations grows. These non-identically distributed simulated data have dynamic quantiles that change over time. Both of these models were studied in the simulations of \cite{ewsa}. \\

The choice of these test distributions can be motivated by the diversity of their properties and the frequent appearance of these distributions in statistical applications. For each distribution, three quantiles are estimated, namely the 0.5 (median), 0.9 and 0.99 quantiles following \cite{ewsa}. Note that in the case of the Gauss-Hermite based algorithms arbitrary quantiles can be obtained at any point in time whereas algorithms such as the EWSA algorithm require the quantiles to be specified upfront. Online, arbitrary quantiles are not available for algorithms such as EWSA. The maximum number of observations, $m=4000$ per run in the i.i.d. case and $m=1000$ in the non-identically distributed case (corresponding to the maximum number of \textit{updates} in \cite{ewsa}). There were 1000 runs in total for each distribution. We utilise the empirical root mean squared error (RMSE) to evaluate the performance of the online quantile estimation algorithms in estimating the quantile $q$ after $j$ observations (where we denote the quantile estimate $\hat{q}_j$). The RMSE at updating step $j$ is defined by:

\begin{equation}
		\mbox{RMSE}(\hat{q}_j) = \left[ E(\hat{q}_j - q_j)^{2}\right]^{\frac{1}{2}},
\end{equation}

and is estimated by averaging the squared difference between $q_j$ and $\hat{q}_j$ over the 1000 simulation runs and then taking the square root. As a measure of the error in the RMSE estimate at updating step $j$, we construct a $95 \%$ percentile bootstrap confidence interval:

$$(\mbox{RMSE}^{*}_{(0.025)}, \, \mbox{RMSE}^{*}_{(0.975)}),$$

where $\mbox{RMSE}^{*}_{(0.025)}$  and $\mbox{RMSE}^{*}_{(0.975)}$ denote the $0.025$ and $0.975$ quantiles of the bootstrap estimates of the RMSE respectively ($1000$ bootstrap estimates were utilised in constructing the intervals for each $j$).\\

In \cite{ewsa} a range of values for the EWSA algorithm parameters $M$ and $w$ were investigated for performance. It was demonstrated that $w = 0.05$ gives the best trade-off between bias and long-run variability. In addition, the value of $M=15$ was shown to be superior in long-run performance to other values of $M$ tested. Since we evaluate the same i.i.d. and non-independently distributed data stream models as [11] in our simulations (except for the addition of the chi-squared i.i.d. model which is qualitatively similar), we regard these parameters as principled choices for good performance of the EWSA algorithm in these settings.\\

In our simulation studies for the GH and EWGH algorithms we demonstrate the effectiveness of the algorithms over a range of values for $N$ and $\lambda$ and identify particularly good choices. Concretely for the GH algorithm we consider $N=4,6,8,10,12$ at $m=100,400,4000$ observations to evaluate the bias-variance trade-off at different numbers of observations. We also present RMSE curves where we compare the results to the EWSA algorithm. For the EWGH algorithm we consider $\lambda=0.01, 0.05, 0.1$ ($N=6$) at $m=100,400,1000$ observations and present RMSE curves comparing to the GH ($N=6)$ and EWSA algorithms.\\

Finally, in order to practically apply the Gauss-Hermite based algorithms more effectively, an online standardisation procedure was applied to the data as outlined in appendix \ref{standardization}. This is not a pre-processing step (this would defeat the purpose of an online algorithm) but rather part of the online algorithm. Also, this procedure may not always be necessary, depending on the application. It is also noteworthy that we utilise the alternate CDF estimator (\ref{cumulativedistalt}) in estimating quantiles for the GH and EWGH algorithms as we have found this estimator to yield better results empirically.

\subsection{IID Data}

For the i.i.d. simulated data, we evaluate the performance of the GH algorithm for various choices of $N$ and we use the EWGH algorithm ($\lambda=0.05$) and the EWSA algorithm ($M=15$, $w=0.05$) for comparison. The GH algorithm performs well for most choices of $N$, illustrating that the effectiveness of the algorithm is not critically dependent on this choice. That said, the value of $N=6$ appears to be the best choice in most cases when viewed from a RMSE perspective. In addition, $N=6$ is a good choice from a computational speed and efficiency viewpoint since there are fewer coefficients to store and update than for higher choices of $N$.  When compared to the EWSA algorithm, the GH algorithm performs better in most cases. The EWGH algorithm (with $\lambda=0.05$) has a larger error than the GH algorithm in all cases. This is not entirely surprising however. The EWGH algorithm trades extra variance in the estimates of the coefficients for the ability to track dynamic quantiles. The individual tests are discussed below.

\subsubsection{The Chi-Squared Distribution}
The chi-squared distribution appears frequently in statistics and is a useful test distribution in that it has support on the half real line $[0,\infty)$ and not the full real line. This distribution is used to simulate skewed data. This is a challenging test case for the Gauss-Hermite based algorithms since the chi-squared distribution is a considerable departure from the normal distribution which undergirds the Gauss-Hermite expansion. We consider the chi-squared distribution with five degrees of freedom in particular. See figure \ref{chi2figsBiasVar} for plots of the GH RMSE for $N=4,6,8,10,12$ at $m=100,400,4000$ observations. The results for the EWGH and EWSA algorithms are also included for comparison. These figures illustrate that good results are achieved for all values of $N$ considered for the GH algorithm. $N=6$ appears to provide the best results. It is interesting to note that upon first inspection, it is counter-intuitive that the RMSE increases at higher values of $N$ even when the number of observations is large. We suspect that this is due to the additional bias introduced by the online standardisation procedure as well as by using the CDF estimator (\ref{cumulativedistalt}) instead of  (\ref{cumulativedist}). See figure \ref{chi2figs} for a comparison of the GH algorithm ($N=6$) and the EWSA algorithm. Note that the EWSA results were excluded from figure \ref{chi2figsBiasVar}\subref{chi2NinetyNine4000} and figure \ref{chi2figs}\subref{chi2NinetyNineCurve} since they were disproportionately large and would obscure the GH results when presented on a common scale. This may indicate instability in the EWSA algorithm for estimating tail quantiles such as $p=0.99$.

\subsubsection{The Exponential Distribution}
The exponential distribution is another commonly occurring distribution. The distribution also has support on the half real line $[0,\infty)$ and is used to simulate skewed data. The exponential distribution is an even more challenging test case for the Gauss-Hermite based algorithms than the chi-squared distribution. This is due to the fact that the exponential distribution's mode occurs at the start of its domain which is to be contrasted with the mode of the normal distribution which is equal to its median. We consider the exponential distribution with mean and variance equal to one. See figure \ref{expfigsBiasVar} for plots of the GH RMSE for $N=4,6,8,10,12$ at $m=100,400,4000$ observations. The results for the EWGH and EWSA algorithms are also included for comparison. These figures again illustrate that good results are achieved for all values of $N$ considered and that the value of $N=6$ appears to provide the best results. See figure \ref{expfigs} for a comparison of the GH algorithm ($N=6$) and the EWSA algorithm. Note that the EWSA results were excluded from figure \ref{expfigsBiasVar}\subref{expNinetyNine4000} and figure \ref{expfigs}\subref{expNinetyNineCurve} since they were again disproportionately large and would obscure the GH results when presented on a common scale.

\subsection{Non-identically Distributed Data}

For the non-identically distributed simulated data, we evaluate the performance of the EWGH algorithm for various choices of $\lambda$ and we use the GH algorithm ($N=6$) and the EWSA algorithm ($M=15$, $w=0.05$) for comparison. Motivated by the analysis of the GH algorithm in the i.i.d. setting we set $N=6$ for all tests of the EWGH algorithm. The EWGH algorithm compares favourably with the EWSA algorithm for all values of $\lambda$, illustrating that the effectiveness of the algorithm is not critically dependent on this choice in the models we studied. The GH and EWGH algorithms outperform the EWSA algorithm in almost all cases. The dynamic quantile tracking ability of the EWGH algorithm is apparent in that it achieves better results than the GH algorithm when using an appropriate value of $\lambda$. The individual tests are discussed below.

\subsubsection{Normal Distribution with Drift}
The normal distribution is ubiquitous. We consider simulated data drawn from a normal distribution with variance one and a mean of $.006j$ at update $j$ to simulate data with a linear trend in the mean. See figure \ref{normalDriftfigsBiasVar} for plots of the EWGH RMSE for $\lambda=0.01,0.05,0.1$ at $m=100,400,1000$ observations. The results for the GH and EWSA algorithms are also included for comparison. These figures illustrate that competitive results are achieved for all values of $\lambda$ considered for the EWGH algorithm. The value of $\lambda=0.01$ appears to provide the best results. See figure \ref{normalDriftfigs} for the RMSE curve for the EWGH algorithm with $N=6, \, \lambda=0.01$ compared to the GH and EWGH algorithms.

\subsubsection{Exponential Distribution with Drift}
In this section we consider an exponential distribution with mean and variance equal to $1+0.006j$ at update $j$ to simulate data with a non-stationary mean and standard deviation. The dispersion of the data increases as the number of observations grows. See figure \ref{expDriftfigsBiasVar} for plots of the EWGH RMSE for $\lambda=0.01,0.05,0.1$ at $m=100,400,1000$ observations. The results for the GH and EWSA algorithms are also included for comparison. These figures illustrate that competitive results are again achieved for all values of $\lambda$ considered for the EWGH algorithm. The value of $\lambda=0.01$ appears to provide the best results. See figure \ref{expDriftfigs} for the RMSE curve for the EWGH algorithm with $N=6, \, \lambda=0.01$ compared to the GH and EWGH algorithms.

\begin{sidewaysfigure}
\begin{figure}[H]
\subfloat[m=100, p=0.5]{
  \includegraphics[width=62mm]{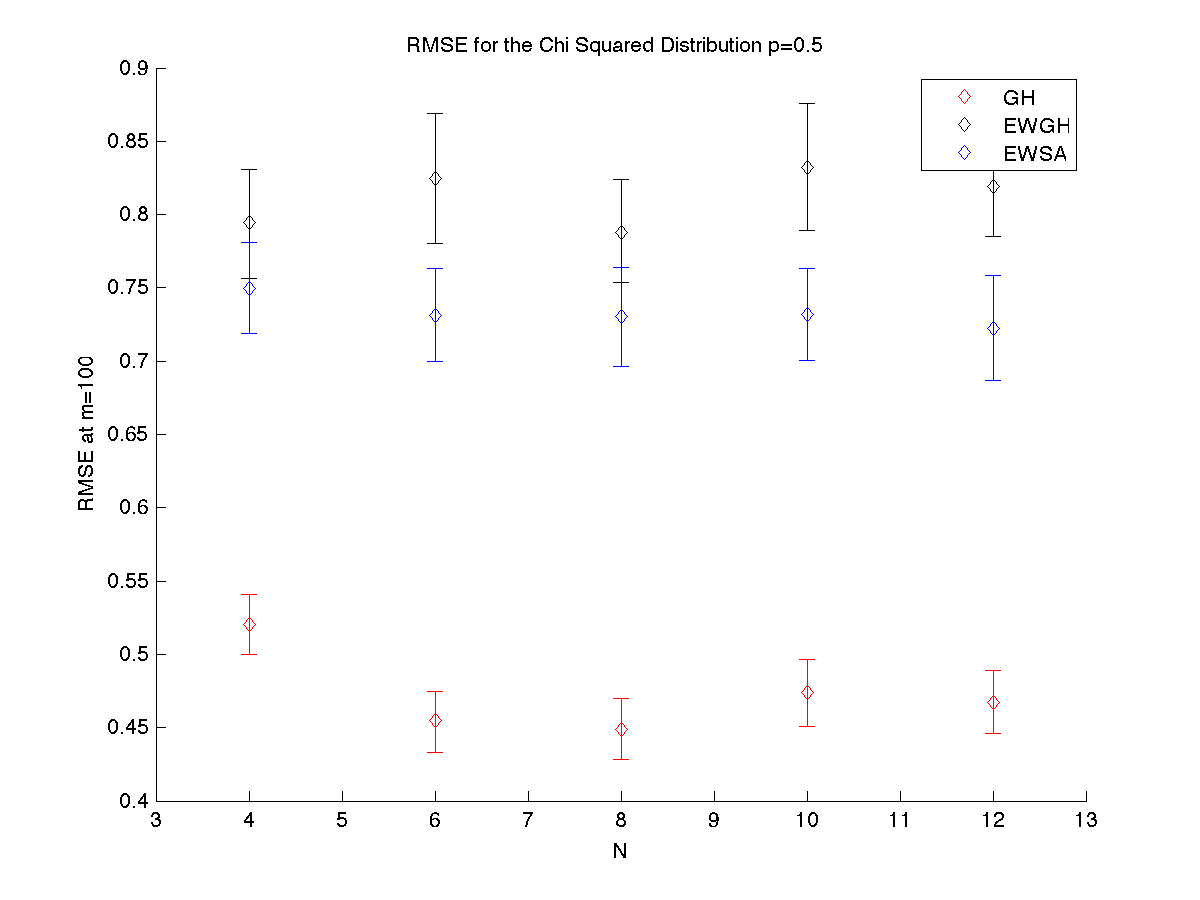}
}
\subfloat[m=400, p=0.5]{
  \includegraphics[width=62mm]{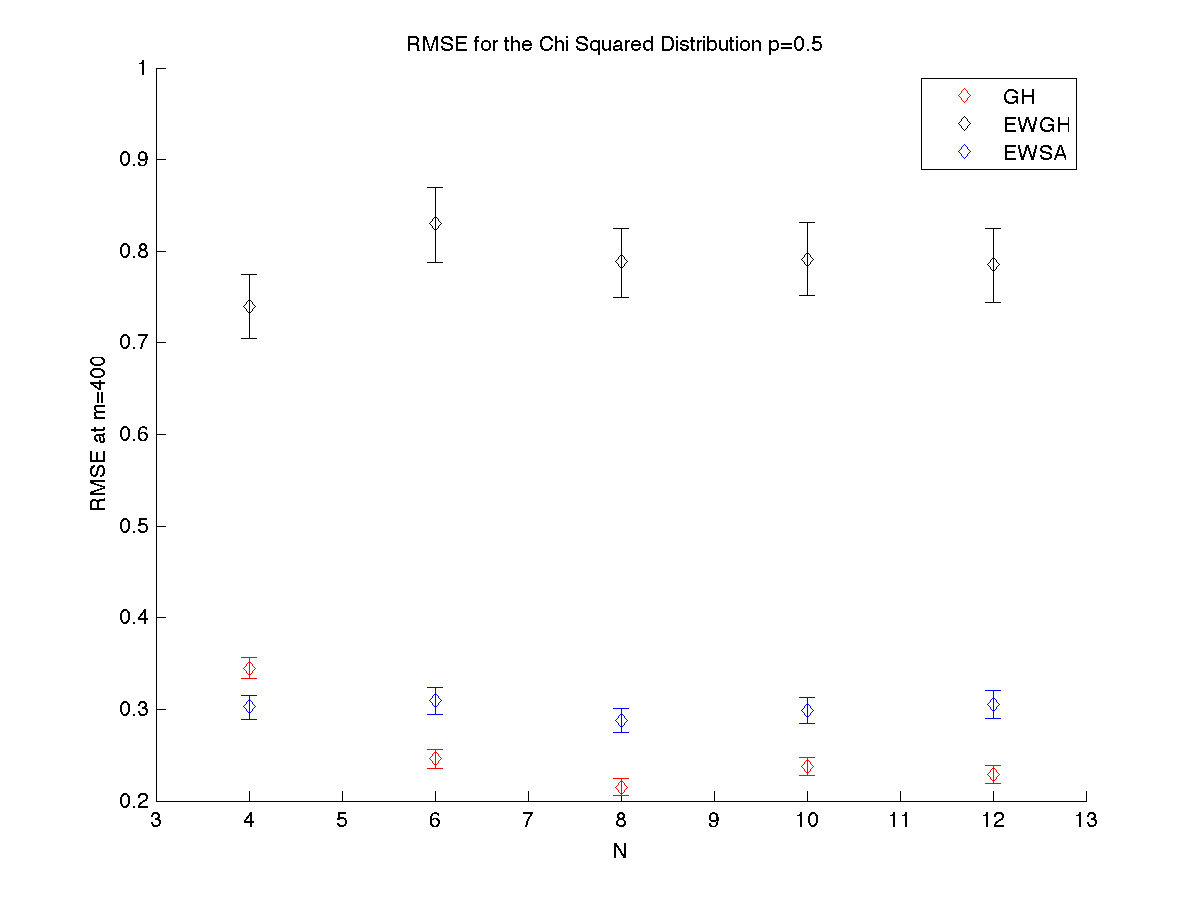}
}
\subfloat[m=4000, p=0.5]{
  \includegraphics[width=62mm]{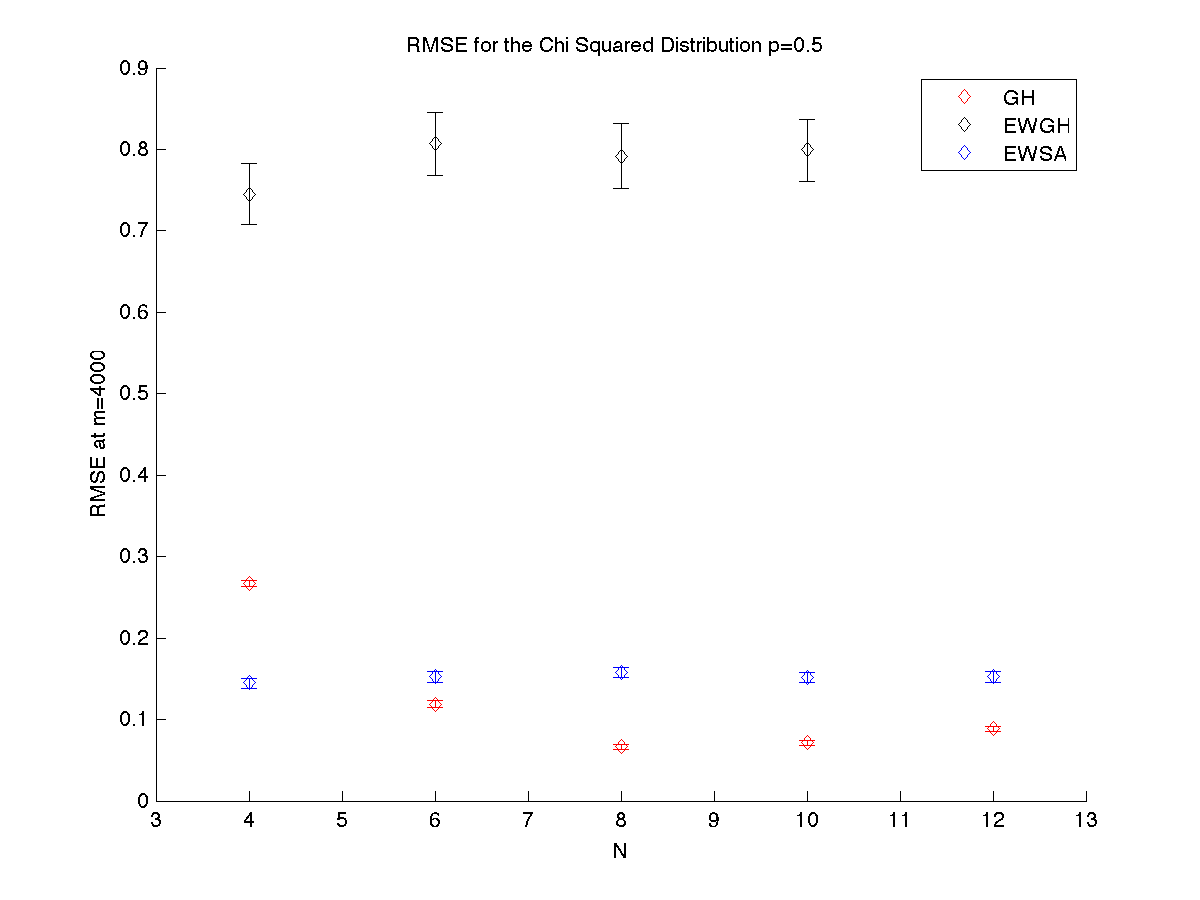}
}
\hspace{0mm}
\subfloat[m=100, p=0.9]{
  \includegraphics[width=62mm]{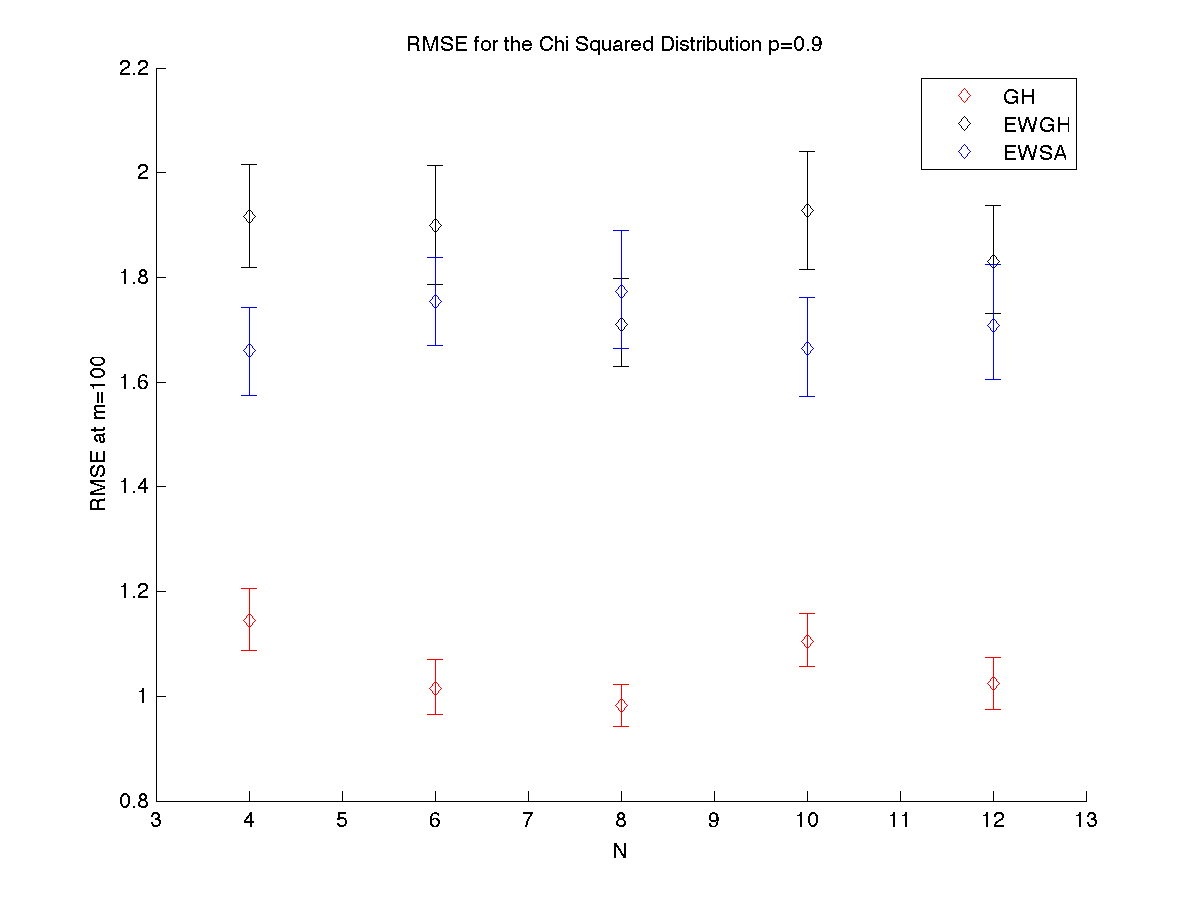}
}
\subfloat[m=400, p=0.9]{
  \includegraphics[width=62mm]{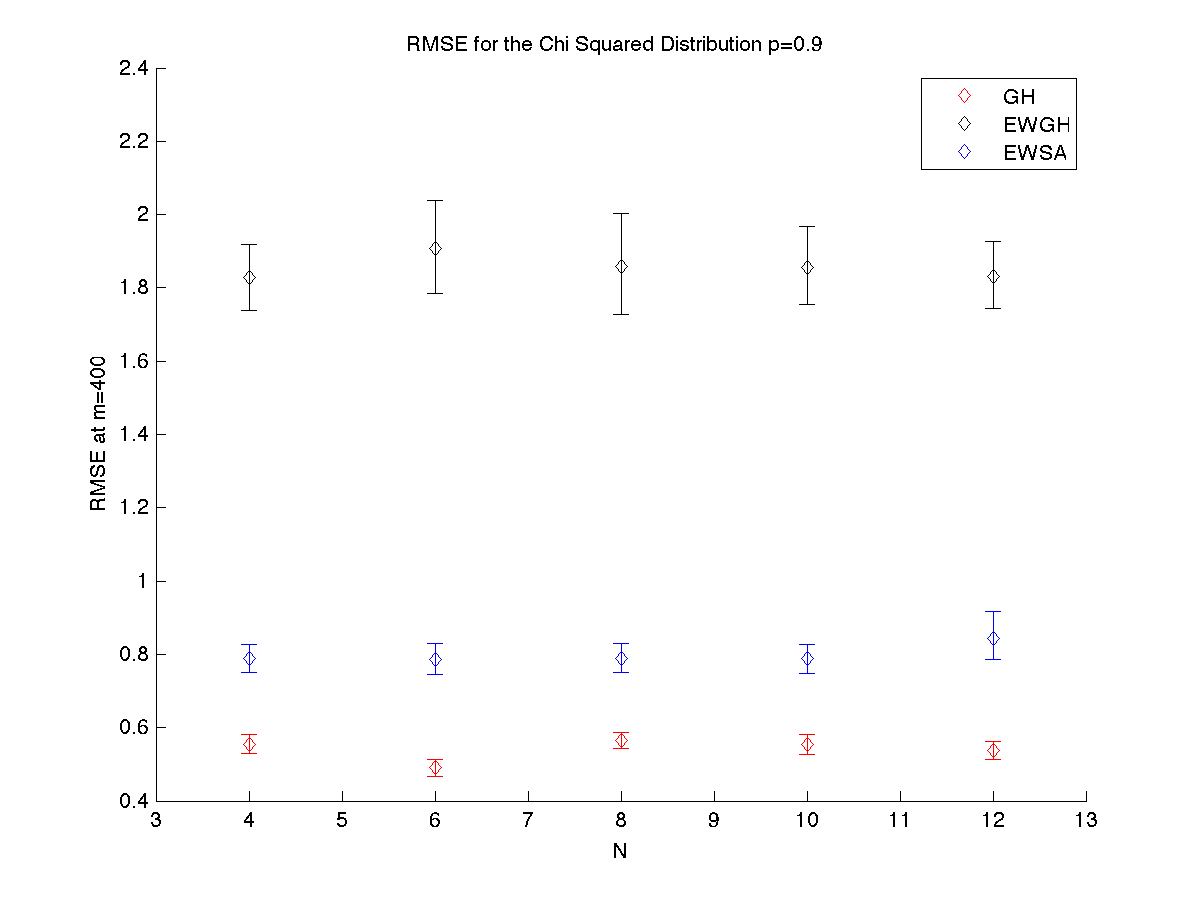}
}
\subfloat[m=4000, p=0.9]{
  \includegraphics[width=62mm]{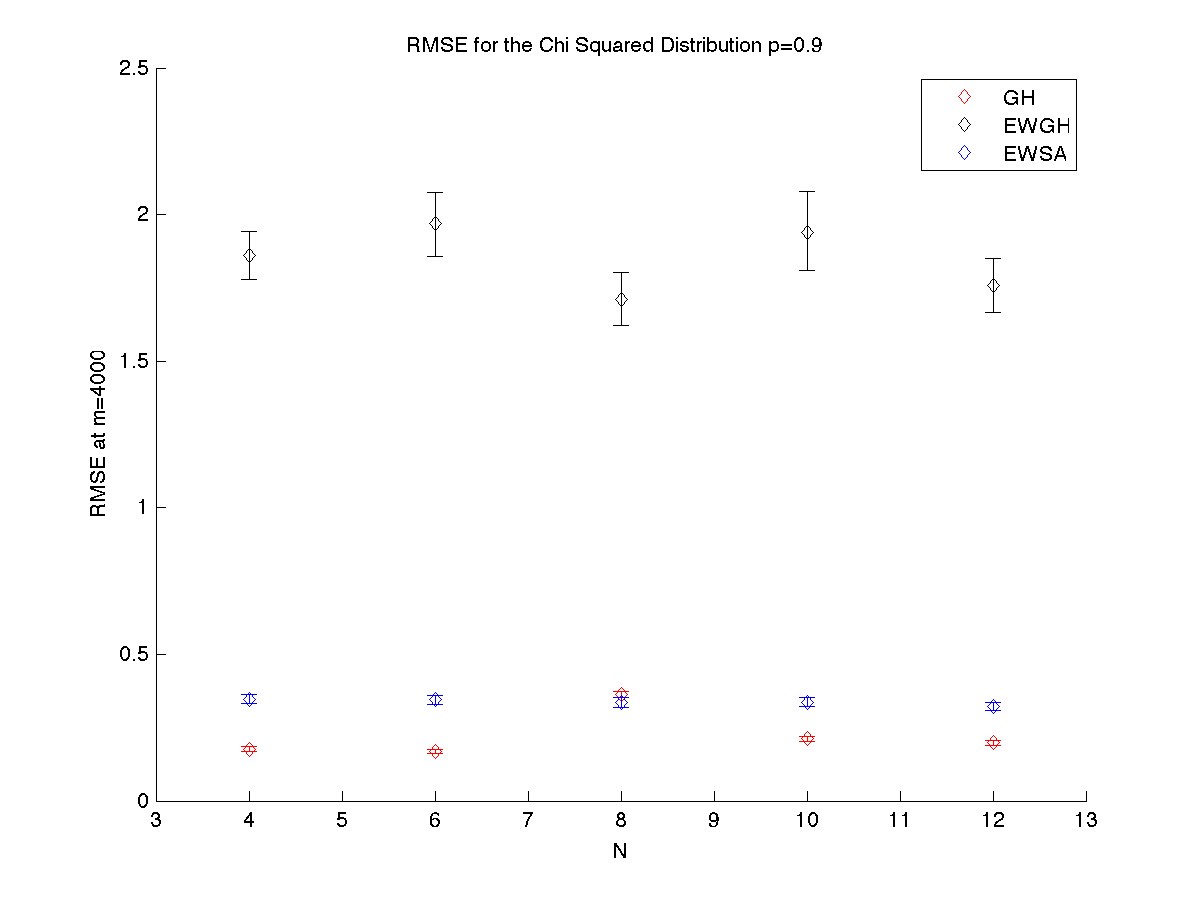}
}
\hspace{0mm}
\subfloat[m=100, p=0.99]{
  \includegraphics[width=62mm]{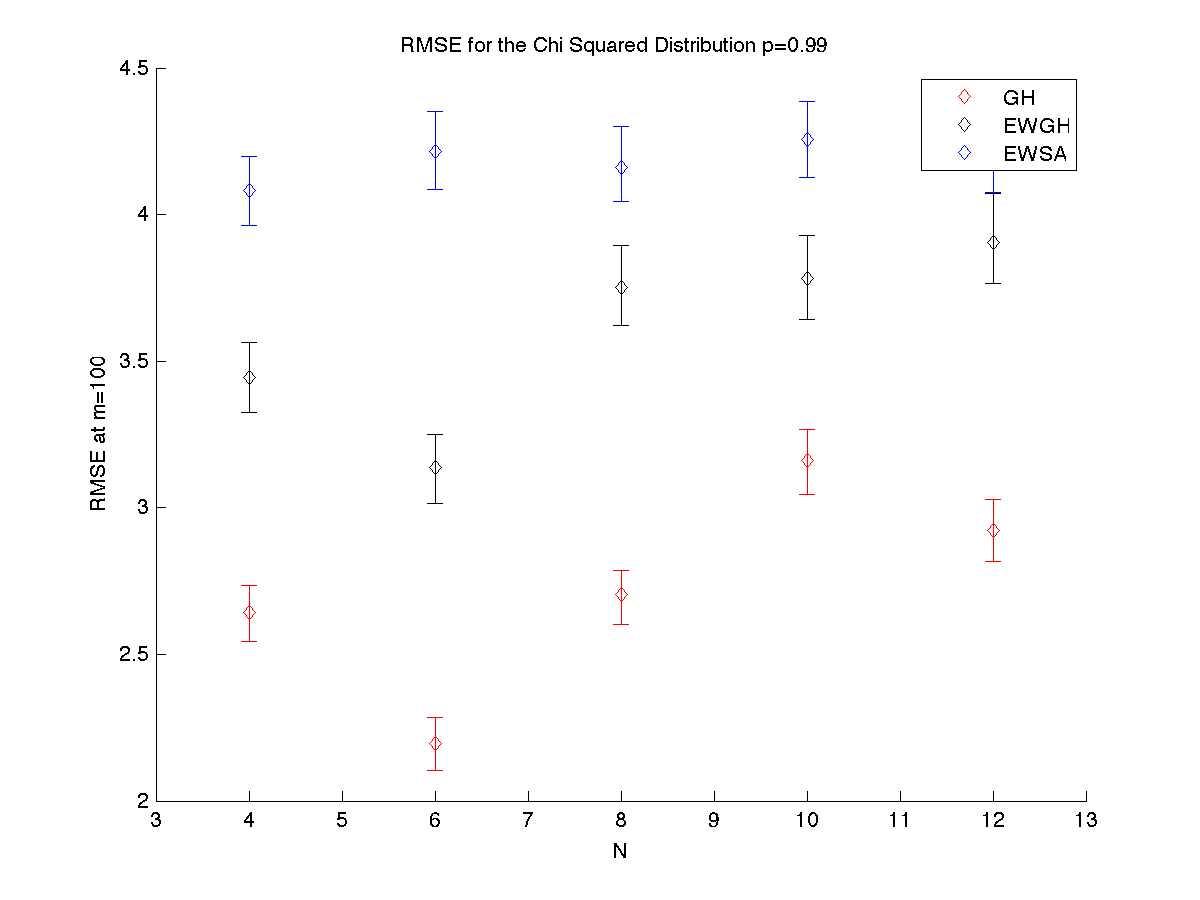}
}
\subfloat[m=400, p=0.99]{
  \includegraphics[width=62mm]{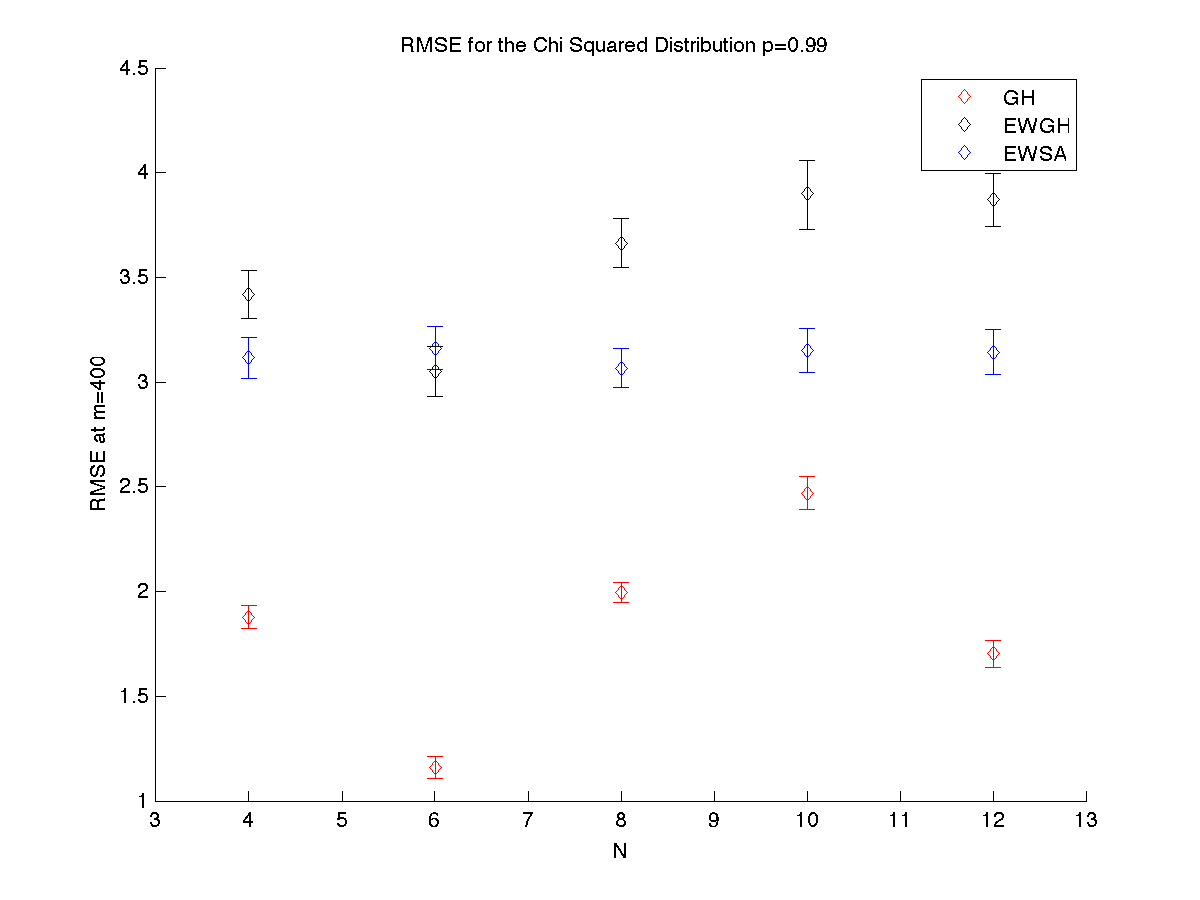}
}
\subfloat[m=4000, p=0.99]{
  \includegraphics[width=62mm]{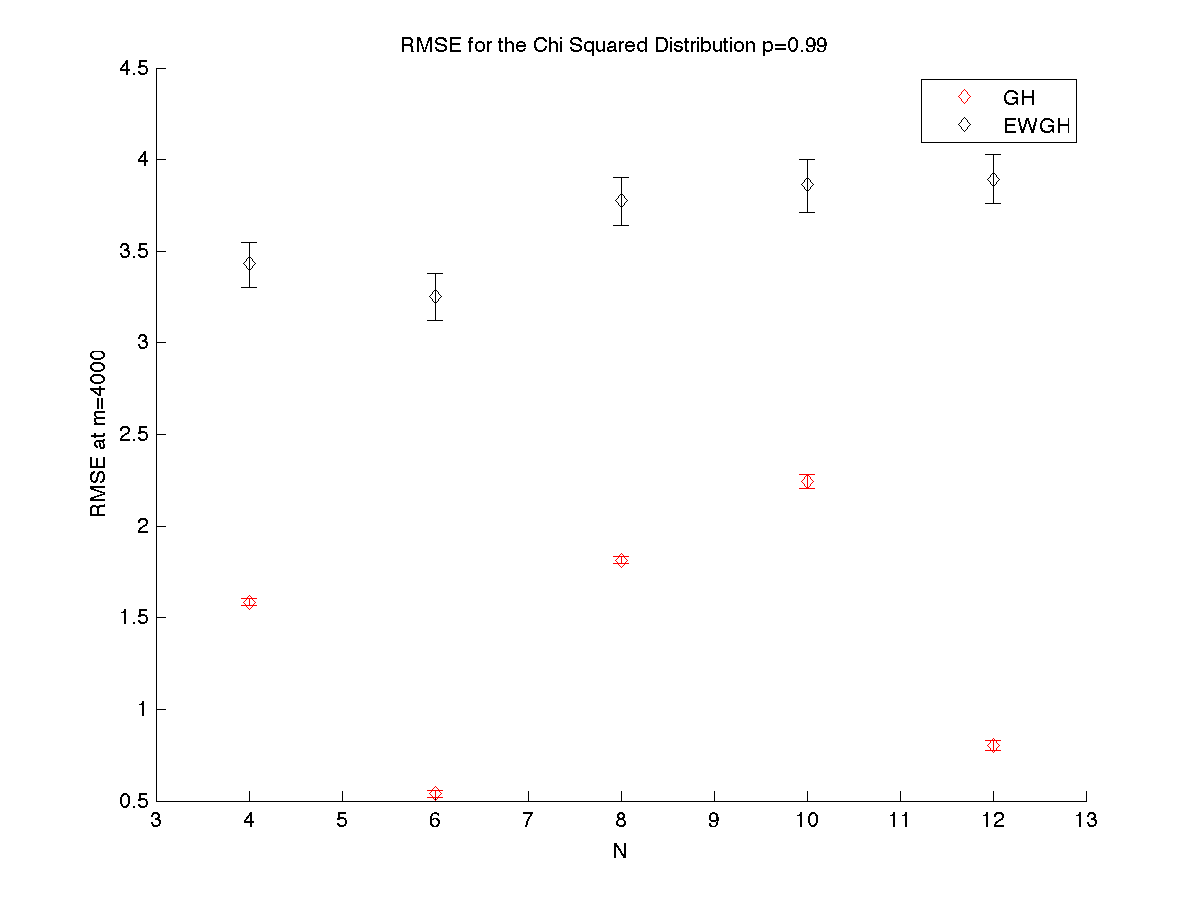}\label{chi2NinetyNine4000}
}
\caption{Chi-Squared Distribution: GH RMSE for $N=4,6,8,10,12$ at $m=100,400,4000$ observations for the $p=0.5, 0.9, 0.99$ quantiles (including 95\% percentile bootstrap confidence intervals). The exact quantiles are 4.3515, 9.2364 and 15.0863 for comparison.\label{chi2figsBiasVar}}
\end{figure}
\end{sidewaysfigure}

\begin{sidewaysfigure}
\begin{figure}[H]
\subfloat[m=100, p=0.5]{
  \includegraphics[width=62mm]{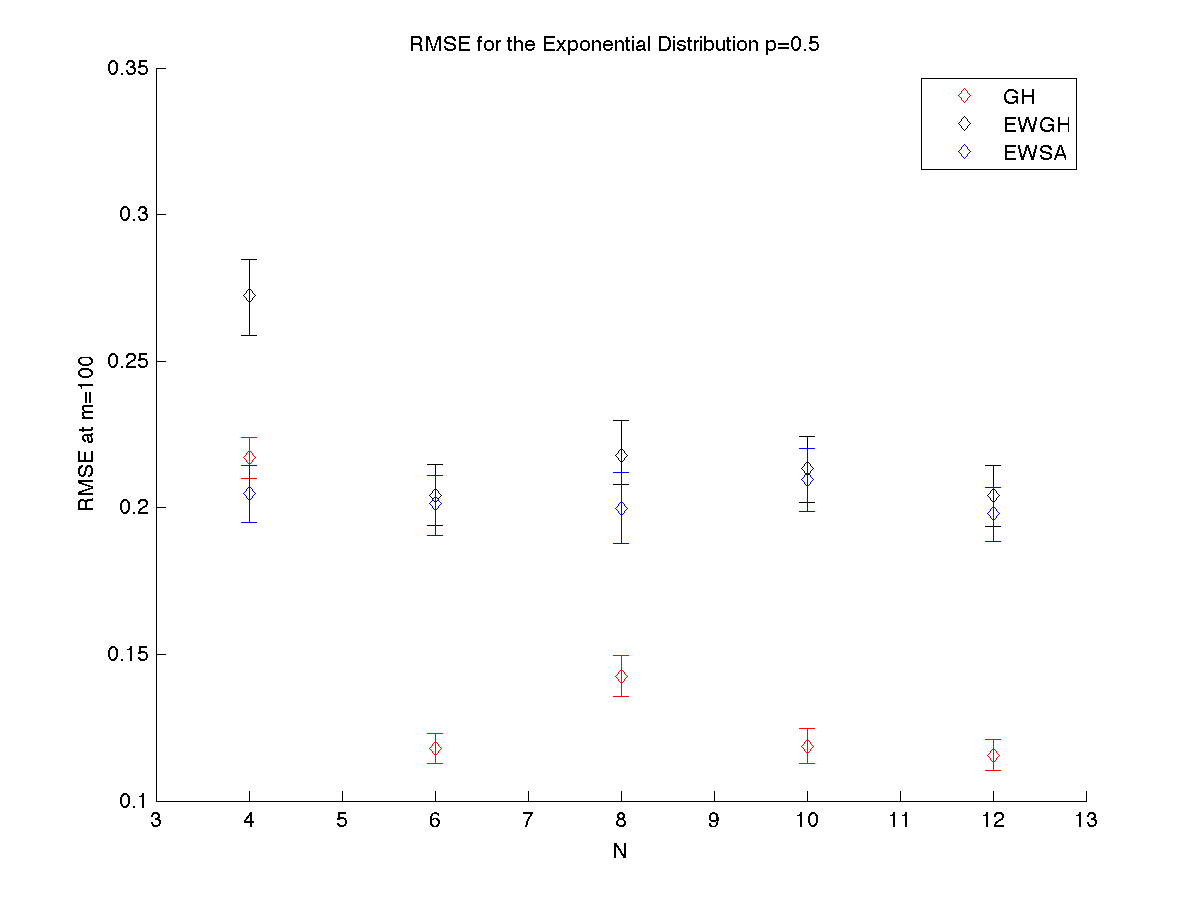}
}
\subfloat[m=400, p=0.5]{
  \includegraphics[width=62mm]{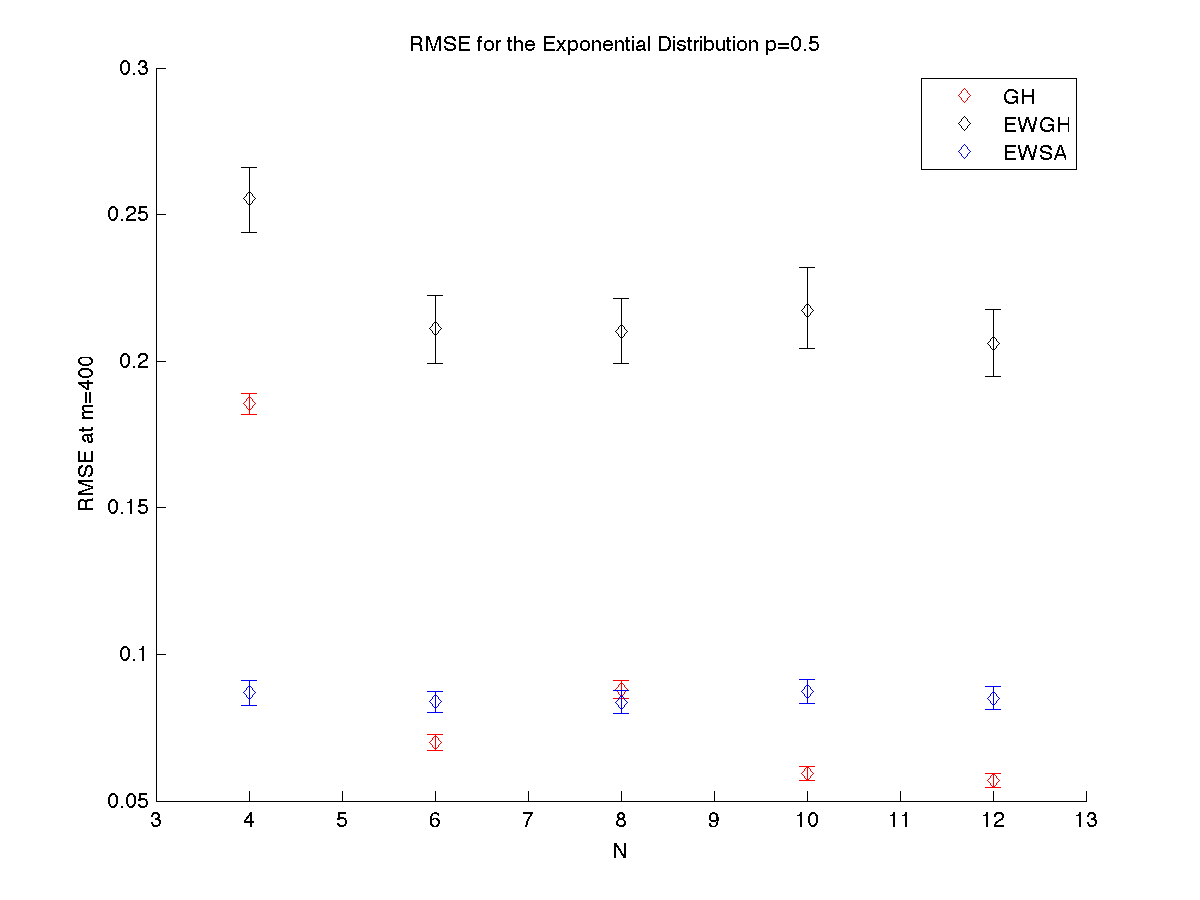}
}
\subfloat[m=4000, p=0.5]{
  \includegraphics[width=62mm]{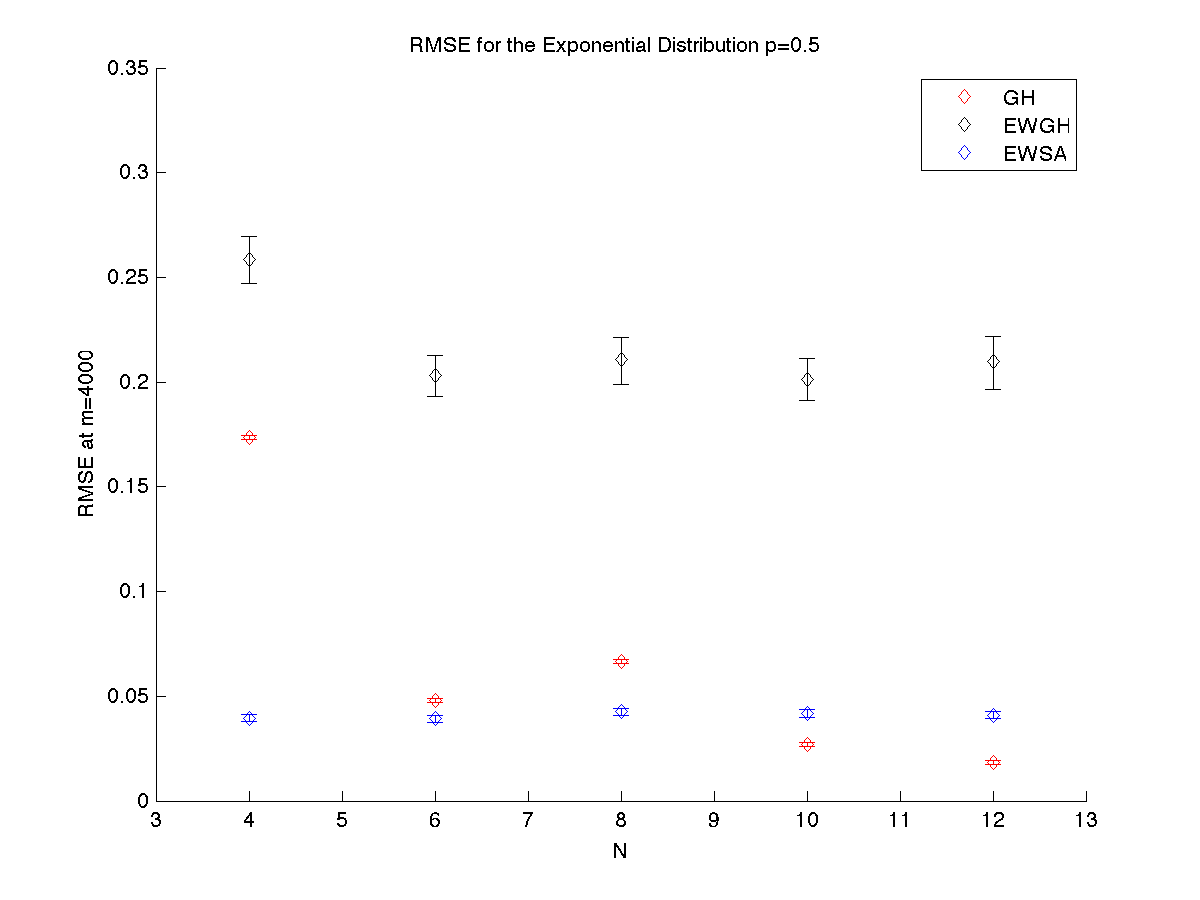}
}
\hspace{0mm}
\subfloat[m=100,p=0.9]{
  \includegraphics[width=62mm]{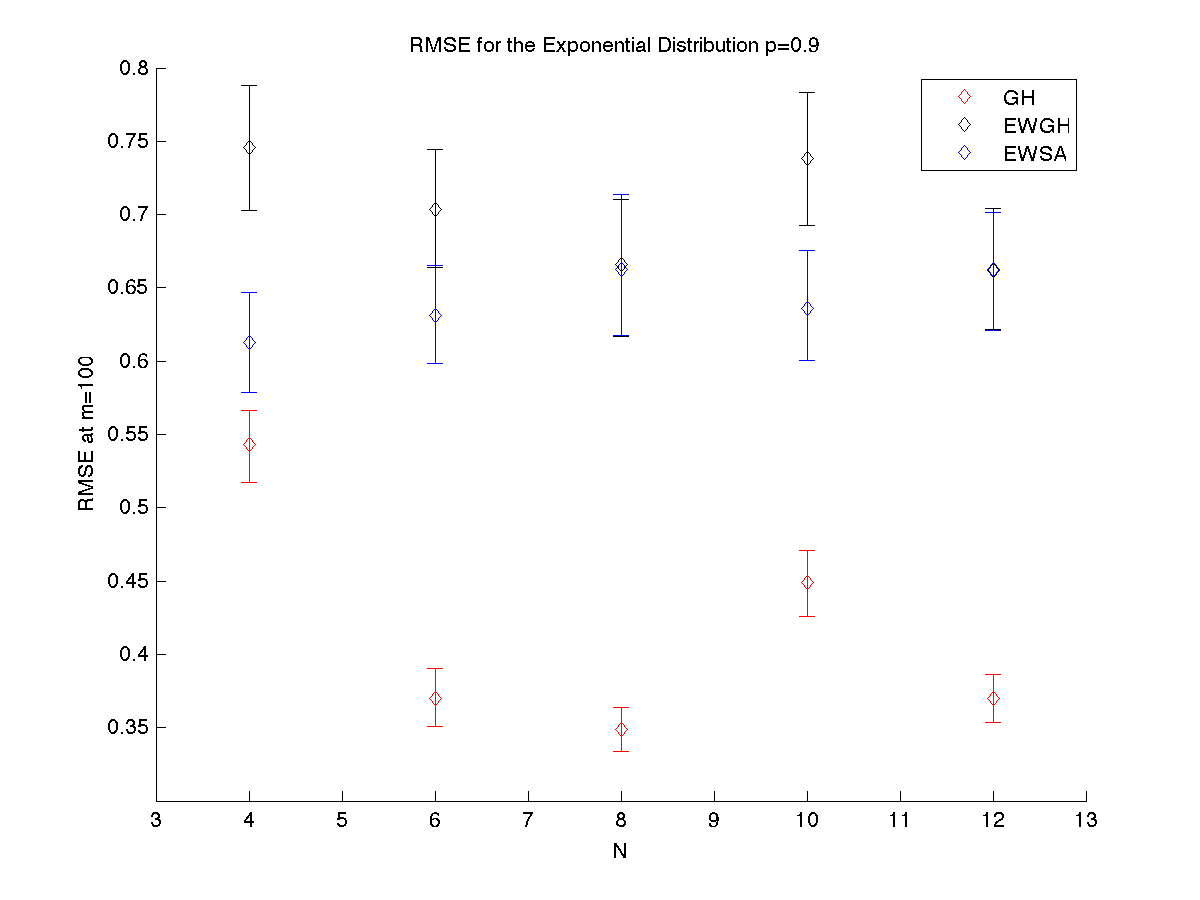}
}
\subfloat[m=400, p=0.9]{
  \includegraphics[width=62mm]{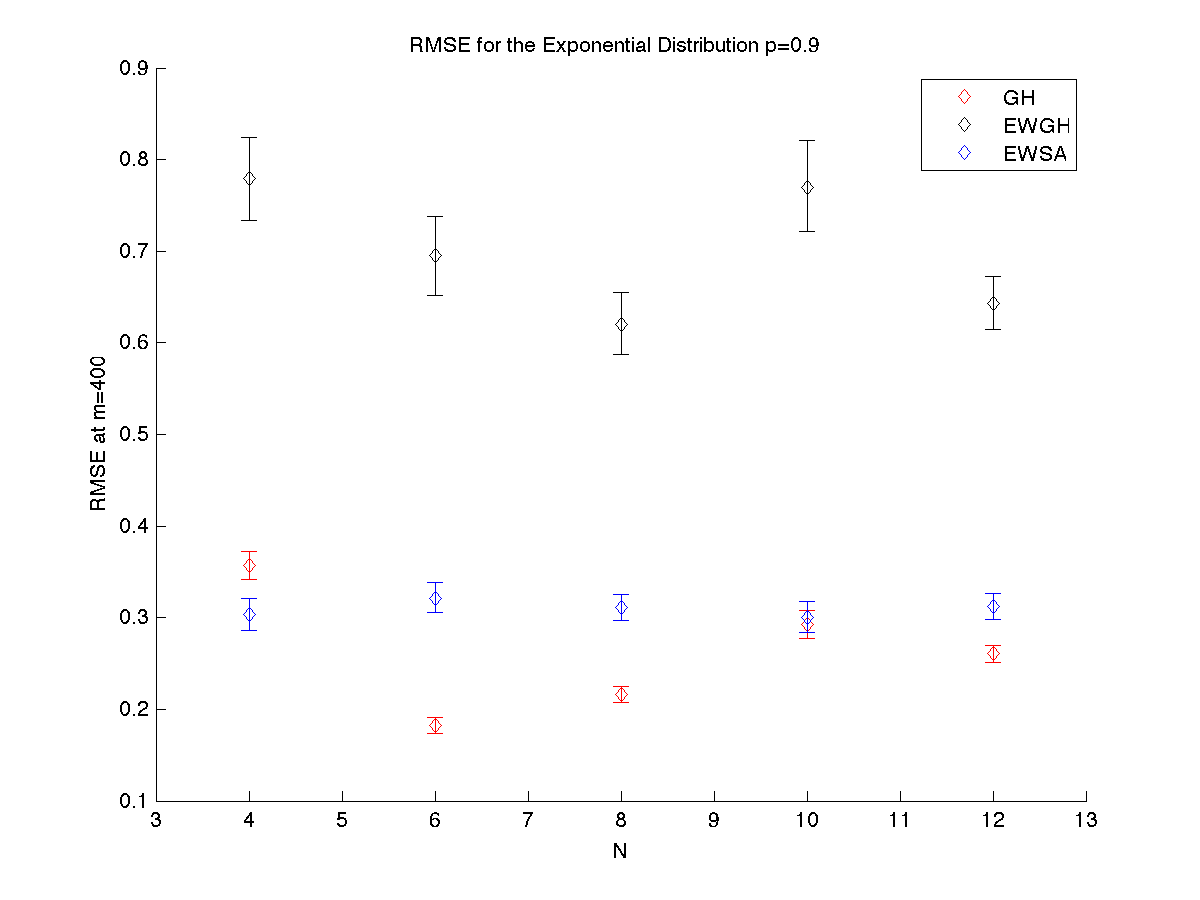}
}
\subfloat[m=4000, p=0.9]{
  \includegraphics[width=62mm]{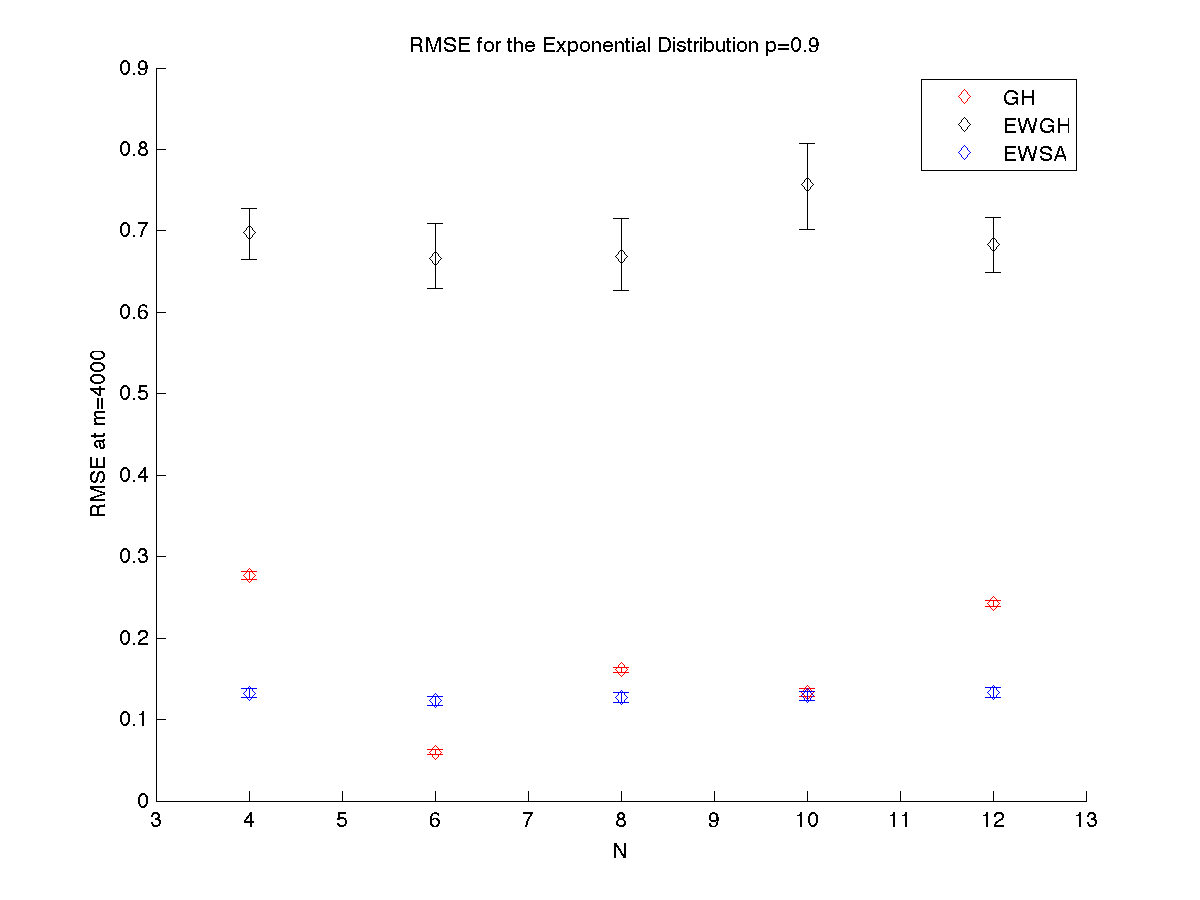}
}
\hspace{0mm}
\subfloat[m=100, p=0.99]{
  \includegraphics[width=62mm]{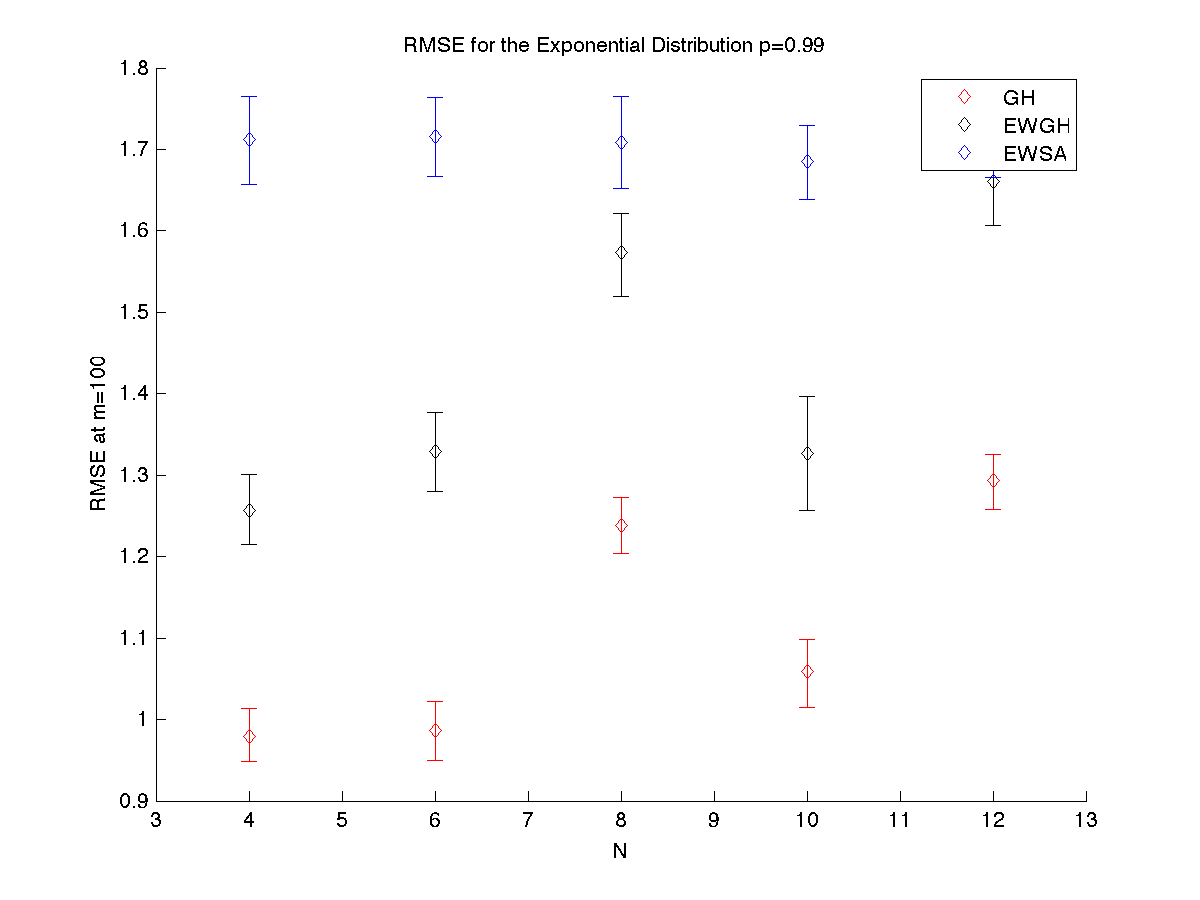}
}
\subfloat[m=400, p=0.99]{
  \includegraphics[width=62mm]{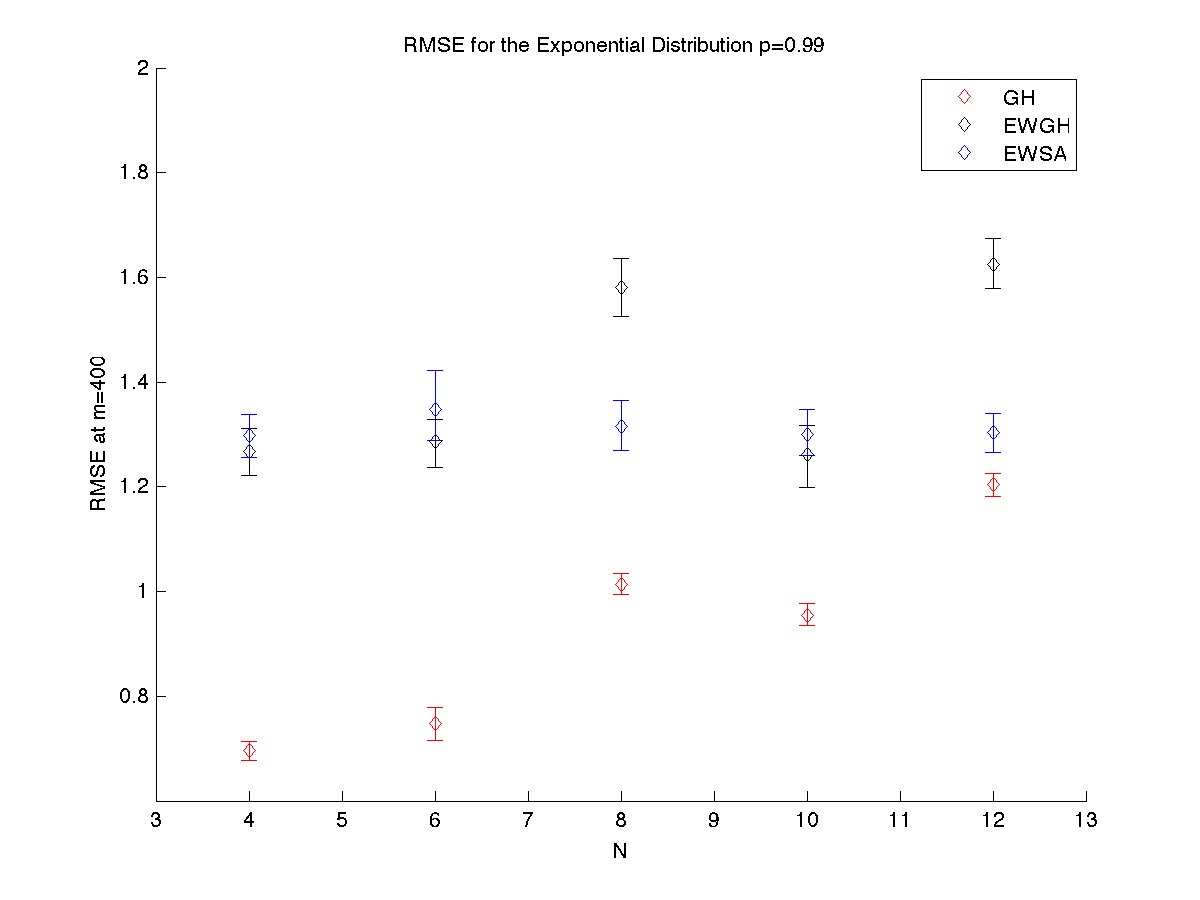}
}
\subfloat[m=4000, p=0.99]{
  \includegraphics[width=62mm]{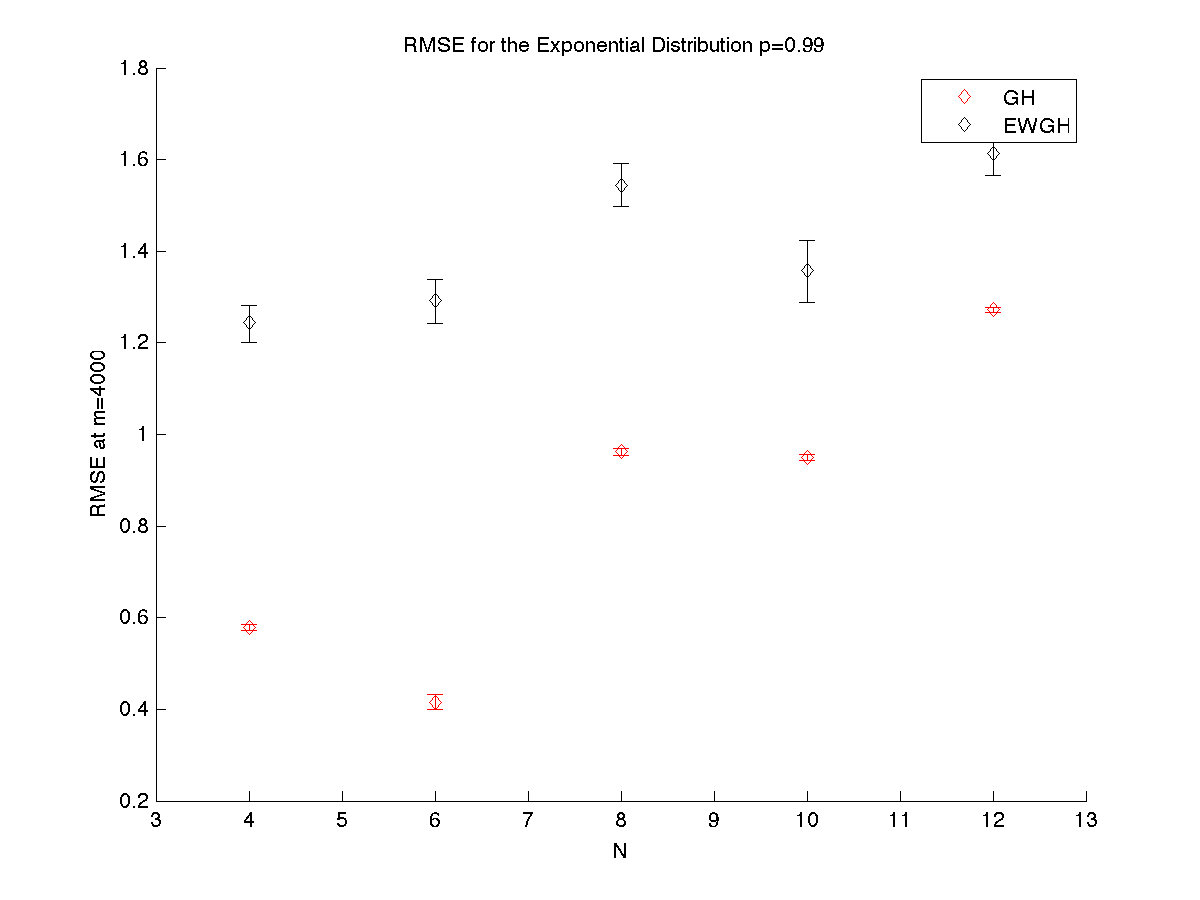} \label{expNinetyNine4000}
}
\caption{Exponential Distribution: GH RMSE for $N=4,6,8,10,12$ at $m=100,400,4000$ observations for the $p=0.5, 0.9, 0.99$ quantiles (including 95\% percentile bootstrap confidence intervals). The exact quantiles are 0.6931, 2.3026 and 4.6052 for comparison.\label{expfigsBiasVar}}
\end{figure}
\end{sidewaysfigure}

\begin{sidewaysfigure}
\begin{figure}[H]
\subfloat[p=0.5]{
  \includegraphics[width=56mm]{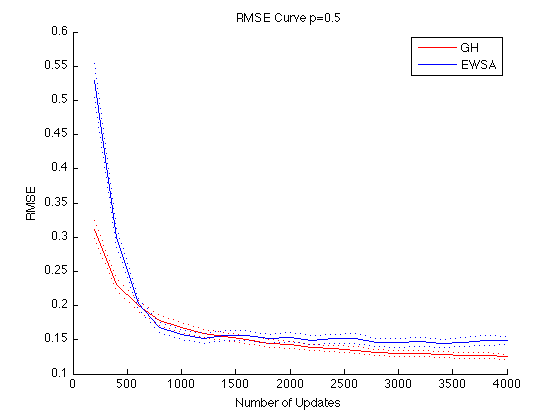}
}
\subfloat[p=0.9]{
  \includegraphics[width=56mm]{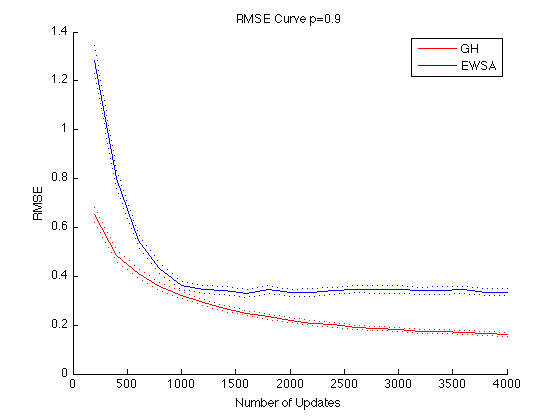}
}
\subfloat[p=0.99]{
  \includegraphics[width=75mm, height=42mm]{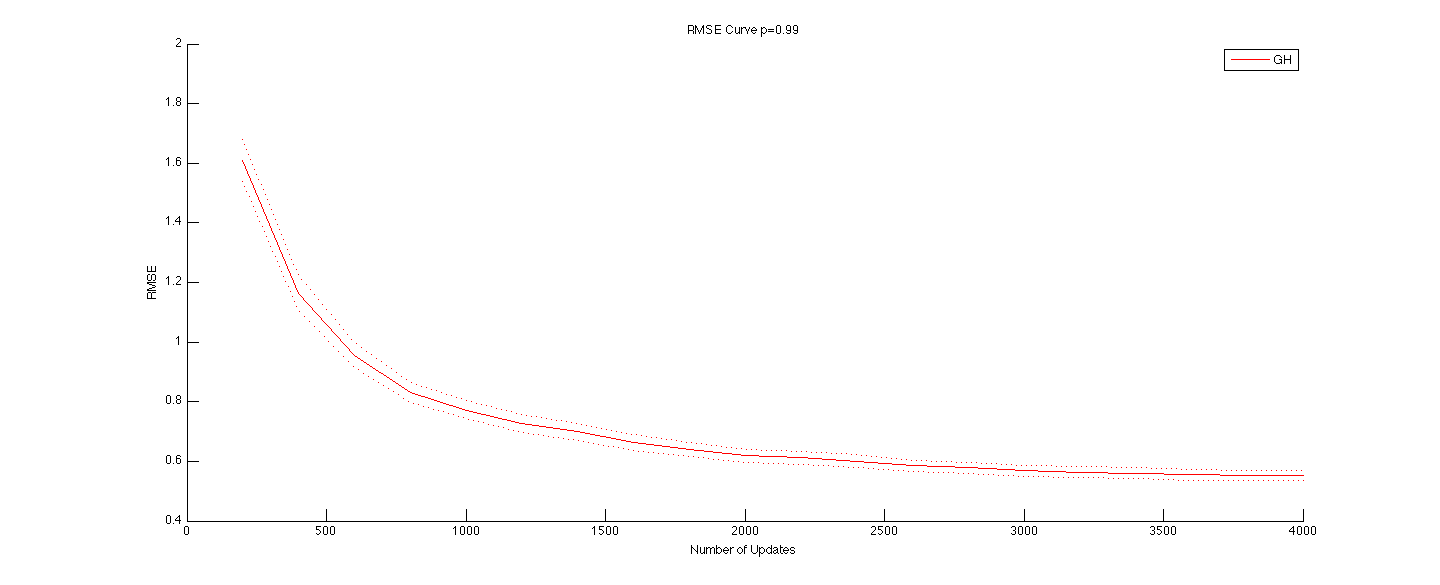}\label{chi2NinetyNineCurve}
}
\caption{RMSE curves associated with the chi-squared distribution with five degrees of freedom for the 0.5, 0.9 and 0.99 quantiles (including 95\% percentile bootstrap confidence intervals). The exact quantiles are 4.3515, 9.2364 and 15.0863 respectively. The Gauss-Hermite algorithm utilises $N=6$.\label{chi2figs}}
\end{figure}

\begin{figure}[H]
\subfloat[p=0.5]{
  \includegraphics[width=56mm]{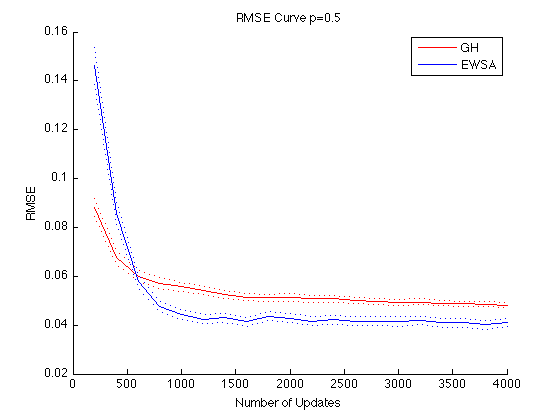}
}
\subfloat[p=0.9]{
  \includegraphics[width=56mm]{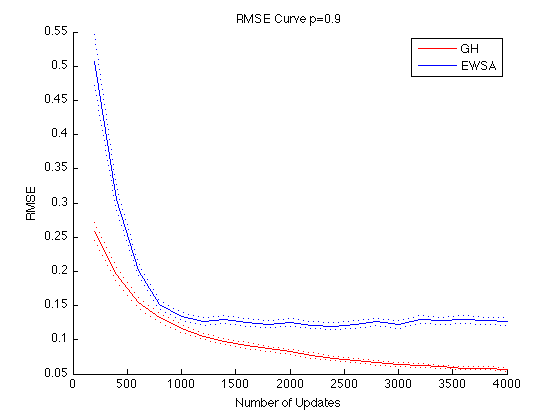}
}
\subfloat[p=0.99]{
  \includegraphics[width=75mm, height=42mm]{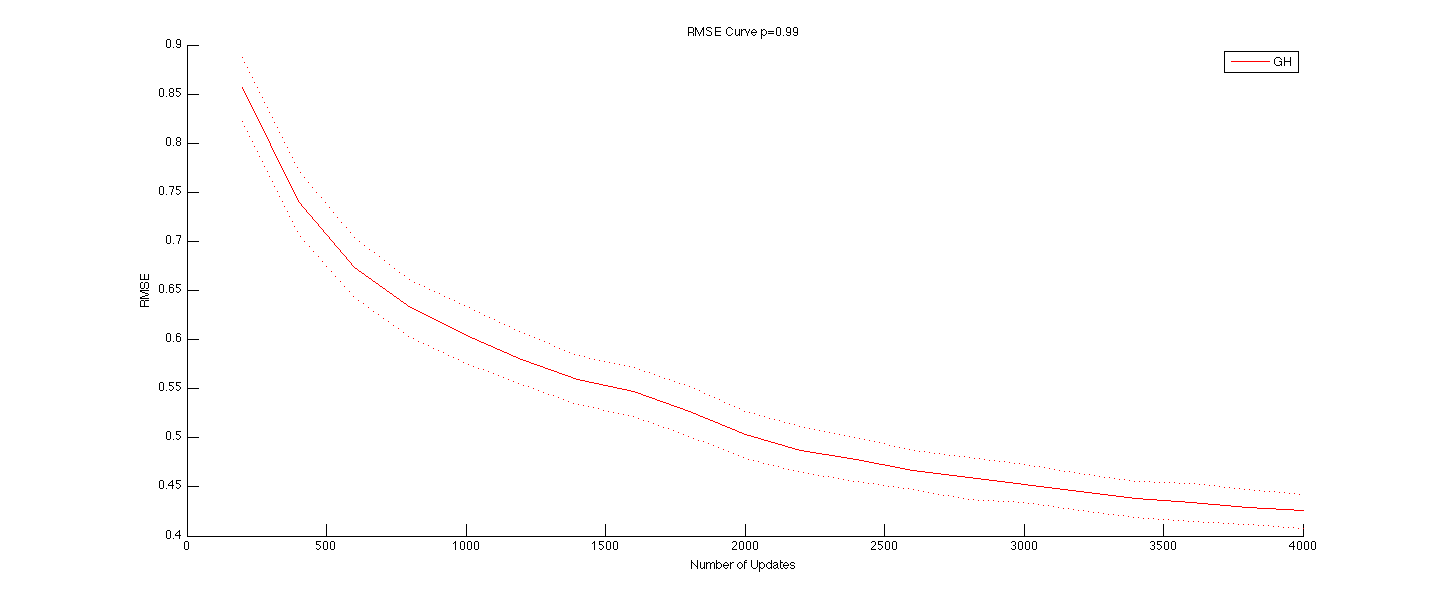} \label{expNinetyNineCurve}
}
\caption{RMSE curves associated with the exponential distribution for the 0.5, 0.9 and 0.99 quantiles (including 95\% percentile bootstrap confidence intervals). The exact quantiles are 0.6931, 2.3026 and 4.6052 respectively. The Gauss-Hermite algorithm utilises $N=6$.\label{expfigs}}
\end{figure}
\end{sidewaysfigure}

\begin{sidewaysfigure}
\begin{figure}[H]
\subfloat[m=100, p=0.5]{
  \includegraphics[width=62mm]{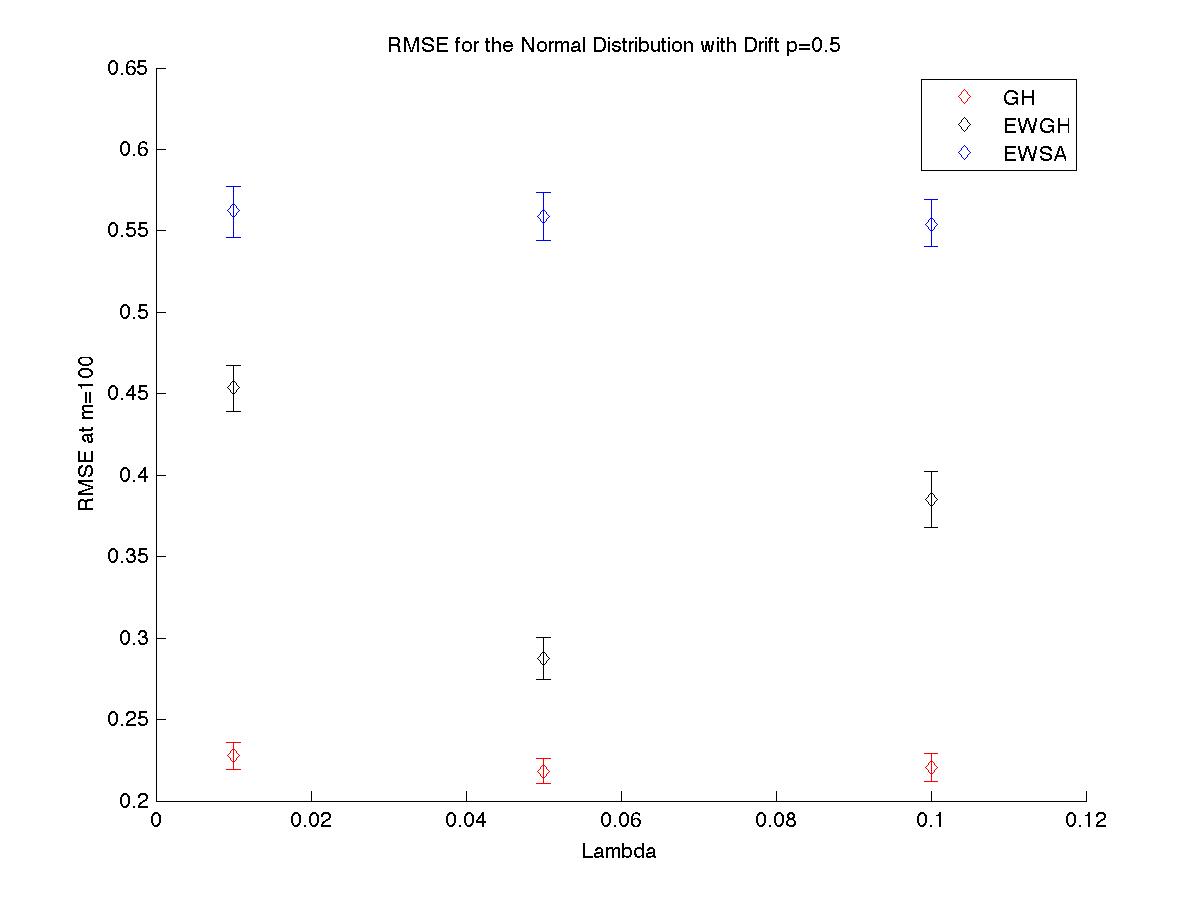}
}
\subfloat[m=400, p=0.5]{
  \includegraphics[width=62mm]{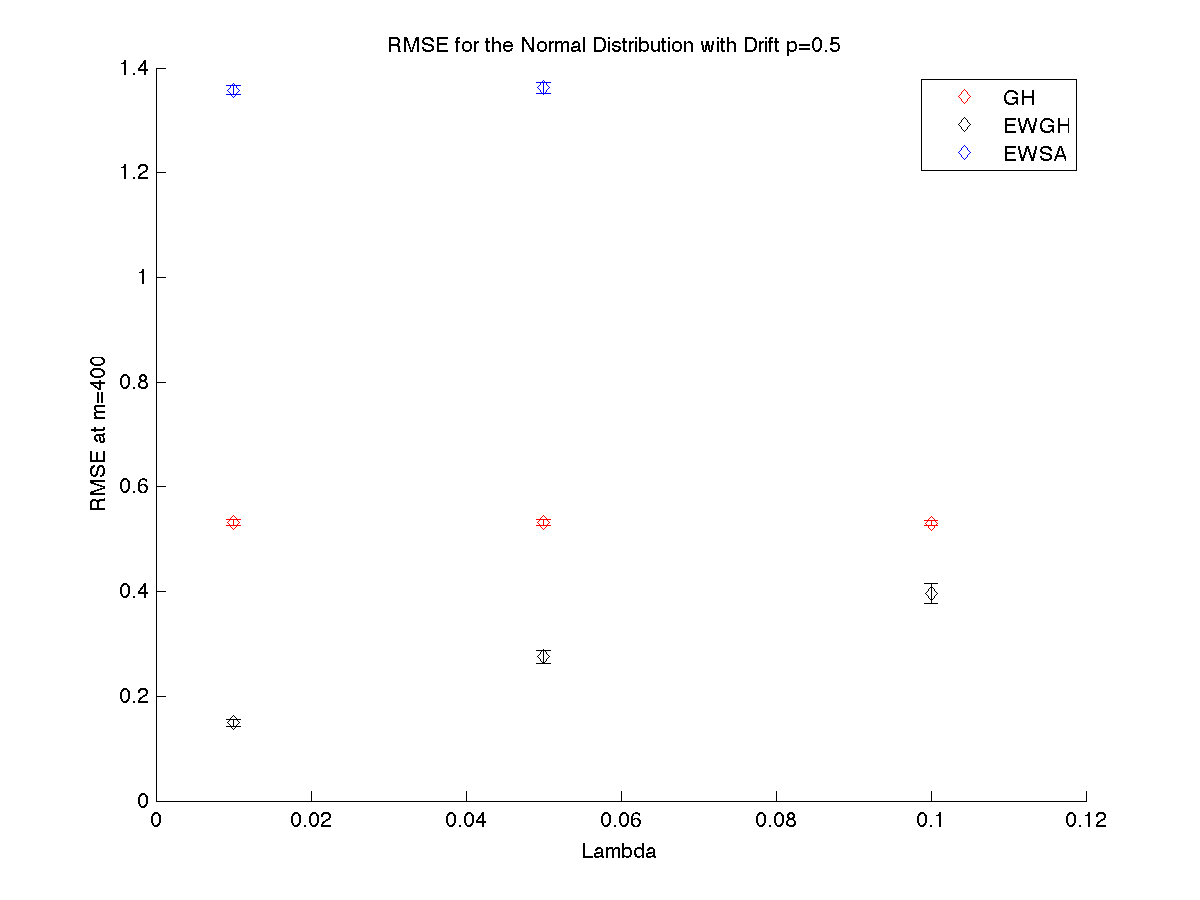}
}
\subfloat[m=1000, p=0.5]{
  \includegraphics[width=62mm]{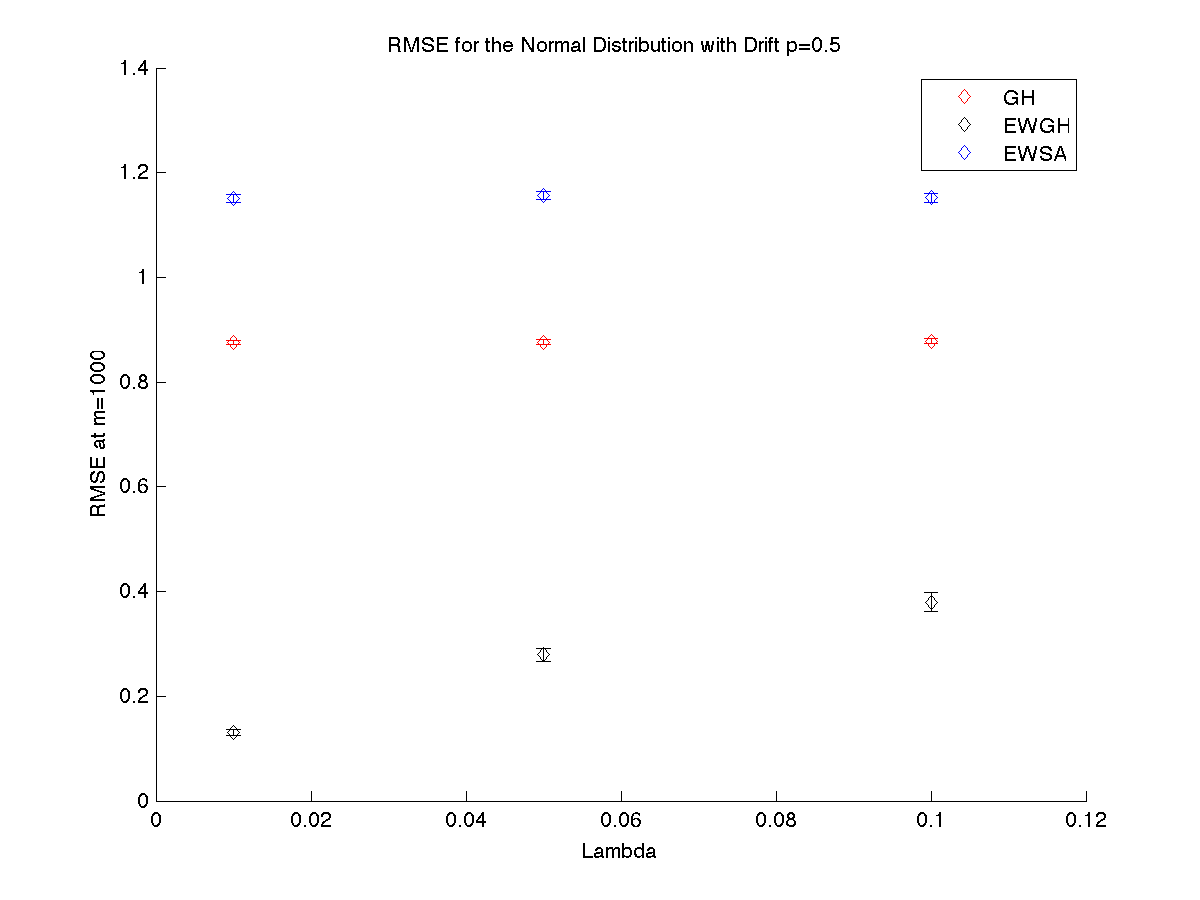}
}
\hspace{0mm}
\subfloat[m=100, p=0.9]{
  \includegraphics[width=62mm]{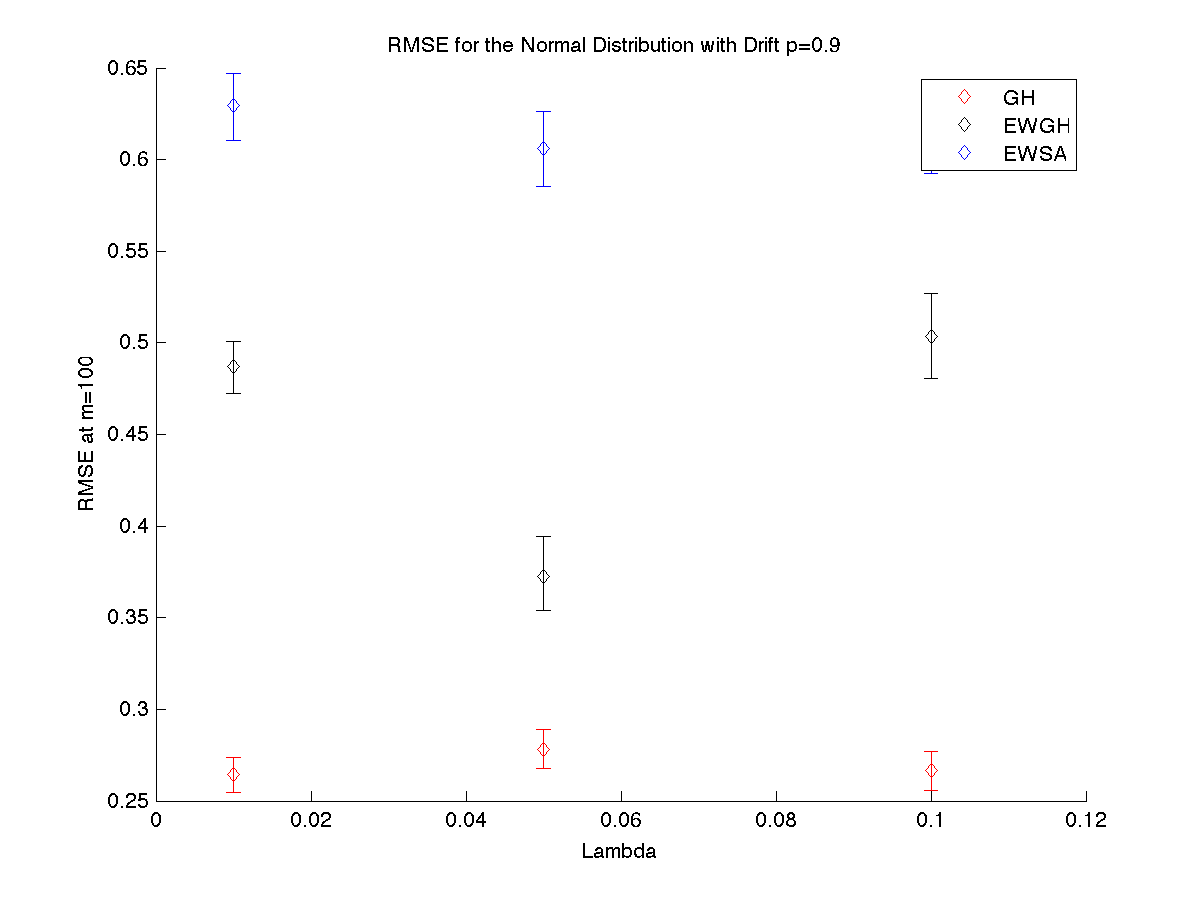}
}
\subfloat[m=400, p=0.9]{
  \includegraphics[width=62mm]{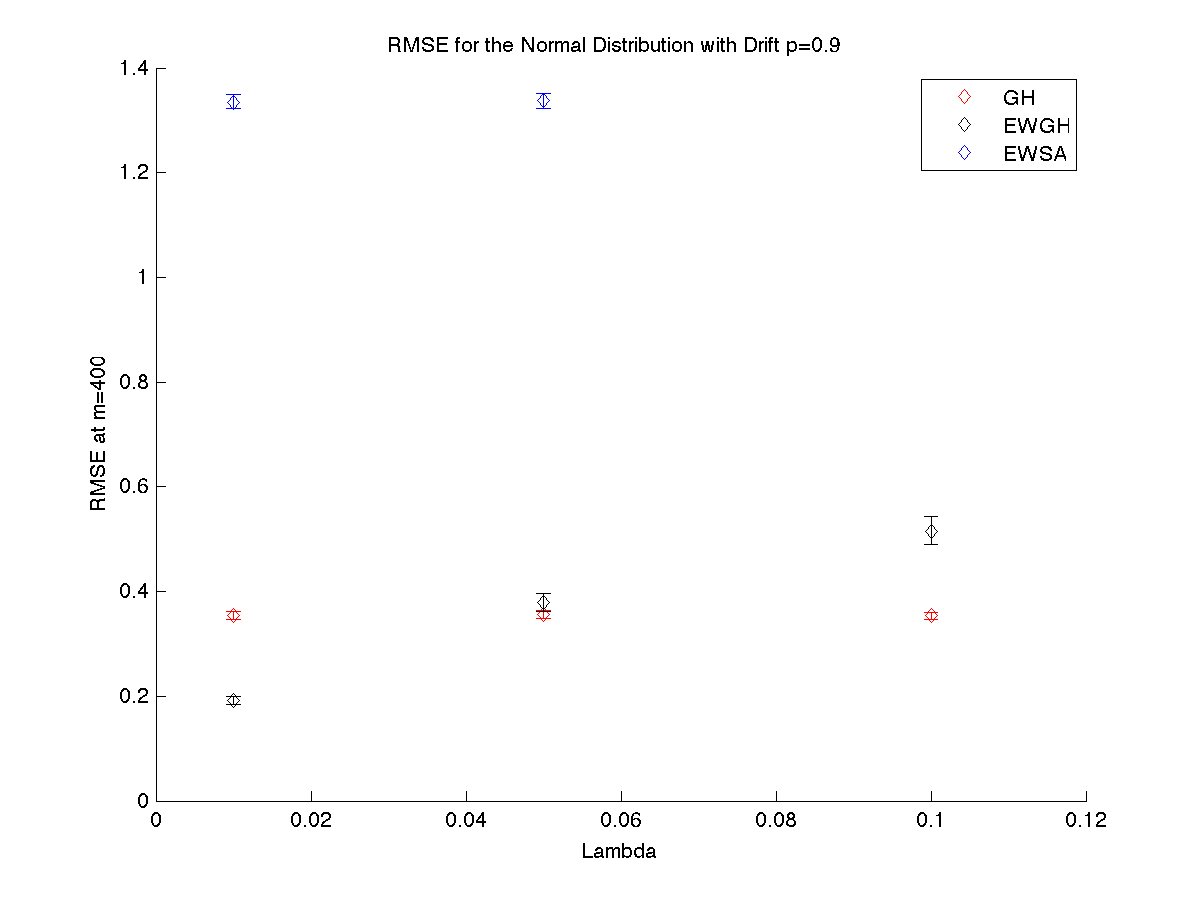}
}
\subfloat[m=1000, p=0.9]{
  \includegraphics[width=62mm]{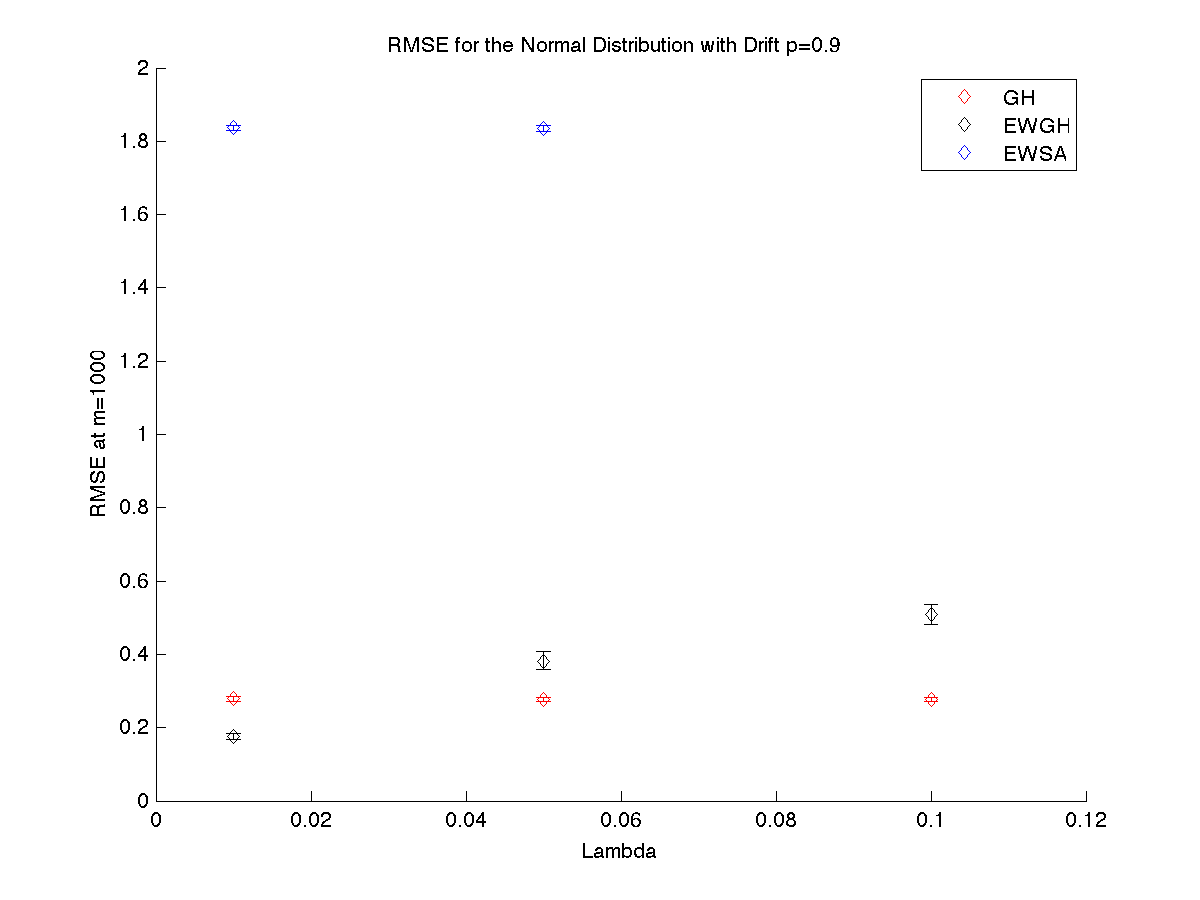}
}
\hspace{0mm}
\subfloat[m=100, p=0.99]{
  \includegraphics[width=62mm]{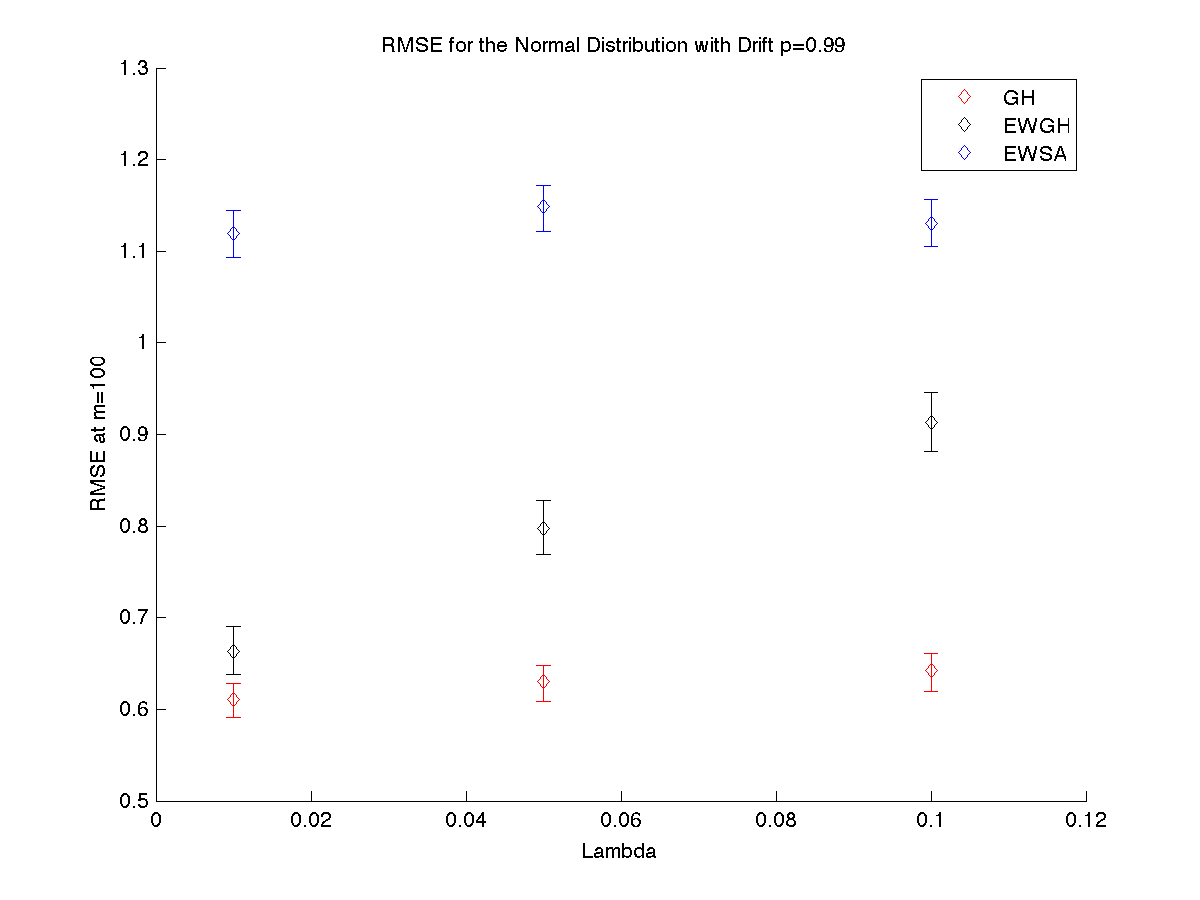}
}
\subfloat[m=400, p=0.99]{
  \includegraphics[width=62mm]{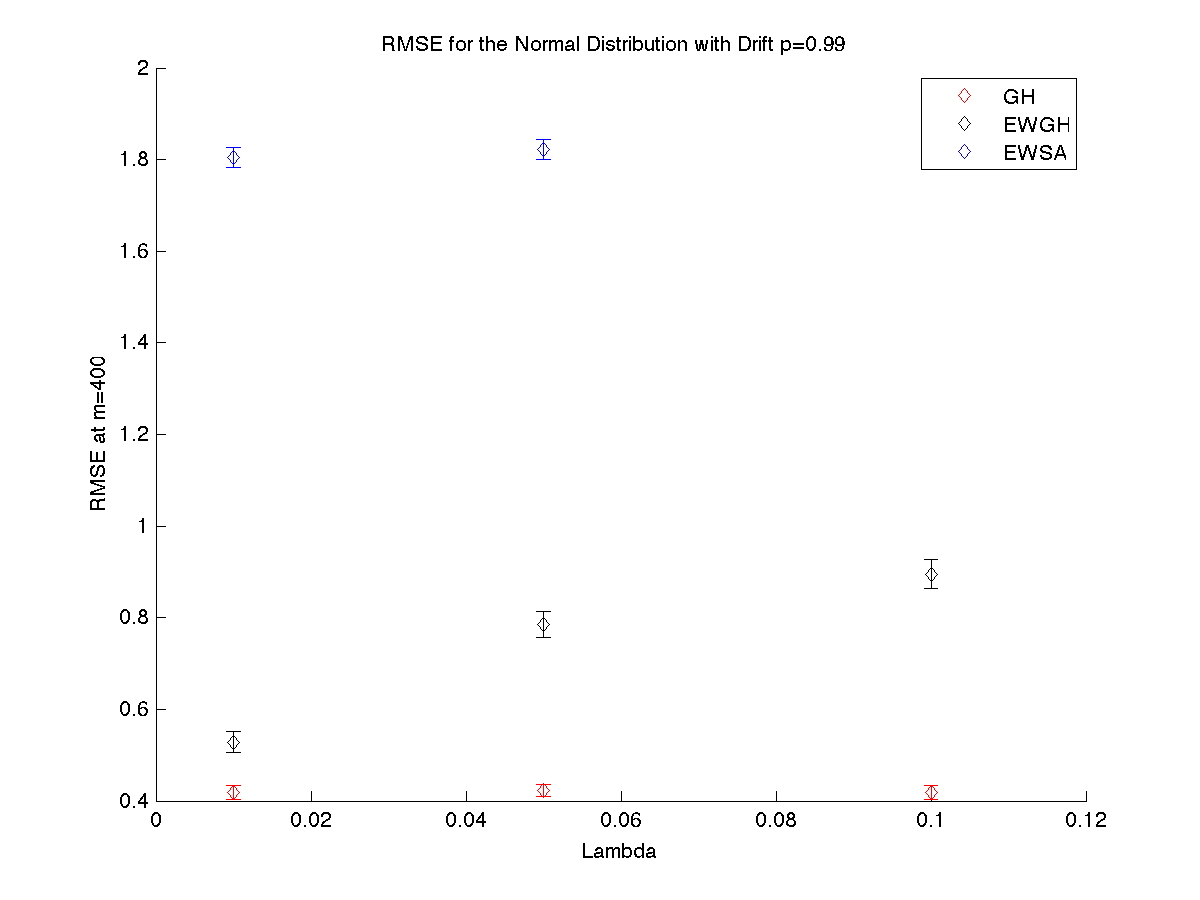}
}
\subfloat[m=1000, p=0.99]{
  \includegraphics[width=62mm]{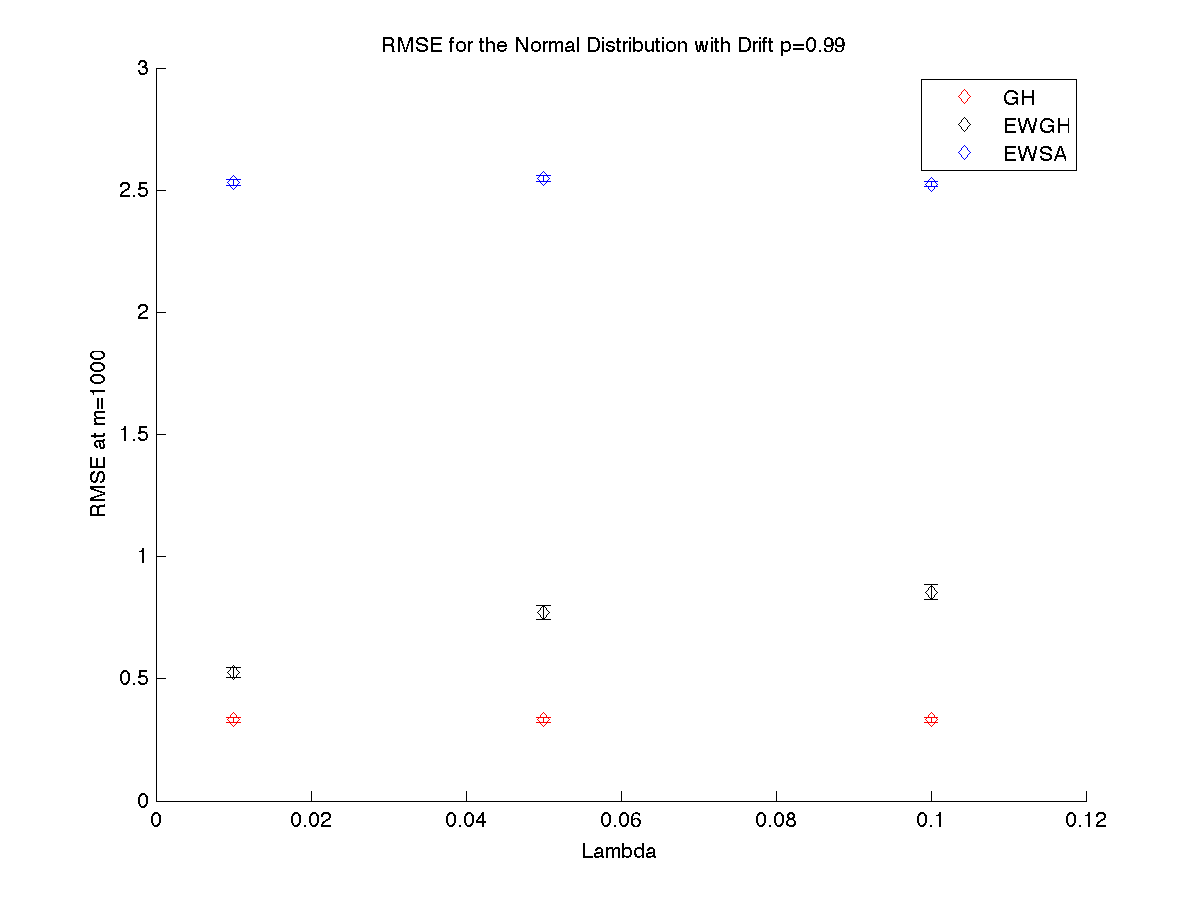}
}

\caption{Normal Distribution with Drift: EWGH RMSE for $\lambda=0.01,0.05,0.1$ at $m=100,400,1000$ observations for the $p=0.5, 0.9, 0.99$ quantiles (including 95\% percentile bootstrap confidence intervals). The GH and EWGH algorithms utilise $N=6$.\label{normalDriftfigsBiasVar}}.
\end{figure}

\end{sidewaysfigure}

\begin{sidewaysfigure}
\begin{figure}[H]
\subfloat[m=100, p=0.5]{
  \includegraphics[width=62mm]{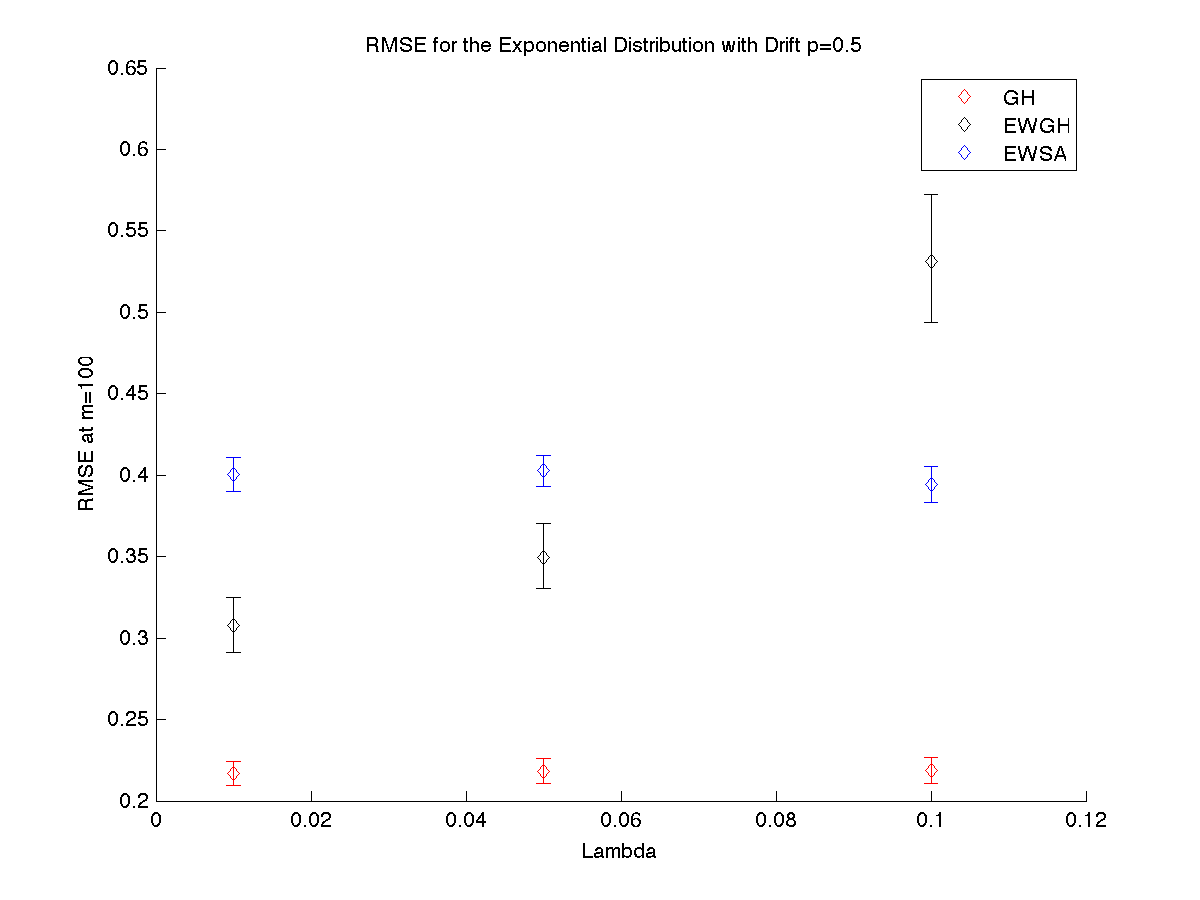}
}
\subfloat[m=400, p=0.5]{
  \includegraphics[width=62mm]{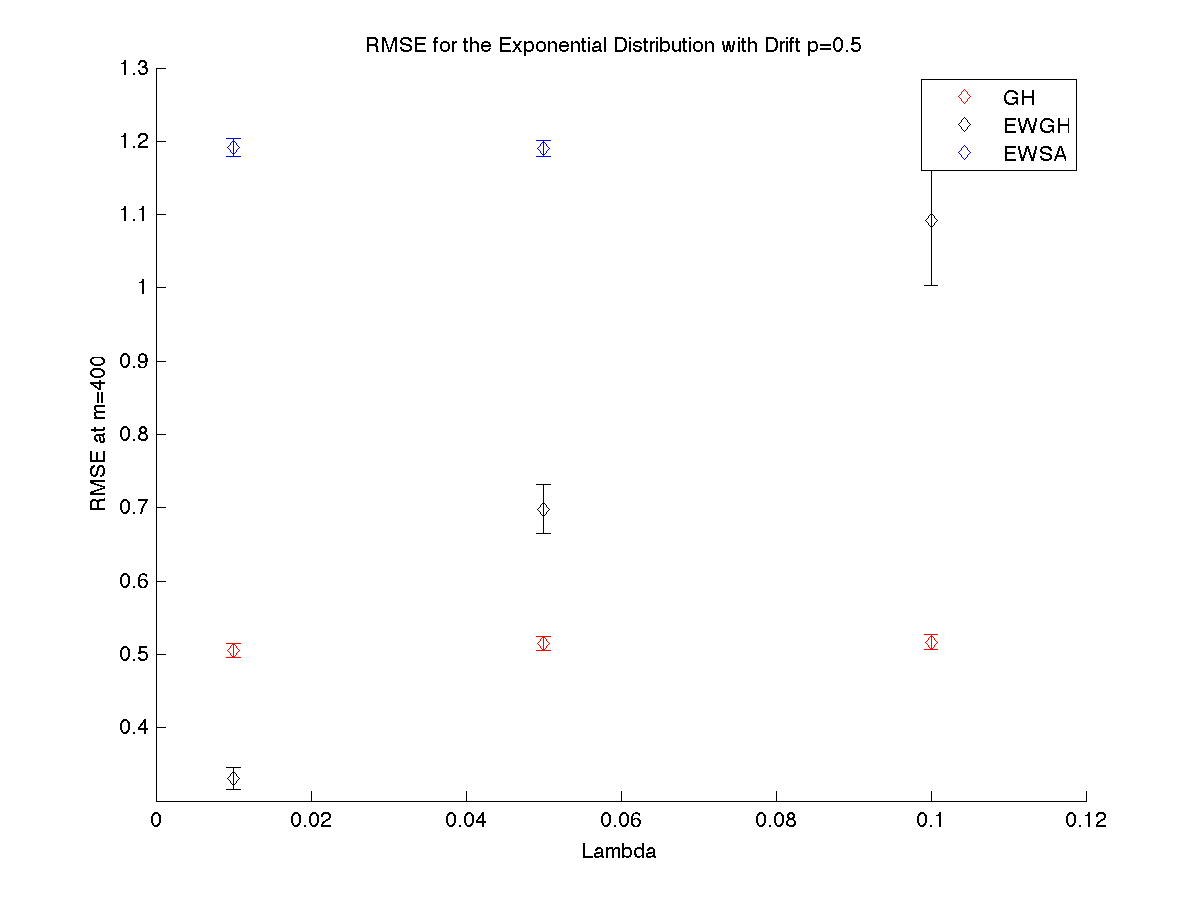}
}
\subfloat[m=1000, p=0.5]{
  \includegraphics[width=62mm]{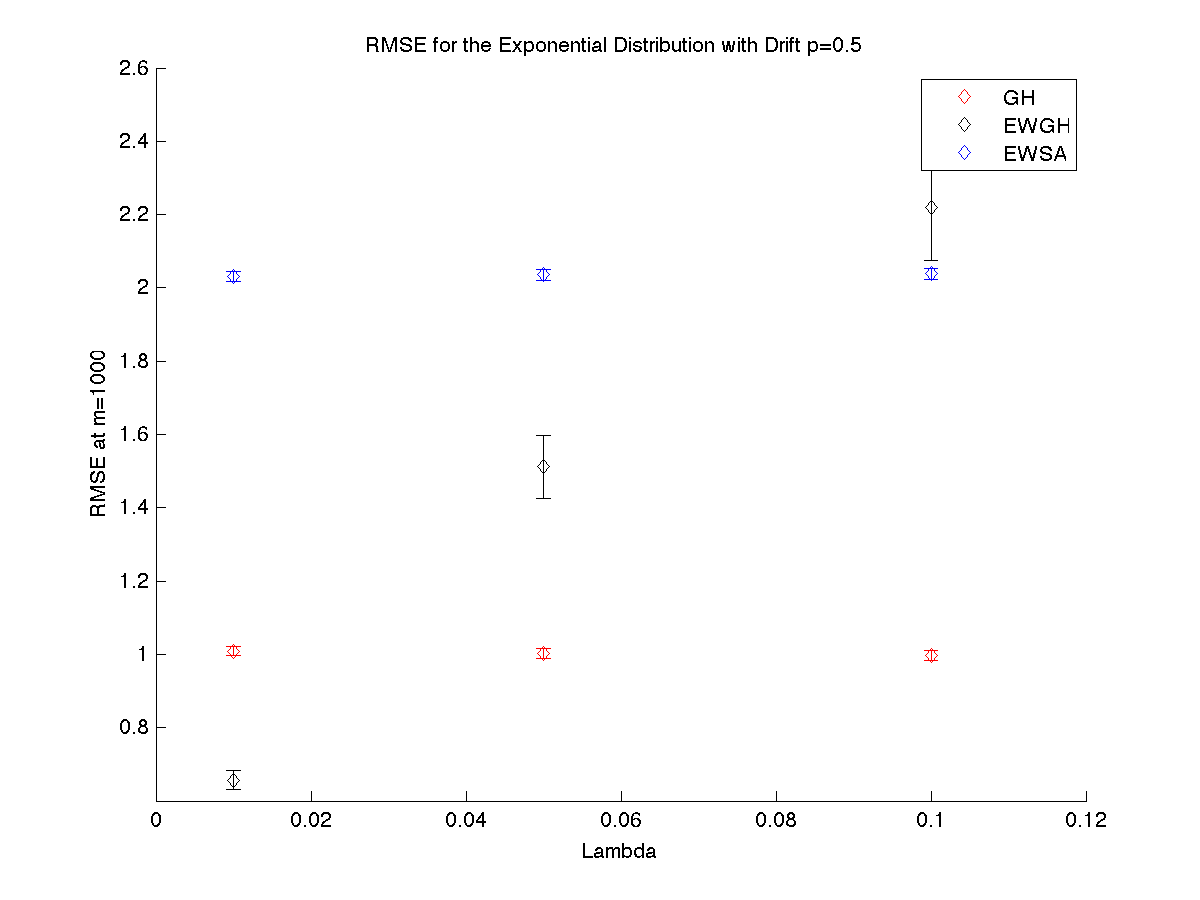}
}
\hspace{0mm}
\subfloat[m=100, p=0.9]{
  \includegraphics[width=62mm]{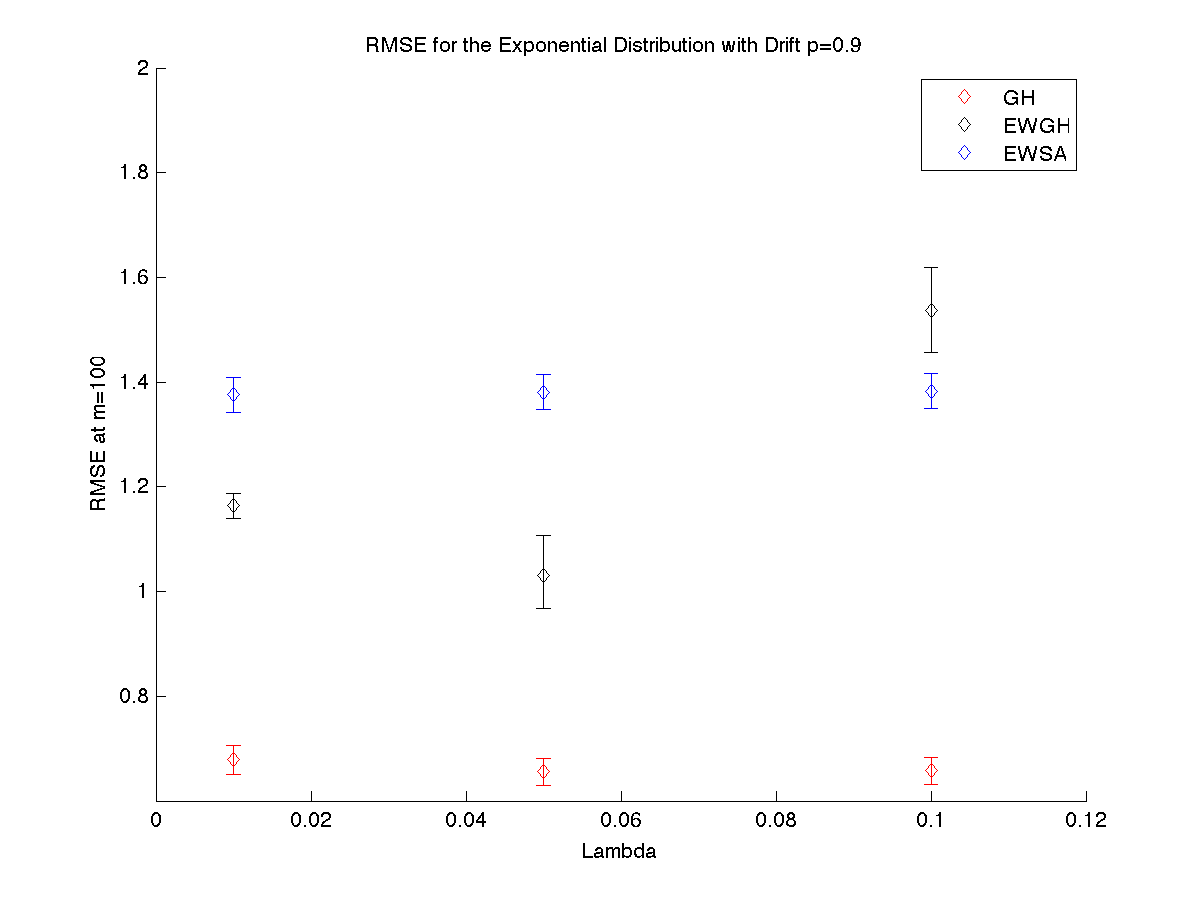}
}
\subfloat[m=400, p=0.9]{
  \includegraphics[width=62mm]{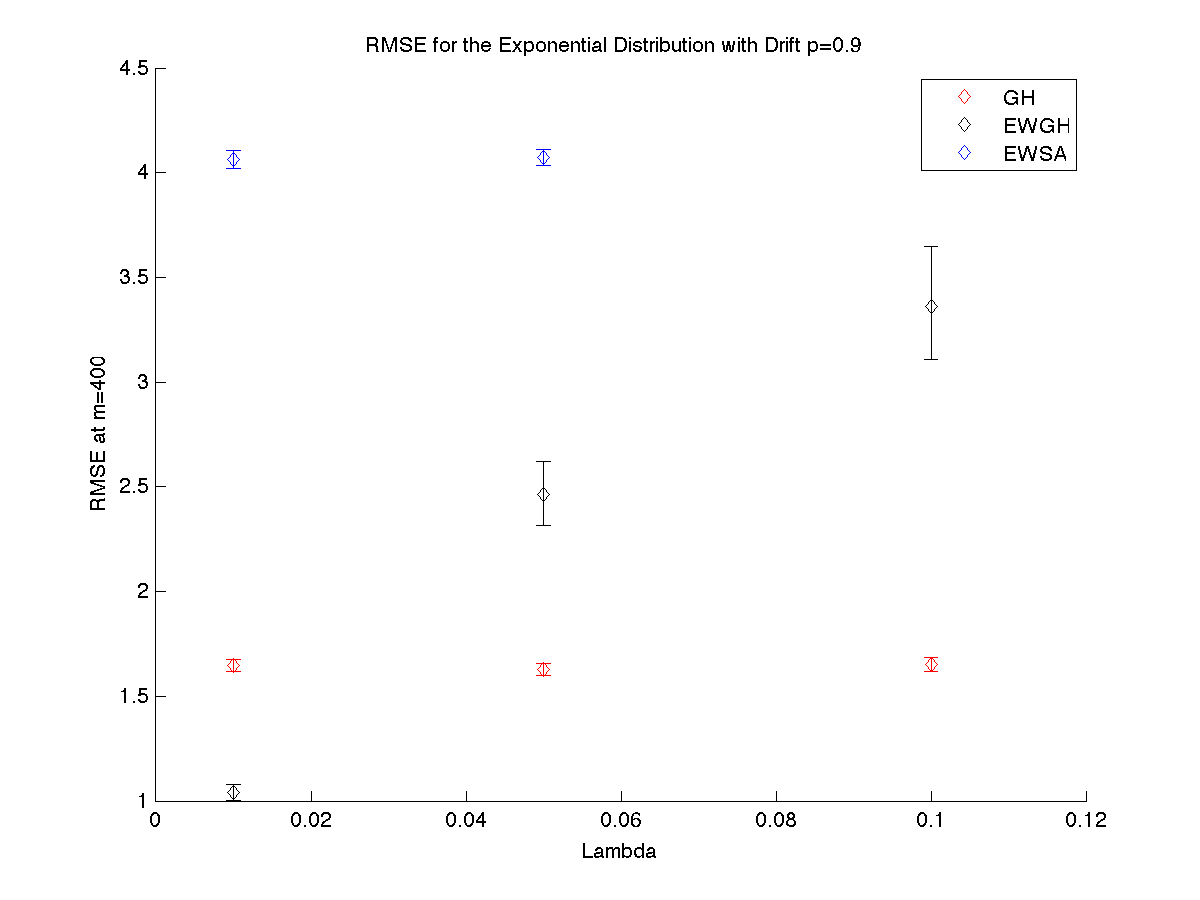}
}
\subfloat[m=1000, p=0.9]{
  \includegraphics[width=62mm]{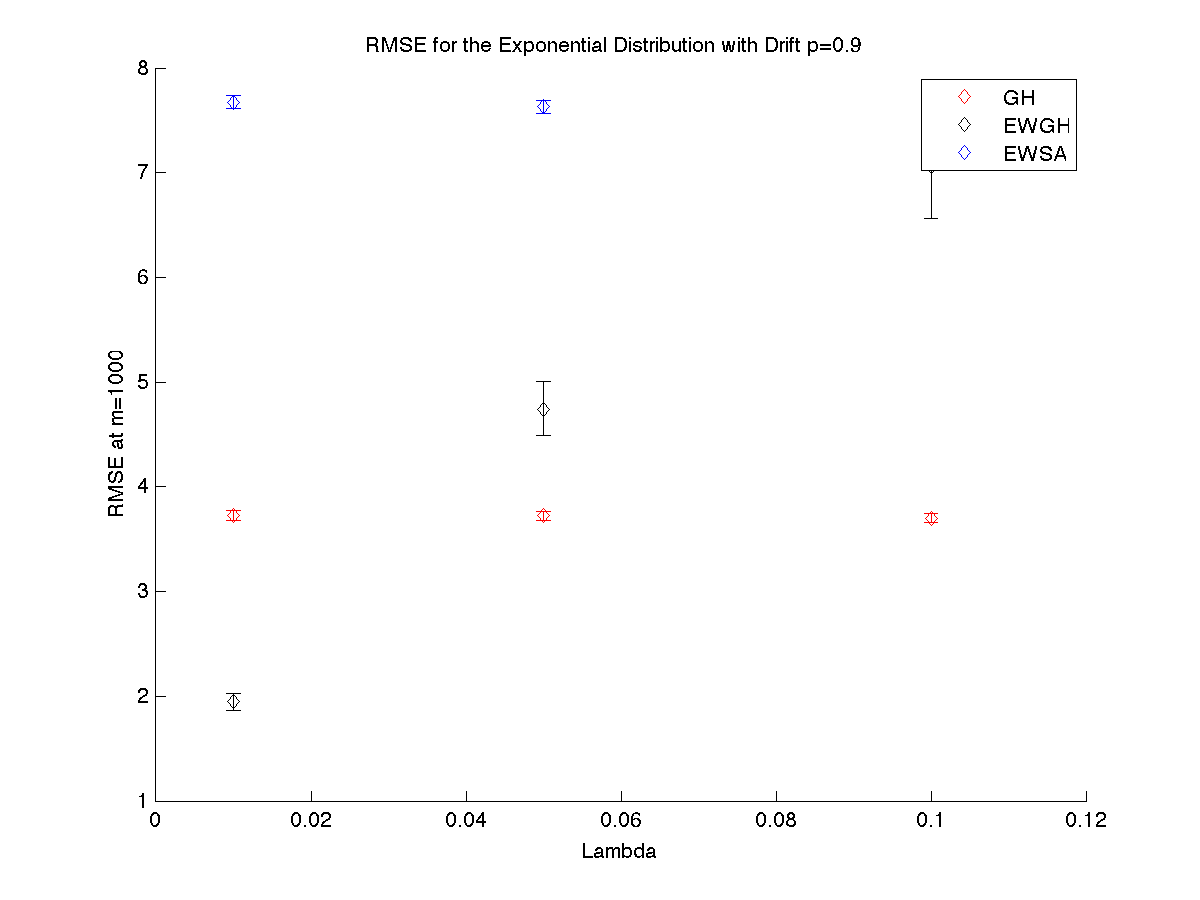}
}
\hspace{0mm}
\subfloat[m=100, p=0.99]{
  \includegraphics[width=62mm]{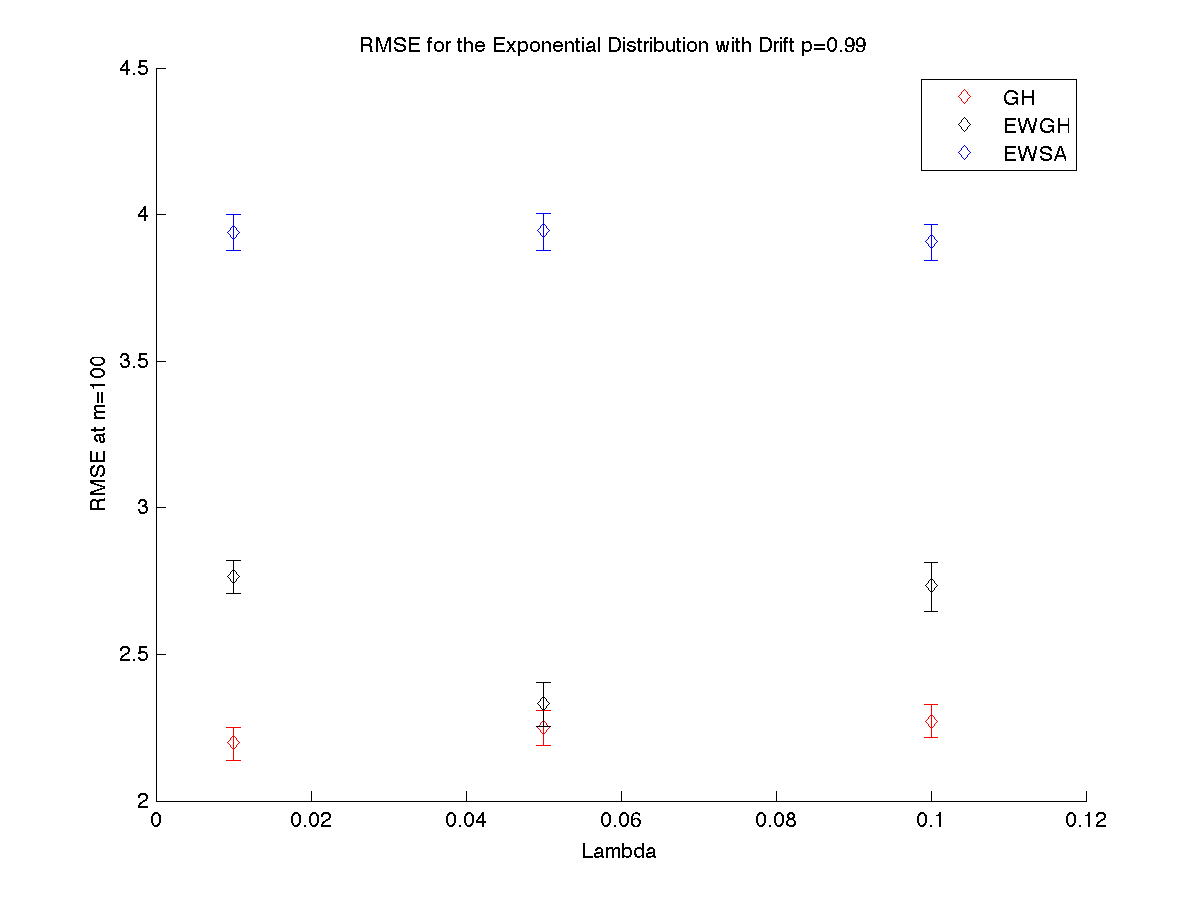}
}
\subfloat[m=400, p=0.99]{
  \includegraphics[width=62mm]{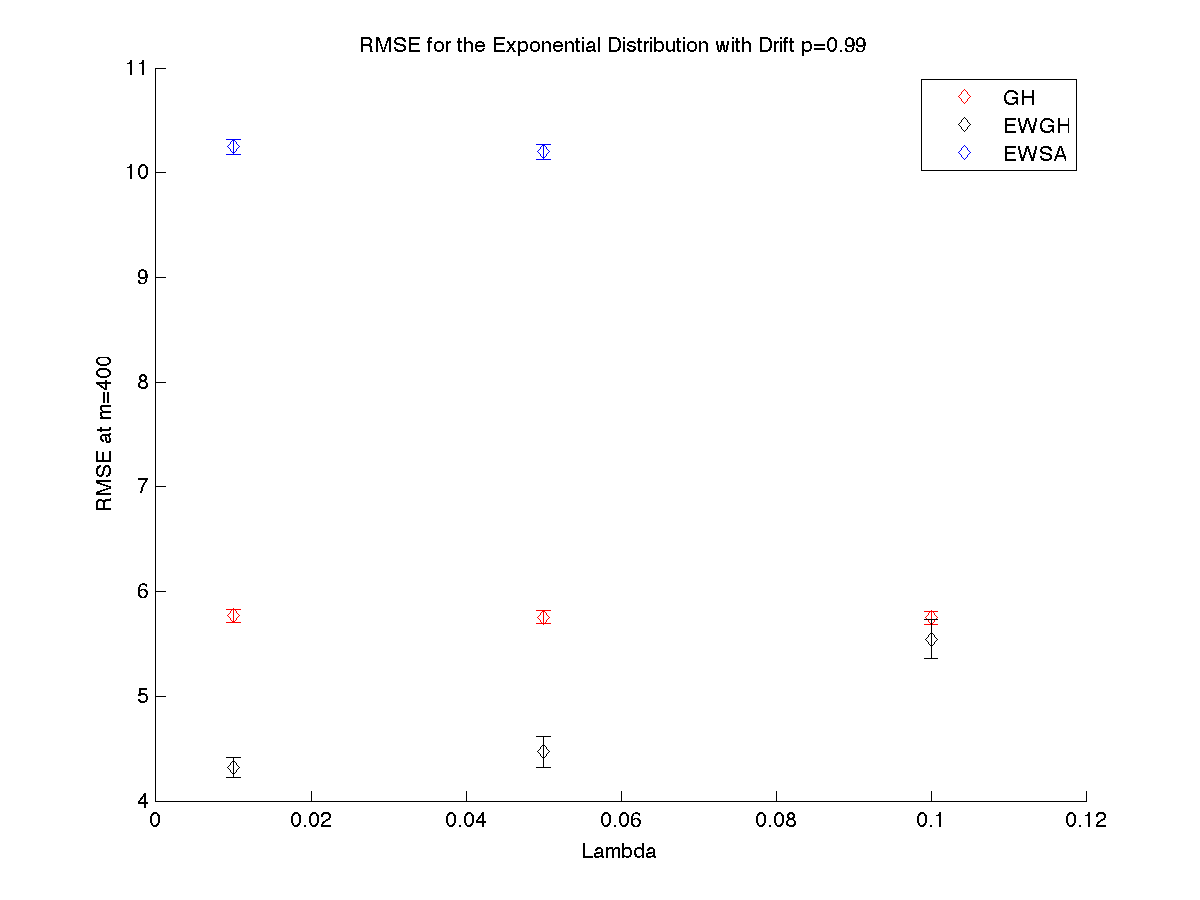}
}
\subfloat[m=1000, p=0.99]{
  \includegraphics[width=62mm]{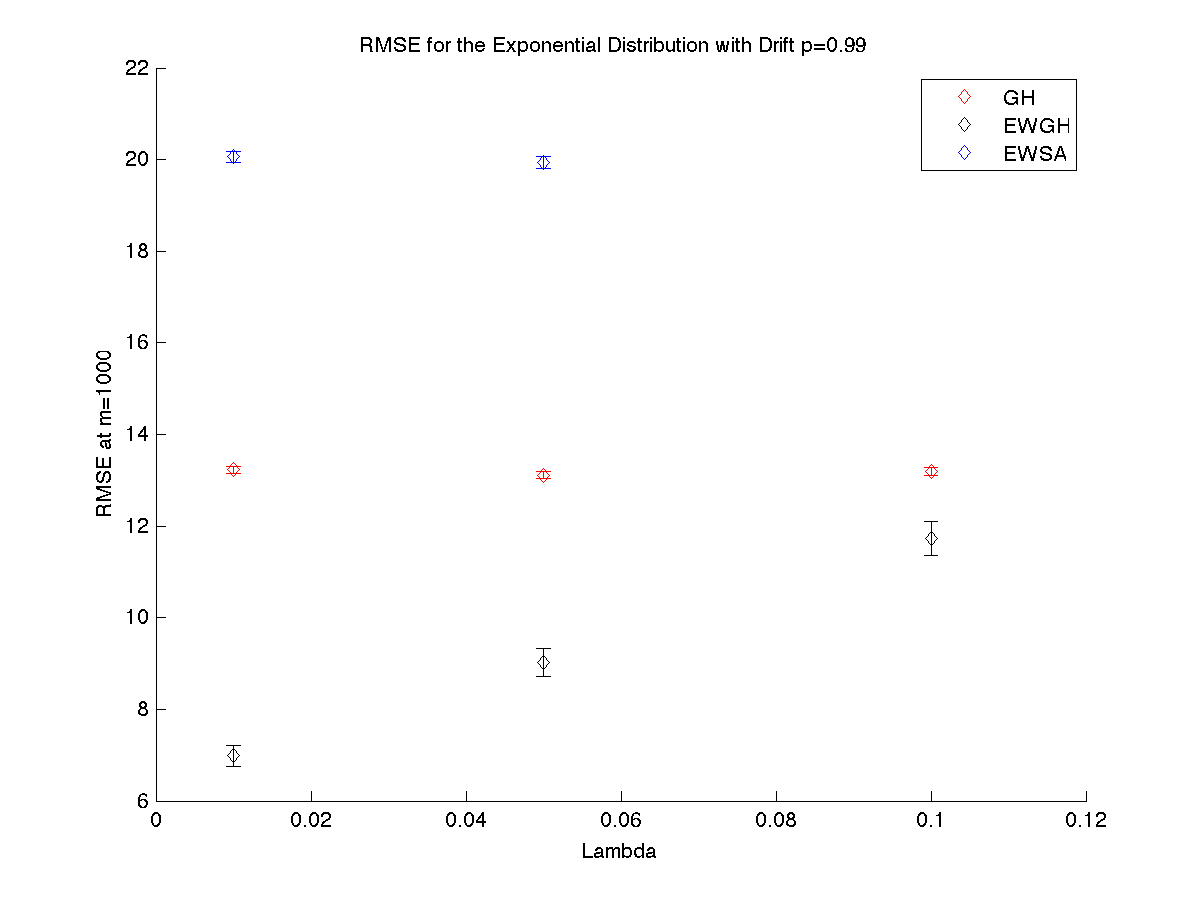}
}

\caption{Exponential Distribution with Drift: EWGH RMSE for $\lambda=0.01,0.05,0.1$ at $m=100,400,1000$ observations for the $p=0.5, 0.9, 0.99$ quantiles (including 95\% percentile bootstrap confidence intervals). The GH and EWGH algorithms utilise $N=6$.\label{expDriftfigsBiasVar}}
\end{figure}
\end{sidewaysfigure}

\begin{sidewaysfigure}
\begin{figure}[H]
\subfloat[p=0.5]{
 \includegraphics[width=62mm]{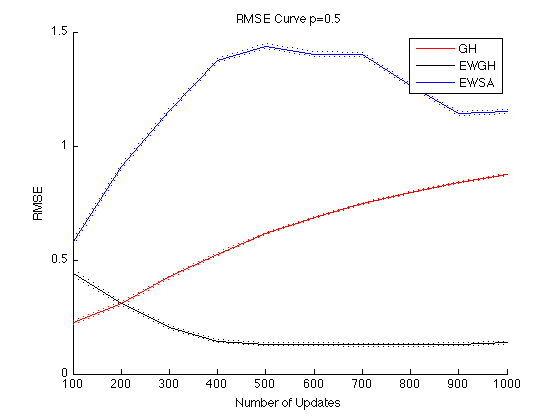}
}
\subfloat[p=0.9]{
  \includegraphics[width=62mm]{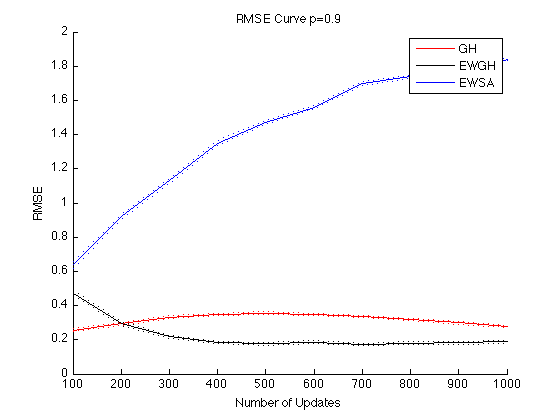}
}
\subfloat[p=0.99]{
  \includegraphics[width=62mm]{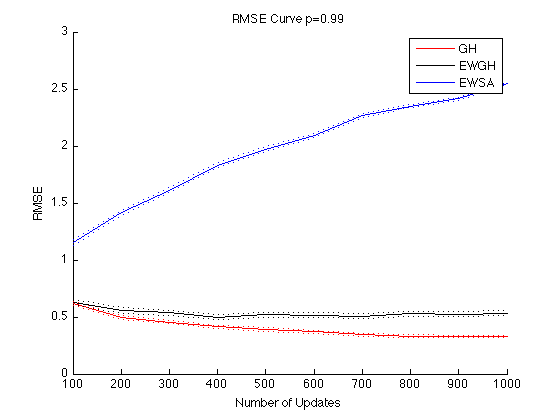}
}
\caption{RMSE curves associated with the normal distribution with mean $=0.006 j$ and standard deviation $=1$ for the 0.5, 0.9 and 0.99 quantiles. The final exact quantiles at $j=m=1000$ are 6, 7.2816 and 8.3263 respectively. The GH and EWGH algorithms utilise $N=6$. The EWGH algorithm utilises $\lambda=0.01$.\label{normalDriftfigs}}
\end{figure}

\begin{figure}[H]
\subfloat[p=0.5]{
  \includegraphics[width=62mm]{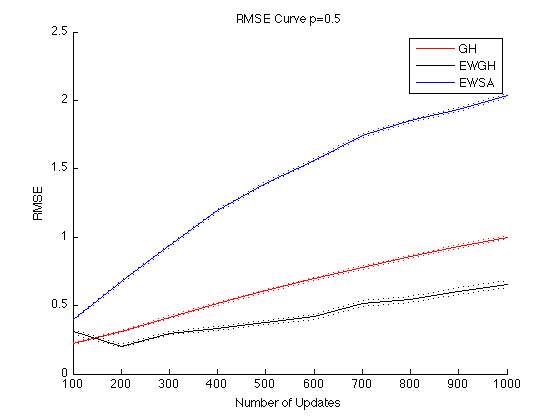}
}
\subfloat[p=0.9]{
  \includegraphics[width=62mm]{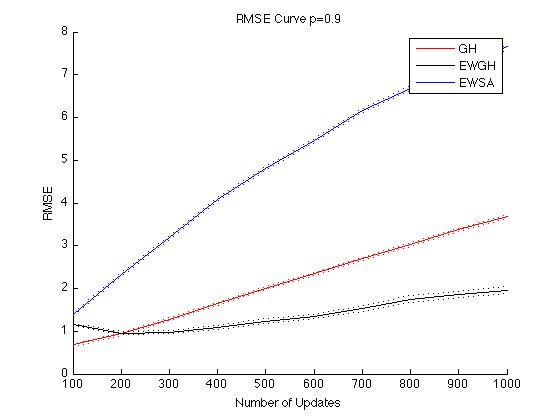}
}
\subfloat[p=0.99]{
 \includegraphics[width=62mm]{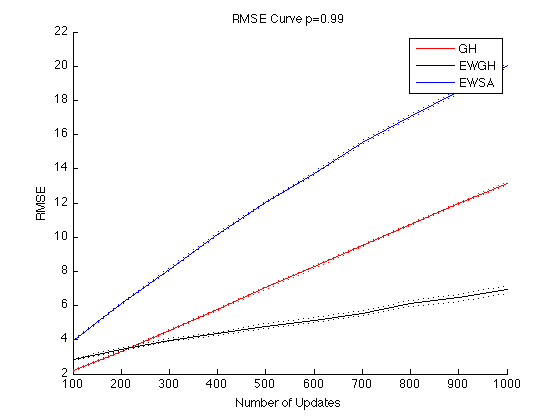}
}
\caption{RMSE curves associated with the exponential distribution with mean and standard deviation  $=1 + 0.006 j$ for the 0.5, 0.9 and 0.99 quantiles. The final exact quantiles at $j=m=1000$ are 4.8562, 16.1319 and 32.2638 respectively. The GH and EWGH algorithms utilise $N=6$. The EWGH algorithm utilises $\lambda=0.01$.\label{expDriftfigs}}
\end{figure}
\end{sidewaysfigure}

\pagebreak
\section{Real Data Results} \label{realresults}

In this section we test the GH and EWGH algorithms on a real data set to evaluate their effectiveness in a non-idealised setting. In particular we consider one month (January 2013) of high frequency forex return data, namely EURUSD spot mid-price returns intervaled at 15 seconds. We begin with the mid-price series:

$$p_{t_{(1)}},p_{t_{(2)}},\dots p_{t_{(m)}} , \quad t_{(i+1)}-t_{(i)} \approx 15 \mbox{ seconds},$$

which we transform online to obtain arithmetic returns:

$$r_{(1)},r_{(2)},\dots r_{(m-1)} , \quad r_{(j)} = p_{t_{(j+1)}}-p_{t_{(j)}}.$$

The summary statistics of the arithmetic returns (in pips, 1 pip = 0.0001) are as follows (data set size = $71797$):

\begin{center}
		\begin{tabular}{ l | c}
		    \hline
		    Statistic & Value (Pips)  \\ \hline
			Mean & 1.0449e-05   \\ 
			Standard Deviation & 1.0852 \\ 
			Skewness &  -0.0222  \\
			Kurtosis & 16.5736\\
		    \hline
		\end{tabular}	
\end{center}

\begin{figure}[H]
    \centering
    \includegraphics[width=100mm]{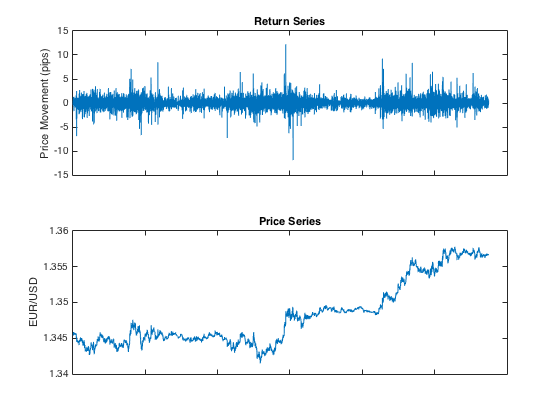}
    \caption{An extract of the forex returns and price series for 2013-01-28 to 2013-01-31. Noteworthy features include the prevalence of return outliers and distinctive changes in return variance. Also apparent are periods of pronounced, but temporary trending behaviour corresponding to changes in the mean of the return distribution.}
    \label{fxPlot}
\end{figure}

Accurately tracking the quantiles of this return series is a non-trivial check of our methods - the distribution of these returns is non-stationary and the frequency of outliers is high. The frequency of outliers is related to the fact that the distribution of the returns is heavy-tailed as evidenced by the high kurtosis. This will probe the dynamic quantile estimation performance in the setting of general non-stationarity as well as evaluating the robustness of the algorithms. Applications of online quantile estimation for financial price series include identifying high frequency trading opportunities, real-time risk estimation (such as the calculation of real time Value at Risk, VaR) and outlier detection. \\

Our test is as follows: we count the number of times the $(i+1)$th observation is smaller than the online estimates of the 0.5 (median), 0.9 and 0.99 quantiles obtained up to observation $i$. These counts are then normalised by the total number of observations. Ideally, the out-of-sample observation should be smaller than the median quantile with probability $0.5$. Similarly, the out-of-sample observation should be smaller than the $0.9$ quantile with probability $0.9$ and it should be smaller than the $0.99$ quantile with probability $0.99$. We compare the observed frequencies to these probabilities and report $95\%$ percentile bootstrap confidence intervals for the observed frequencies. The bootstrap confidence intervals are created by calculating the observed frequencies for each day in the period in question (January 2013), resampling the resultant daily frequencies with replacement (1000 resamples) and providing the 0.025 and 0.975 quantiles of the resampled distribution. We also include the results for the EWSA algorithm for comparison. The parameters used for the algorithms are as follows: $N=6$ for the GH and EWGH algorithms. This choice of $N$ was motivated by the simulation results and computational efficiency considerations. For the EWGH algorithm, $\lambda=0.05$. This choice was motivated by an analysis of an initial sample (not contained within the January 2013 data set) and by the fact that we expect forex return quantiles to vary rapidly around news events and thus a choice of $\lambda$ bigger than $0.01$ (the best choice in the simulation studies) appeared appropriate. The parameters for the EWSA algorithm are the same as in section \ref{simresults}.

\begin{center}
		\begin{tabular}{ | l | l | l | l|}
		    \hline
		    Algorithm & p=0.5 & p=0.9 & p=0.99 \\ \hline
			GH & 0.469 [0.463,0.475] &  0.860 [0.849,0.873] & 0.974 [0.971,0.978] \\ \hline
			EWGH & 0.500 [0.499,0.502] & 0.894 [0.892,0.895] & 0.992 [0.991,0.992]\\ \hline
			EWSA & 0.501 [0.500,0.502] & 0.897 [0.895,0.899] & 0.976 [0.973,0.979] \\
		    \hline
		\end{tabular}\label{tableResults}	
\end{center}

The EWGH algorithm performs well overall and is the only algorithm that does not underestimate the p=0.99 quantile. This provides evidence that not only does the EWGH algorithm perform well in a non-stationary environment, but also that it handles heavy-tailed distributions well. Note that we have performed the same test on other months to check the consistency of the results and the same behaviour emerges (omitted for brevity). We re-iterate that for the GH and EWGH algorithms, we can obtain online estimates of any quantile of interest.  

\section{Conclusion} \label{conclusion}

In this article we have defined a cumulative distribution function estimator based on Hermite series estimators which allows quantiles to be obtained numerically. For probability densities with support on the positive half-real line, we have proven asymptotic MSE consistency (pointwise consistency) as well as asymptotic consistency based on a Cramer-von Mises like criterion for this estimator. We have also provided the associated rates.  In addition, we have demonstrated that quantile estimates obtained from this estimator directly depend on the MISE of the Hermite series density estimator under certain conditions. While these results are novel and interesting in their own right, our particular application of interest is that of sequential quantile estimation. In this setting, the Hermite series based estimators are biased in general. They are still very useful in practice however. In particular, we have introduced algorithms  - based on the Gauss-Hermite expansion - for online quantile estimation in the settings of static quantile estimation and dynamic quantile estimation. These algorithms have $O(1)$ time complexity for \textit{updating} the distribution and quantile function estimates. \\

In the static quantile estimation setting we have exploited the fact that Gauss-Hermite coefficients can be updated in a sequential manner.  To treat dynamic quantile estimation, we have introduced a novel expansion with an exponentially weighted estimator for the Gauss-Hermite coefficients which we have termed the Exponentially Weighted Gauss-Hermite (EWGH) expansion. This expansion should allow the local behaviour of a non-stationary stream of data to be tracked. To make our analysis concrete, we have considered i.i.d data streams and independent, non-identically distributed data streams in our simulation studies and our theoretical analysis. The simulation studies revealed the Gauss-Hermite based algorithms to be competitive with a leading existing algorithm for online quantile estimation. In addition, a test on real forex data confirmed the effectiveness of the EWGH algorithm in a more general and realistic setting and provided evidence that our techniques are effective for heavy-tailed distributions.\\

The particular usefulness of our algorithms is that they allow \textit{arbitrary} quantiles to be estimated in an online manner. They do not require a particular set of quantiles to be specified upfront, which is a limitation of existing algorithms. In obtaining these novel algorithms, we have thus provided a solution to the problem of online distribution function and online quantile function estimation for both stationary and non-stationary data streams. Online estimates of these functions are in fact useful in a broader context in that any function of these quantities can be calculated in an online manner. These online estimates could also be used in online machine learning applications for example.

\section{Acknowledgements}
We would like to thank the reviewers for their very helpful and insightful comments that greatly improved the paper. M.S. would like to thank Stuart Cullender, Norman Ives, Kavishin Pather, Bernhard Steinhardt, Matthew Stephanou and Nils Tania for helpful discussions. M.V. was funded by the National Research Foundation of South Africa.

\appendix

\section{Standardising the observations} \label{standardization}

It is reasonable to assume that in practice, the quality of the fit yielded by the truncated Gauss-Hermite expansion should be better if applied to standardised random variables (with mean equal to zero and standard deviation equal to one). For a random variable $x$ with mean $\mu$ and standard deviation $\sigma$ we have the standardised random variable:

$$\tilde{x} = \frac{x-\mu}{\sigma},$$

with probability density $\tilde{f}(\tilde{x}) = \sigma f(\sigma \tilde{x} + \mu) $. The associated truncated Gauss-Hermite expansion is:

$$\tilde{f}(\tilde{x}) = \sum_{k=0}^{N} a_k H_{k} (\tilde{x}) Z(\tilde{x}),$$

where

$$a_k = \alpha_{k} \int_{-\infty}^{\infty} Z(\tilde{x}) \tilde{f}(\tilde{x}) H_{k} (\tilde{x}) d\tilde{x}.$$ 

Ideally, we would utilise:

\begin{equation}
	\hat{a}_k (\mu,\sigma) = \alpha_{k} \frac{1}{n} \sum_{i=1}^{n} Z(\frac{x_{i}-\mu}{\sigma}) H_{k} (\frac{x_{i}-\mu}{\sigma}), \nonumber
\end{equation}

to estimate the coefficients. In practice however, we do not know the true values of $\mu$ and $\sigma$ and thus we must use estimates of these values, $\hat{\mu}$ and $\hat{\sigma}$. In general, the effect of using these estimates is to \textit{bias} the estimate of $\hat{a}_k$. This can be seen by considering the Taylor expansion of $\hat{a}_k (\hat{\mu} ,\hat{\sigma})$ which implies $E(\hat{a}_k (\hat{\mu} ,\hat{\sigma} ) ) \neq E(\hat{a}_k (\mu,\sigma)) $. \\ 

Despite this bias, standardising using the estimated mean and standard deviation improves the quality of the fit in many  cases. In the sections below we provide online algorithms to estimate the mean and standard deviation in both the static and dynamic setting.  These estimates can then be plugged into the standard Gauss-Hermite coefficients and the exponentially weighted Gauss-Hermite coefficients respectively.

\subsection{Static Quantile Estimation}

The usual estimators of the mean and standard deviation can be calculated in an online way. The mean ($\hat{\mu}_{k}$) can be updated with each new observation as:

$$\hat{\mu}_{1} = x_{1},$$

$$\hat{\mu}_{k} = \frac{1}{k} \left( (k-1) \hat{\mu}_{k-1} + x_{k} \right), \quad k \geq 2.$$

The standard deviation ($\hat{\sigma}_{k}$) can also be estimated in an online manner. To avoid numerical precision problems an algorithm such as that originating from  Welford \cite{welford} can be applied.

$$M_{1} = x_{1},$$

$$S_{1} = 0,$$

$$M_{k} = M_{k-1} + \frac{\left(x_{k} - M_{k-1} \right)}{k},$$

$$S_{k}=S_{k-1} + \left(x_{k} - M_{k-1}\right)\left(x_{k} - M_{k}\right),$$

$$\hat{\sigma}_{k} = \sqrt{\frac{S_{k}}{k-1}}, \quad k \geq 2.$$

\subsection{Dynamic Quantile Estimation}

For the purposes of obtaining a local estimate of the mean and standard deviation we can utilise EWMA estimators:

$$\hat{\mu}_{1} = x_{1},$$
$$\hat{\mu}_{k} = (1-\lambda) \hat{\mu}_{k-1}+\lambda x_{k}, \quad k \geq 2,$$

$$\hat{V}_{1} = 1,$$
$$\hat{V}_{k} = (1-\lambda) \hat{V}_{k-1} + \lambda (x_{k}-\hat{\mu}_{k})^{2}, \quad k \geq 2,$$
$$\hat{\sigma}_{k} = \sqrt{\hat{V}_{k}}.$$

$$0<\lambda \leq 1.$$

\section{MISE of the Exponentially Weighted Gauss-Hermite Expansion} \label{theoryEWGH}

\subsection{MISE of the Exponentially Weighted Gauss-Hermite Expansion for IID Data}

\begin{proposition}\label{MSECoeffBound}
Let the true probability density function $f(x) \in L_2$ and\\ $E|X|^{\frac{2}{3}}<\infty$. The MSE of the coefficients (\ref{GHCoeffEWMA}) have the following upper bound for a sample of $n+1$ observations in the i.i.d. case:

\begin{equation}
	\mbox{MSE}(\hat{a}_k) \leq  \left[ \frac{\lambda}{2-\lambda} \left[1-(1-\lambda)^{2n} \right] + (1-\lambda)^{2n} \right] \frac{\sqrt{\pi} c (k+1)^{-1/2}}{2^{k-1}k!}, \nonumber
\end{equation}

where $c$ is a constant.

\end{proposition}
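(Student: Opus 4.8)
The plan is to unroll the recursion (\ref{GHCoeffEWMA}) so that $\hat{a}^{(n)}_k$ appears as a weighted average of i.i.d.\ summands, and then to split the resulting bound into a ``geometric weights'' factor and a ``single-term second moment'' factor.

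First I would write, for a sample $\mathbf{x}_0,\dots,\mathbf{x}_n$, the closed form
\[
\hat{a}^{(n)}_k = \lambda\sum_{i=1}^{n}(1-\lambda)^{n-i}\,\alpha_{k} Z(\mathbf{x}_{i})H_{k}(\mathbf{x}_{i}) + (1-\lambda)^{n}\,\alpha_{k} Z(\mathbf{x}_{0})H_{k}(\mathbf{x}_{0}),
\]
and set $Y_i = \alpha_{k} Z(\mathbf{x}_{i})H_{k}(\mathbf{x}_{i})$, which are i.i.d.\ with $E[Y_i]=a_k$ by (\ref{GHExactCoeff}). The weights $w_0=(1-\lambda)^{n}$ and $w_i=\lambda(1-\lambda)^{n-i}$ (for $1\le i\le n$) sum to one via a one-line geometric series, so $\hat{a}^{(n)}_k$ is unbiased and hence $\mbox{MSE}(\hat{a}_k)=\mbox{Var}(\hat{a}^{(n)}_k)=\big(\sum_{i=0}^{n}w_i^2\big)\,\mbox{Var}(Y_0)$ by independence. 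Evaluating $\sum_{i=0}^n w_i^2 = \lambda^2\sum_{j=0}^{n-1}(1-\lambda)^{2j}+(1-\lambda)^{2n}$ as a geometric sum gives exactly $\tfrac{\lambda}{2-\lambda}\big[1-(1-\lambda)^{2n}\big]+(1-\lambda)^{2n}$, which is the bracketed factor in the statement.

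It then remains to bound $\mbox{Var}(Y_0)\le E[Y_0^2]=\alpha_k^2\,E\big[Z(\mathbf{x}_{0})^2 H_k(\mathbf{x}_{0})^2\big]$. Using $Z(x)=\tfrac{1}{\sqrt{2\pi}}e^{-x^2/2}$ and the normalised Hermite functions $h_k(x)=(2^k k!\sqrt\pi)^{-1/2}e^{-x^2/2}H_k(x)$, a short algebraic identity gives $\alpha_k^2\, Z(x)^2 H_k(x)^2 = \alpha_k\, h_k(x)^2$, so that $E[Y_0^2]=\alpha_k\int_{-\infty}^{\infty} h_k(x)^2 f(x)\,dx$. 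Splitting this integral at $|x|=a$ for a fixed $a>0$, and applying (\ref{hermiteInequal1}) on $\{|x|\le a\}$ and (\ref{hermiteInequal2}), in the form $|h_k(x)|\le d_a|x|^{1/3}(k+1)^{-1/4}$, on $\{|x|\ge a\}$, together with $\int f=1$ and the hypothesis $E|X|^{2/3}<\infty$, yields $\int h_k^2 f\le (k+1)^{-1/2}\big(c_a^2 + d_a^2\, E|X|^{2/3}\big)=:c\,(k+1)^{-1/2}$. Since $\alpha_k=\sqrt\pi/(2^{k-1}k!)$, this is $E[Y_0^2]\le \sqrt\pi\, c\,(k+1)^{-1/2}/(2^{k-1}k!)$, and multiplying by the geometric-weights factor completes the argument.

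The main obstacle is the uniform-in-$k$ control of $E[Y_0^2]$: one must tame the polynomial growth of $H_k$ against the Gaussian weight, which is precisely where the classical Hermite bounds (\ref{hermiteInequal1})--(\ref{hermiteInequal2}) and the fractional moment condition $E|X|^{2/3}<\infty$ enter. Everything else --- the unrolling, the unbiasedness, and the two geometric sums --- is routine bookkeeping. This is essentially the integrated-variance estimate used for the ordinary Hermite series estimator in \cite{convergence2}, adapted to the exponential weights.
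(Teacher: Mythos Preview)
Your proposal is correct and follows essentially the same route as the paper: unroll (\ref{GHCoeffEWMA}) into a convex combination of the i.i.d.\ terms $Y_i=\alpha_k Z(\mathbf{x}_i)H_k(\mathbf{x}_i)$, conclude unbiasedness, sum the squared weights to obtain the bracketed factor, and then bound $\mbox{Var}(Y_0)\le E[Y_0^2]\le \sqrt{\pi}\,c(k+1)^{-1/2}/(2^{k-1}k!)$. The only difference is cosmetic: the paper simply invokes the inequality $\sigma_X^2(k)\le \sqrt{\pi}\,c(k+1)^{-1/2}/(2^{k-1}k!)$ by reference to \cite{convergence2}, whereas you spell out the underlying argument (the identity $\alpha_k^2 Z^2 H_k^2=\alpha_k h_k^2$ together with the split at $|x|=a$ using (\ref{hermiteInequal1})--(\ref{hermiteInequal2}) and $E|X|^{2/3}<\infty$), which is exactly the computation behind that citation.
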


\begin{proof}

The truncated EWGH expansion of  $f(x)$ is given by:

$$\hat{f}_{N}(x) = \sum_{k=0}^{N} \hat{a}_k H_{k} (x) Z(x),$$ 

where the coefficient estimates are given by (\ref{GHCoeffEWMA}). In what follows we assume an i.i.d sequence of $n+1$ observations has been drawn from $f(x)$ i.e. $\mathbf{x}_{i} \sim f(x)$.\\

The estimator (\ref{GHCoeffEWMA}) can be equivalently written as:

$$\hat{a}_k = \lambda \sum_{j=0}^{n-1} (1-\lambda)^{j} \left[\alpha_{k} Z(\mathbf{x}_{n-j}) H_{k} (\mathbf{x}_{n-j})\right] + (1-\lambda)^{n} \left[\alpha_{k} Z(\mathbf{x}_{0}) H_{k} (\mathbf{x}_{0})\right].$$

The expected value of this estimator, $E\left(\hat{a}_k\right) = a_k$ and thus the estimator is unbiased.\\

The variance, $\mbox{Var}(\hat{a}_k) = E\left[ (\hat{a}_k - a_k)^{2}\right]$ of the estimator is:

\begin{eqnarray}
	\mbox{Var}(\hat{a}_k) &=&	\lambda^{2} \sum_{j=0}^{n-1} (1-\lambda)^{2j} \mbox{Var}\left(\left[\alpha_{k} Z(\mathbf{x}_{n-j}) H_{k} (\mathbf{x}_{n-j})\right]\right) \nonumber\\
	&+& (1-\lambda)^{2n} \mbox{Var}\left(\left[\alpha_{k} Z(\mathbf{x}_{0}) H_{k} (\mathbf{x}_{0})\right]\right)\nonumber\\
	&=&\left[ \frac{\lambda}{2-\lambda} \left[1-(1-\lambda)^{2n} \right] + (1-\lambda)^{2n} \right]\sigma_{X}^{2}(k),
\end{eqnarray}

where $\sigma_{X}^{2}(k) = \mbox{Var}\left[ \alpha_{k} Z(x) H_{k} (x)  \right] = E \left[ \frac{\pi}{\left[2^{k-1}k!\right]^2} \left[Z(x)\right]^2 \left[H_{k}(x)\right]^2 \right] - a_k^{2}$. Note that we have exploited the independence of the observations $\mathbf{x}_{i}$. \\

Now, we can obtain an upper bound on $\sigma_{X}^{2}(k)$ using the properties of Hermite polynomials (following from (\ref{hermiteInequal1}) and (\ref{hermiteInequal2}), see \cite{convergence2}):

$$\sigma_{X}^{2}(k) \leq \frac{\sqrt{\pi} c (k+1)^{-1/2}}{2^{k-1}k!}.$$

This yields the following bound on the MSE of $\hat{a}_k$ (since $\hat{a}_k$ is unbiased, the bound on the MSE of $\hat{a}_k$ is equal to the bound on the variance):

\begin{eqnarray}
	\mbox{MSE}(\hat{a}_k) &=& E\left[ (\hat{a}_k - a_k)^{2}\right]\nonumber\\
	&\leq& \left[ \frac{\lambda}{2-\lambda} \left[1-(1-\lambda)^{2n} \right] + (1-\lambda)^{2n} \right] \frac{\sqrt{\pi} c (k+1)^{-1/2}}{2^{k-1}k!}.
\end{eqnarray}

\end{proof}

\begin{theorem} \label{theoremEWGHMISE1}

Let the true probability density function $f(x) \in L_2$ and $E|X|^{\frac{2}{3}}<\infty$. If $r \geq 1$ derivatives of $f(x)$ exist and $(x-\frac{d}{dx})^r f(x) \in L_{2}$ then the MISE for the EWGH expansion in the i.i.d. case behaves as follows for $N$ sufficiently large and $n \to \infty$:

\begin{equation}
	\mbox{MISE}\,(\hat{f}_{N})\, = O(N^{1/2}) \left[ \frac{\lambda}{2-\lambda} \right] + O(N^{-r}).  \nonumber
\end{equation}

\end{theorem}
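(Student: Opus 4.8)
The plan is to work directly from the exact MISE identity (\ref{MISE}), which via Parseval's identity splits the error into an integrated variance term $\sum_{k=0}^{N}\frac{1}{\alpha_k}\mbox{MSE}(\hat{a}_k)$ and a truncation term $\sum_{k=N+1}^{\infty}\frac{1}{\alpha_k}a_k^{2}$, where $\frac{1}{\alpha_k}=\frac{2^{k-1}k!}{\sqrt{\pi}}$. Since the EWGH coefficient estimator is unbiased — $E\hat{a}_k=a_k$, as established in the proof of Proposition \ref{MSECoeffBound} — there is no cross term, and, crucially, the truncation term depends only on $f$ and not on the estimation scheme: it coincides exactly with the integrated squared bias of the ordinary Gauss-Hermite expansion.

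First I would bound the variance term. Inserting the bound of Proposition \ref{MSECoeffBound}, namely $\mbox{MSE}(\hat{a}_k)\le B_n\,\alpha_k\, c\,(k+1)^{-1/2}$ with $B_n=\frac{\lambda}{2-\lambda}\left[1-(1-\lambda)^{2n}\right]+(1-\lambda)^{2n}$, gives $\frac{1}{\alpha_k}\mbox{MSE}(\hat{a}_k)\le B_n\, c\,(k+1)^{-1/2}$, so that $\sum_{k=0}^{N}\frac{1}{\alpha_k}\mbox{MSE}(\hat{a}_k)\le B_n\, c\sum_{k=0}^{N}(k+1)^{-1/2}$. Comparing the last sum with $\int_{0}^{N}(x+1)^{-1/2}\,dx$ shows it is $O(N^{1/2})$. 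Letting $n\to\infty$ with $0<\lambda\le 1$ forces $(1-\lambda)^{2n}\to 0$, hence $B_n\to\frac{\lambda}{2-\lambda}$, and the variance term is $O(N^{1/2})\left[\frac{\lambda}{2-\lambda}\right]$.

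Next I would handle the truncation term $\sum_{k=N+1}^{\infty}\frac{1}{\alpha_k}a_k^{2}=\sum_{k=N+1}^{\infty}\tilde{a}_k^{2}$. Under the hypotheses that $r\ge 1$ derivatives of $f(x)$ exist and $(x-\frac{d}{dx})^{r}f(x)\in L_{2}$, this tail is $O(N^{-r})$ for $N$ sufficiently large; this is precisely the bias rate for Hermite series estimators established in \cite{convergence2}, and it applies verbatim here because the truncation term is identical to the one in the i.i.d. Gauss-Hermite case. Adding the two bounds then yields $\mbox{MISE}(\hat{f}_{N})=O(N^{1/2})\left[\frac{\lambda}{2-\lambda}\right]+O(N^{-r})$.

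The argument is essentially bookkeeping once Proposition \ref{MSECoeffBound} and the cited bias rate are in hand, so no step is genuinely deep. The one point that needs care is the order of limits: the $O(\cdot)$ statements are in $N$ with $\lambda$ kept explicit, so one must pass to the limit $n\to\infty$ in the variance coefficient $B_n$ before reading off the $N$-dependence, and the qualifier ``$N$ sufficiently large'' is needed because both the $(k+1)^{-1/2}$ bound on $\sigma_{X}^{2}(k)$ (from \cite{convergence2}) and the $O(N^{-r})$ truncation rate are asymptotic in the summation index. A minor additional check is that the interchange of expectation and summation in (\ref{MISE}) is legitimate over the finite range $k\le N$, which is immediate.
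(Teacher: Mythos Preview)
Your proof is correct and follows essentially the same route as the paper: decompose the MISE via (\ref{MISE}), bound the integrated variance term using Proposition \ref{MSECoeffBound} together with $\sum_{k=0}^{N}(k+1)^{-1/2}=O(N^{1/2})$, let $n\to\infty$ so the coefficient collapses to $\frac{\lambda}{2-\lambda}$, and invoke the $O(N^{-r})$ tail bound from \cite{convergence1,convergence2} for the truncation term. Your additional remarks on unbiasedness, the integral comparison, and the order of limits only make the argument more explicit than the paper's version.
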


\begin{proof}

If $r \geq 1$ derivatives of $f(x)$ exist and $(x-\frac{d}{dx})^r f(x) \in L_{2}$ then\\ $\sum_{k=N+1}^{\infty} \frac{2^{k-1}k!}{\sqrt{\pi}} a_{k}^{2} = O(N^{-r})$ (see \cite{convergence1},\cite{convergence2}). Combining this fact and the bound on $\mbox{MSE}(\hat{a}_k)$ (proposition \ref{MSECoeffBound}) with the expression for the MISE (\ref{MISE}) we obtain the following:

\begin{eqnarray}
	\mbox{MISE}\,(\hat{f}_{N}) &=& \sum_{k=0}^{N} \frac{2^{k-1}k!}{\sqrt{\pi}} E\left[(\hat{a}_k - a_k)^{2}\right] + \sum_{k=N+1}^{\infty} \frac{2^{k-1}k!}{\sqrt{\pi}} a_{k}^{2}\nonumber \\
	&\leq& \left[ \frac{\lambda}{2-\lambda} \left[1-(1-\lambda)^{2n} \right] + (1-\lambda)^{2n} \right]  c \sum_{k=0}^{N}  (k+1)^{-1/2}\nonumber\\
	&+& O(N^{-r})\nonumber \label{MISECalc} 
\end{eqnarray}

For $N$ sufficiently large and $n \to \infty$ we have:

$$\mbox{MISE}\,(\hat{f}_{N})\, = O(N^{1/2}) \left[ \frac{\lambda}{2-\lambda} \right] + O(N^{-r}).$$

\end{proof}

\subsection{MISE of the Exponentially Weighted Gauss-Hermite Expansion for Non-identically Distributed Data} \label{theoryEWGHdep}

In this section we extend the results derived for i.i.d data to non-identically distributed, independent data. We consider the case where we observe $s+1$ observations from a probability distribution $f_{1} \in L_{2}$ followed by a further $t$ observations from a second distribution $f_{2} \in L_{2}$ i.e. we assume an independent sequence of $s+1$ observations $\mathbf{x}_{0}, \dots, \mathbf{x}_{s} \sim f_{1}(x)$ followed by an independent sequence of $t$ observations, $\mathbf{x}_{s+1}, \dots, \mathbf{x}_{t+s} \sim f_{2}(x)$. We consider this a fundamental example of non-identically distributed data. We evaluate and bound the MISE of $\hat{f}_N(x)$ compared to $f_{2}(x)$ at observation $t+s$. We denote the true Gauss-Hermite coefficients of $f_{1}(x)$ as $a_k^{(1)}$ and the true Gauss-Hermite coefficients of $f_{2}(x)$ as $a_k^{(2)}$.\\

\begin{proposition}\label{MSECoeffBoundNonIndep}
The MSE of the coefficients (\ref{GHCoeffEWMA}) compared to $a_k^{(2)}$ have the following upper bound after $s+1$ independent observations from  $f_{1} \in L_{2}$ followed by $t$ independent observations from $f_{2} \in L_{2}$ provided $E|X|^{\frac{2}{3}}<\infty$ for both distributions:

\begin{align*}
	&\left[E\left(\hat{a}_k - a_k^{(2)}\right)^{2}\right]\\ \nonumber 
	&\leq \frac{c \sqrt{\pi} (k+1)^{-1/2}}{2^{k-1}k!}\left[4(1-\lambda)^{2t} + \frac{\lambda}{2-\lambda} \left[1-(1-\lambda)^{2(s+t)} \right] + (1-\lambda)^{2(s+t)} \right].\nonumber
\end{align*}

\end{proposition}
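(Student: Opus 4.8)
The plan is to follow the template of Proposition \ref{MSECoeffBound}, splitting the weighted sum (\ref{GHCoeffEWMA}) into the contributions from the two regimes and using a bias--variance decomposition. First I would unroll the recursion after the $s+1+t$ observations $\mathbf{x}_0,\dots,\mathbf{x}_{s+t}$ into the closed form
$$\hat{a}_k = \lambda \sum_{j=0}^{s+t-1} (1-\lambda)^{j} \left[\alpha_{k} Z(\mathbf{x}_{s+t-j}) H_{k} (\mathbf{x}_{s+t-j})\right] + (1-\lambda)^{s+t} \left[\alpha_{k} Z(\mathbf{x}_{0}) H_{k} (\mathbf{x}_{0})\right],$$
noting that the terms $j=0,\dots,t-1$ involve $\mathbf{x}_{s+1},\dots,\mathbf{x}_{s+t}\sim f_2$ while all remaining terms (including the initial one) involve observations from $f_1$. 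Taking expectations, the $f_2$-weights sum to $1-(1-\lambda)^{t}$ and the $f_1$-weights sum to the complementary $(1-\lambda)^{t}$, so $E(\hat{a}_k)=\left[1-(1-\lambda)^{t}\right]a_k^{(2)}+(1-\lambda)^{t}a_k^{(1)}$ and hence the bias is $E(\hat{a}_k)-a_k^{(2)}=(1-\lambda)^{t}\left(a_k^{(1)}-a_k^{(2)}\right)$.

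Next I would apply $E\left(\hat{a}_k-a_k^{(2)}\right)^{2}=\left[E(\hat{a}_k)-a_k^{(2)}\right]^{2}+\mbox{Var}(\hat{a}_k)$. For the variance, independence of the $\mathbf{x}_i$ gives
$$\mbox{Var}(\hat{a}_k)=\lambda^{2}\sum_{j=0}^{t-1}(1-\lambda)^{2j}\sigma_{2}^{2}(k)+\left[\lambda^{2}\sum_{j=t}^{s+t-1}(1-\lambda)^{2j}+(1-\lambda)^{2(s+t)}\right]\sigma_{1}^{2}(k),$$
where $\sigma_{m}^{2}(k)=\mbox{Var}\left[\alpha_{k}Z(X)H_{k}(X)\right]$ for $X\sim f_m$. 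Exactly as in Proposition \ref{MSECoeffBound}, the Hermite inequalities (\ref{hermiteInequal1})--(\ref{hermiteInequal2}) together with $E|X|^{2/3}<\infty$ for each distribution bound both $\sigma_1^2(k)$ and $\sigma_2^2(k)$ by $\frac{\sqrt{\pi}c(k+1)^{-1/2}}{2^{k-1}k!}$ for a suitable common constant $c$; replacing each variance by this bound and collapsing $\lambda^{2}\sum_{j=0}^{s+t-1}(1-\lambda)^{2j}=\frac{\lambda}{2-\lambda}\left[1-(1-\lambda)^{2(s+t)}\right]$ yields the variance contribution $\frac{\sqrt{\pi}c(k+1)^{-1/2}}{2^{k-1}k!}\left[\frac{\lambda}{2-\lambda}\left[1-(1-\lambda)^{2(s+t)}\right]+(1-\lambda)^{2(s+t)}\right]$.

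For the squared bias I would write $\left(a_k^{(1)}-a_k^{(2)}\right)^{2}\leq 2\left(a_k^{(1)}\right)^{2}+2\left(a_k^{(2)}\right)^{2}$ and bound each $\left(a_k^{(m)}\right)^{2}$ separately. Since $a_k^{(m)}=E\left[\alpha_{k}Z(X)H_{k}(X)\right]$ with $X\sim f_m$, Jensen's inequality gives $\left(a_k^{(m)}\right)^{2}\leq E\left[\alpha_{k}^{2}Z(X)^{2}H_{k}(X)^{2}\right]$, and applying the Hermite estimates to $e^{-x^{2}}H_k(x)^{2}$ (rather than to the centred quantity) bounds this by the same $\frac{\sqrt{\pi}c(k+1)^{-1/2}}{2^{k-1}k!}$, so the squared bias is at most $4(1-\lambda)^{2t}\frac{\sqrt{\pi}c(k+1)^{-1/2}}{2^{k-1}k!}$. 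Adding the two contributions gives the stated bound. The only delicate point is the book-keeping of the constant $c$: one must choose it large enough that a single $c$ simultaneously controls $\sigma_1^2(k)$, $\sigma_2^2(k)$ and both $\left(a_k^{(m)}\right)^{2}$; once that is arranged the remainder is the geometric-series algebra already done in Proposition \ref{MSECoeffBound}.
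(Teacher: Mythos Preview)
Your proposal is correct and follows essentially the same route as the paper: unrolling the recursion, computing the bias $(1-\lambda)^{t}(a_k^{(1)}-a_k^{(2)})$, decomposing MSE into squared bias plus variance, and bounding each piece via the Hermite inequalities under $E|X|^{2/3}<\infty$. In fact you are more explicit than the paper about the squared-bias step---the paper simply says ``using the properties of Hermite polynomials'' to obtain the $4(1-\lambda)^{2t}$ term, whereas you spell out the $2a^{2}+2b^{2}$ split together with Jensen, which is exactly what is needed.
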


\begin{proof}

The expected value of the exponentially weighted estimator for the\\ Gauss-Hermite coefficients is:

\begin{eqnarray}
	E\left(\hat{a}_k\right) &=& \lambda \sum_{j=0}^{s+t-1} (1-\lambda)^{j} E\left(\left[\alpha_{k} Z(\mathbf{x}_{s+t-j}) H_{k} (\mathbf{x}_{s+t-j})\right]\right) \nonumber\\
	&+& (1-\lambda)^{s+t} E\left(\left[\alpha_{k} Z(\mathbf{x}_{0}) H_{k} (\mathbf{x}_{0})\right]\right) \nonumber\\
	&=& \lambda a_k^{(1)} \sum_{j=t}^{s+t-1} (1-\lambda)^{j} + \lambda a_k^{(2)} \sum_{j=0}^{t-1} (1-\lambda)^{j}  + (1-\lambda)^{s+t} a_k^{(1)} \nonumber\\
	&=& a_k^{(2)} + (1-  \lambda)^{t} [a_k^{(1)}-a_k^{(2)}].\nonumber\\
\end{eqnarray}

Thus the squared bias of the estimator compared to the true Gauss-Hermite coefficient $a_{k}^{(2)}$ is:

$$\left[E\left(\hat{a}_k\right) - a_k^{(2)}\right]^2 = (1-  \lambda)^{2t} [a_k^{(1)}-a_k^{(2)}]^2.$$

Similarly, the variance $\mbox{Var}(\hat{a}_k) = E\left[ (\hat{a}_k - E\left(\hat{a}_k\right))^{2}\right]$ of the estimator is:

\begin{align}
	\mbox{Var}(\hat{a}_k) &=\lambda^{2}[\sigma^{(1)}_{X}(k)]^{2}\sum_{j=t}^{s+t-1}(1\!-\!\lambda)^{2j}\!+\!\lambda^{2}[\sigma^{(2)}_{X}(k)]^{2}\sum_{j=0}^{t-1}(1\!-\!\lambda)^{2j}\! \nonumber \\
	&+\!(1\!-\!\lambda)^{2(s+t)}[\sigma^{(1)}_{X}(k)]^{2}\nonumber\\
	&= [\sigma^{(1)}_{X}(k)]^{2}  \left[ \frac{\lambda}{2-\lambda} (1-\lambda)^{2t} + \left[ 1-\frac{\lambda}{2-\lambda}\right](1-\lambda)^{2(s+t)}\right]  \nonumber\\
	&+[\sigma^{(2)}_{X}(k)]^{2} \frac{\lambda}{2-\lambda} \left[ 1-(1-\lambda)^{2t}\right],
\end{align}

where  $[\sigma^{(1)}_{X}(k)]^{2} = \mbox{Var}\left[ \alpha_{k} Z(x) H_{k} (x)  \right], x \sim f_{1}(x)$ and \\
$[\sigma^{(2)}_{X}(k)]^{2} = \mbox{Var}\left[ \alpha_{k} Z(x) H_{k} (x)  \right], x \sim f_{2}(x)$. \\

Thus the mean squared error of $\hat{a}_k$ compared to $a_k^{(2)} $ is:

\begin{eqnarray}
	\left[E\left(\hat{a}_k - a_k^{(2)}\right)^{2}\right] &=& \left[E\left(\hat{a}_k\right) - a_k^{(2)}\right]^2 + \mbox{Var}(\hat{a}_k) \nonumber\\
	&=& (1-  \lambda)^{2t} [a_k^{(1)}-a_k^{(2)}]^2\nonumber\\
	&+& [\sigma^{(1)}_{X}(k)]^{2}  \left[ \frac{\lambda}{2-\lambda} (1-\lambda)^{2t} + \left[ 1-\frac{\lambda}{2-\lambda}\right](1-\lambda)^{2(s+t)}\right]  \nonumber\\
	&+& [\sigma^{(2)}_{X}(k)]^{2} \frac{\lambda}{2-\lambda} \left[ 1-(1-\lambda)^{2t}\right].
\end{eqnarray}

The MSE therefore depends on the difference between the true Gauss-Hermite coefficients of $f_{1}(x)$  and $f_{2}(x)$ and the number of observations since the switch between the distributions (along with the value of $\lambda$). We can bound the MSE above using the properties of Hermite polynomials:

\begin{align}
	\left[E\left(\hat{a}_k - a_k^{(2)}\right)^{2}\right] &\leq \frac{c \sqrt{\pi} (k+1)^{-1/2}}{2^{k-1}k!} \times \nonumber\\
	&\times \left[4(1-\lambda)^{2t} + \frac{\lambda}{2-\lambda} \left[1-(1-\lambda)^{2(s+t)} \right] + (1-\lambda)^{2(s+t)} \right].\nonumber\\
	\label{MSENon1}
\end{align}

\end{proof}

\begin{theorem} \label{theoremEWGHMISE2}

Given $s+1$ independent observations from  $f_{1} \in L_{2}$ followed by $t$ independent observations from $f_{2} \in L_{2}$, if $r \geq 1$ derivatives of $f_{2}(x)$ exist and $(x-\frac{d}{dx})^r f_{2}(x) \in L_{2}$ and $E|X|^{\frac{2}{3}}<\infty$ for both distributions then the MISE for the EWGH expansion behaves as follows for $N$ sufficiently large:

\begin{align}
	\mbox{MISE}\,(\hat{f}_{N})\, &= O(N^{1/2}) \left[4(1-\lambda)^{2t} + \frac{\lambda}{2-\lambda} \left[1-(1-\lambda)^{2(s+t)} \right] + (1-\lambda)^{2(s+t)} \right]\nonumber\\
	&+ O(N^{-r}). \nonumber
\end{align}

\end{theorem}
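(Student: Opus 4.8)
The plan is to mirror the proof of Theorem~\ref{theoremEWGHMISE1} almost verbatim, substituting the non-identically-distributed coefficient bound from Proposition~\ref{MSECoeffBoundNonIndep} for the i.i.d.\ bound of Proposition~\ref{MSECoeffBound}. First I would write down the MISE decomposition~(\ref{MISE}) with $f(x)$ replaced by $f_2(x)$: namely $\mathrm{MISE}(\hat{f}_N) = \sum_{k=0}^{N} \frac{2^{k-1}k!}{\sqrt{\pi}} E[(\hat{a}_k - a_k^{(2)})^2] + \sum_{k=N+1}^{\infty} \frac{2^{k-1}k!}{\sqrt{\pi}} (a_k^{(2)})^2$, where the split into an ``integrated variance-like'' term and an ``integrated squared bias from truncation'' term is exactly as before, since $\hat{f}_N$ is a truncated Gauss--Hermite expansion and $f_2 \in L_2$.

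Next I would bound the two pieces separately. For the first sum, plug in the bound from Proposition~\ref{MSECoeffBoundNonIndep}: the factor $\frac{c\sqrt{\pi}(k+1)^{-1/2}}{2^{k-1}k!}$ cancels against $\frac{2^{k-1}k!}{\sqrt{\pi}}$, leaving $c\,[4(1-\lambda)^{2t} + \frac{\lambda}{2-\lambda}(1-(1-\lambda)^{2(s+t)}) + (1-\lambda)^{2(s+t)}] \sum_{k=0}^N (k+1)^{-1/2}$. Since $\sum_{k=0}^N (k+1)^{-1/2} = O(N^{1/2})$, this first term is $O(N^{1/2})\,[4(1-\lambda)^{2t} + \frac{\lambda}{2-\lambda}(1-(1-\lambda)^{2(s+t)}) + (1-\lambda)^{2(s+t)}]$. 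For the second sum, I would invoke the standard fact (as cited from \cite{convergence1},\cite{convergence2} in the proof of Theorem~\ref{theoremEWGHMISE1}) that if $r\ge 1$ derivatives of $f_2(x)$ exist and $(x-\tfrac{d}{dx})^r f_2(x)\in L_2$, then $\sum_{k=N+1}^{\infty} \frac{2^{k-1}k!}{\sqrt{\pi}} (a_k^{(2)})^2 = O(N^{-r})$; this is purely a property of $f_2$ and its true coefficients $a_k^{(2)}$, so the change point plays no role here.

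Adding the two bounds gives the claimed expression for $N$ sufficiently large. There is essentially no hard step: the only things to check are that Proposition~\ref{MSECoeffBoundNonIndep} is stated with the MSE taken relative to $a_k^{(2)}$ (it is), that the constant $c$ can be absorbed into the $O(\cdot)$ notation uniformly in $k$, and that ``$N$ sufficiently large'' is exactly the regime in which $\sum_{k=0}^N (k+1)^{-1/2} \asymp N^{1/2}$ and the tail-sum asymptotic $O(N^{-r})$ both hold. The main (minor) subtlety to state carefully is that, unlike the i.i.d.\ case, here we do \emph{not} take $n\to\infty$; the bound holds for finite $s,t$, which is precisely what is needed downstream in the change-point version of Theorem~\ref{EWGHCDFIID}.
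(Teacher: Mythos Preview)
Your proposal is correct and follows essentially the same approach as the paper: invoke the MISE decomposition~(\ref{MISE}), bound the first sum via Proposition~\ref{MSECoeffBoundNonIndep} (yielding the $O(N^{1/2})$ factor from $\sum_{k=0}^{N}(k+1)^{-1/2}$), and bound the tail sum by $O(N^{-r})$ using the cited result on $f_2$. If anything, your write-up is more explicit than the paper's own proof, which simply states that combining Proposition~\ref{MSECoeffBoundNonIndep}, the tail bound, and~(\ref{MISE}) gives the result.
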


\begin{proof}

If $r$ derivatives of $f(x)$ exist and $(x-\frac{d}{dx})^r f(x) \in L_{2}$ then\\ $\sum_{k=N+1}^{\infty} \frac{2^{k-1}k!}{\sqrt{\pi}} a_{k}^{2} = O(N^{-r})$ (see \cite{convergence1},\cite{convergence2}). Combining this fact with the bound on the MSE in proposition \ref{MSECoeffBoundNonIndep} and the expression for the MISE (\ref{MISE}), we obtain:

\begin{align}
\mbox{MISE}\,(\hat{f}_{N})\, &=O(N^{1/2}) \left[4(1-\lambda)^{2t} + \frac{\lambda}{2-\lambda} \left[1-(1-\lambda)^{2(s+t)} \right] + (1-\lambda)^{2(s+t)} \right]\nonumber\\
&+ O(N^{-r}).\nonumber
\end{align}

\end{proof}

\bibliography{articleRefs}
\bibliographystyle{imsart-number}

\end{document}